\definecolor{ForestGreen}{rgb}{0.1333,0.5451,0.1333}
\crefname{equation}{}{}
\newcommand\remove[1]{}
\newtheorem{lemma}{Lemma}[section]
\newtheorem{theorem}{Theorem}
\newtheorem*{lemma*}{Lemma}
\newtheorem{corollary}[lemma]{Corollary}
\newtheorem*{corollary*}{Corollary}
\newtheorem{claim}[lemma]{Claim}
\newtheorem{conj}[lemma]{Conjecture}
\theoremstyle{definition}
\newtheorem*{theorem*}{Theorem}
\newtheorem{definition}[lemma]{Definition}
\newtheorem*{rem*}{Remark}
\newcommand{\eps}{\varepsilon}
\newcommand{\R}{\mathbb{R}}
\renewcommand{\O}{\widetilde{O}}
\crefname{algocf}{Algorithm}{Algorithms}
\crefname{claim}{Claim}{Claims}
\renewcommand{\bar}{\overline}
\renewcommand{\hat}{\widehat}
\newcommand{\cA}{\mathcal{A}}
\newcommand{\cT}{\mathcal{T}}
\renewcommand{\bar}{\overline}
\renewcommand{\deg}{\mathsf{deg}}
\newcommand{\cE}{\mathcal{E}}
\newcommand{\OMv}{\mathsf{OMv}}
\newcommand{\uMv}{\mathsf{uMv}}
\begin{document}

\title{On Approximate Fully-Dynamic Matching and \\ Online Matrix-Vector Multiplication}

\author{Yang P. Liu \\ Institute for Advanced Study \\ yangpliu@ias.edu}

\clearpage\maketitle

\begin{abstract}
We study connections between the problem of fully dynamic $(1-\epsilon)$-approximate maximum bipartite matching, and the dual $(1+\epsilon)$-approximate vertex cover problem, with the online matrix-vector ($\OMv$) conjecture which has recently been used in several fine-grained hardness reductions. We prove that there is an \emph{online} algorithm that maintains a $(1+\epsilon)$-approximate vertex cover in amortized $n^{1-c}\epsilon^{-C}$ time for constants $c, C > 0$ for fully dynamic updates \emph{if and only if} the $\OMv$ conjecture is false. Similarly, we prove that there is an online algorithm that maintains a $(1-\epsilon)$-approximate maximum matching in amortized $n^{1-c}\epsilon^{-C}$ time if and only if there is a nontrivial algorithm for another dynamic problem, which we call dynamic approximate $\OMv$, that has seemingly no matching structure. This provides some evidence against achieving amortized sublinear update times for approximate fully dynamic matching and vertex cover.

Leveraging these connections, we obtain faster \emph{algorithms} for approximate fully dynamic matching in both the online and offline settings.
\begin{itemize}
    \item We give a randomized algorithm that with high probability maintains a $(1-\epsilon)$-approximate bipartite matching and $(1+\epsilon)$-approximate vertex cover in fully dynamic graphs, in amortized $O(\epsilon^{-O(1)} \frac{n}{2^{\Omega(\sqrt{\log n})}})$ update time. This improves over the previous fastest runtimes of $O(n/(\log^* n)^{\Omega(1)})$ due to Assadi-Behnezhad-Khanna-Li [STOC 2023], and $O_{\epsilon}(n^{1-\Omega_{\epsilon}(1)})$ due to Bhattacharya-Kiss-Saranurak [FOCS 2023] for small $\epsilon$. Our algorithm leverages fast algorithms for $\mathsf{OMv}$ due to Larsen and Williams [SODA 2017].
    \item We give a randomized offline algorithm for $(1-\epsilon)$-approximate maximum matching with amortized runtime $O(n^{.58}\epsilon^{-O(1)})$ by using fast matrix multiplication, significantly improving over the runtimes achieved via online algorithms mentioned above. This mirrors the situation with $\mathsf{OMv}$, where an offline algorithm exactly corresponds to fast matrix multiplication. We also give an offline algorithm that maintains a $(1+\epsilon)$-approximate vertex cover in amortized $O(n^{.723}\epsilon^{-O(1)})$ time.
\end{itemize}
\end{abstract}

\newpage

\setcounter{tocdepth}{2}
\tableofcontents

\normalsize
\pagebreak

\section{Introduction}
\label{sec:intro}

Let $G = (V = L \cup R, E)$ denote a bipartite graph undergoing edge updates in the form of edge insertions and deletions. Recently, the problem of maintaining a $(1-\eps)$-approximate bipartite matching of $G$ has seen significant attention. For this dynamic problem, we do not have a good understanding of the optimal complexity. In the more restrictive partially dynamic setting where the graph undergoes only edge insertions (incremental) or only edge deletions (decremental), algorithms with $\O(\eps^{-O(1)})$\footnote{Throughout, we use $\O(\cdot), \tilde{\Omega}(\cdot)$ to hide polylogarithmic factors in $n$ and $\eps^{-1}$. We use $O_{\eps}(\cdot), \Omega_{\eps}(\cdot)$ to denote that the implied constants depend on $\eps$.} amortized update time are known \cite{Gupta14,JJST22,BKS23lp}, even for weighted generalizations. Additionally, under popular hardness assumptions such as the combinatorial Boolean matrix multiplication conjecture or online matrix-vector ($\OMv$) conjecture, it is known that exact versions, even for partially dynamic unweighted graphs, require amortized $\Omega(n^{1-o(1)})$ time.

Thus, recently attention has turned to the complexity of $(1-\eps)$-approximate matching
in the \emph{fully dynamic} setting, where the graph can undergo both edge insertions and deletions. In this setting, a ``dream'' algorithm would maintain a matching in amortized $O((\eps^{-1} \log n)^{O(1)})$ time per update. The existence of such an algorithm has been raised as an open question previously \cite{ARW17}. However, to date it is not even known whether an algorithm with amortized time $O_{\eps}(n^{1-c})$ exists for some absolute constant $c < 1$ independent of $\eps$. The best known bounds are \cite{GP13}, with update time $O(\eps^{-2}\sqrt{m})$ for $m$-edge graphs. For the case of dense graphs (which has been seen to be the difficult case), this runtime is $O(\eps^{-2}n)$. Recently, using Szemeredi's regularity lemma, \cite{ABKL23} gave an algorithm with amortized update time $O(n/(\log^* n)^c)$ for some $c > 0$ for $\eps = (\log^* n)^{-c}$. However, it is unlikely that regularity-based methods can save more than $2^{O(\sqrt{\log n})}$ factors due to Behrend's construction \cite{B46}. To the author's knowledge, these remain the best known bounds for maintaining the matching as $\eps \to 0$.
Very recently, an algorithm that maintained the \emph{size} only of the matching in amortized $O_{\eps}(m^{1/2-\Omega_{\eps}(1)}) = O_{\eps}(n^{1-\Omega_{\eps}(1)})$ time was given by \cite{BKS23b}.

Given the lack of upper bounds, it makes sense to ask whether one can prove a lower bound. Because of the difficulty of proving unconditional lower bounds, there is a trend of proving lower bounds by reduction to certain popular hardness assumptions. A prominent assumption is the online matrix-vector $(\OMv)$-conjecture, introduced by \cite{HKNS15}. This conjecture considers the following problem: an algorithm is given a Boolean matrix $M \in \{0,1\}^{n \times n}$, which is preprocessed in polynomial time. It is then given Boolean vectors $v_1, \dots, v_n$ one at a time. Upon receiving $v_i$, the algorithm must output $Mv_i$ before receiving $v_{i+1}$. The conjecture is that any algorithm which correctly outputs $Mv_i$ for all $i = 1, \dots, n$ must take at least $n^{3-o(1)}$ time. Note that if all vectors $v_1, \dots, v_n$ were given at the same time, then the algorithm could compute all $Mv_i$ using fast matrix multiplication in $O(n^{\omega+o(1)})$ time. There is also a natural combinatorial interpretation of the $\OMv$ problem. To translate, let $M$ be the adjacency matrix of a bipartite graph $G = (V = L \cup R, E)$, which the algorithm receives and preprocesses. Then, $n$ subsets $B_1, \dots, B_n \subseteq R$ are given in order. Upon receiving $B_i$, the algorithm must return which vertices in $L$ have at least one neighbor in $B_i$. Several lower bounds for dynamic problems have been proven assuming the $\OMv$ conjecture, including for \emph{exact} dynamic matching, even in partially dynamic settings.

In this paper, we study the connection between $\OMv$-like phenomena and the $(1-\eps)$-\emph{approximate} fully dynamic matching problem, which to our knowledge, has not been explicitly looked at previously. We also study the dual $(1+\eps)$-approximate vertex cover problem. Towards this, we show that the desirable amortized runtime of $n^{1-c}\eps^{-C}$ time for dynamic matching against adaptive adversaries is achievable if and only if there is an efficient algorithm for a dynamic, approximate version of $\OMv$ that we introduce (see \cref{def:dynapproxomv}, \cref{thm:equivmatch}). Similarly, we also show that there is a fully dynamic algorithm that maintains a $(1+\eps)$-approximate vertex cover in amortized $n^{1-c}\eps^{-O(1)}$ time against adaptive adversaries if and only if the $\OMv$ conjecture itself is false (\cref{thm:equivvtx}).
In both cases, we are able to start with a problem that has matching structure (i.e., dynamic matching/vertex cover) and show equivalence to a problem that no such structure.

Leveraging ideas behind these equivalences, we also obtain new upper bounds for maintaining dynamic matchings and vertex covers. By leveraging the fast $\OMv$ algorithm of \cite{LW17}, we give a randomized $(1-\eps)$-approximate algorithm for dynamic matching, and $(1+\eps)$-approximate dynamic vertex cover, with amortized $O(\eps^{-O(1)}\frac{n}{2^{\Omega(\sqrt{\log n})}})$ update time.
We also obtain faster \emph{offline} randomized algorithms for approximate fully dynamic matching and vertex cover running in amortized time $O(n^{.58}\eps^{-O(1)})$ and $O(n^{.723}\eps^{-O(1)})$ respectively, leveraging fast matrix multiplication. All these algorithms can maintain the edges in the matching and vertices in the vertex cover.

Throughout this paper, when we refer to the dynamic matching or vertex cover problems, we require that the algorithm actually returns the matching/vertex cover, not just its size. Anytime our algorithm only maintains the size of the matching, we will be explicit about it. Several recent works only achieve a size approximation. It is also worth noting that if one can maintain a fractional matching, then there are algorithms that round to integral matchings, even against a fully dynamic \emph{adaptive} adversary (see \cite{Wajc20,BKSW23round} and references therein).

\subsection{Reductions removing matching structure}

To state our reductions precisely, we introduce a few problems. The first is the $\OMv$ (online matrix-vector) problem, introduced in \cite{HKNS15}.
\begin{definition}[$\OMv$ problem]
\label{def:omv}
In the $\OMv$ problem, an algorithm is given a Boolean matrix $M \in \{0,1\}^{n \times n}$. After preprocessing, the algorithm receives an online sequence of query vectors $v^{(1)}, \dots, v^{(n)} \in \{0,1\}^n$. After receiving $v^{(i)}$, the algorithm must respond the vector $Mv^{(i)}$.
\end{definition}
The following is known as the (randomized) $\OMv$ conjecture.
\begin{conj}
\label{conj:omv}
Any randomized algorithm solving the $\OMv$ problem with high probability requires at least $n^{3-o(1)}$ time across preprocessing and queries.
\end{conj}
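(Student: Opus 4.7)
The final displayed statement (Conjecture~\ref{conj:omv}) is the randomized $\OMv$ conjecture, i.e., an unproven hardness assumption rather than a theorem. So the honest version of this exercise is: \emph{what would a proof plan even look like?} I will sketch the natural attack, flag exactly where it gets stuck, and note why a full proof is out of reach of current techniques.

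The cleanest reduction target would be to show an $n^{3-o(1)}$ lower bound for \emph{Boolean matrix--vector multiplication} in the cell-probe or word-RAM model when the matrix $M$ is fixed offline and queries arrive online. The plan is to assume for contradiction an algorithm using total time $T(n) = n^{3-\delta}$ across preprocessing plus $n$ online queries, and to derive a contradiction with a combinatorial lower bound. A standard approach would be: first, amortize so that query time is at most $n^{2-\delta/2}$ on most queries (since total time is $n^{3-\delta}$); then, take a ``hard'' distribution on $(M, v^{(1)},\dots,v^{(n)})$ and use an information-theoretic argument — each answer $Mv^{(i)}$ is an $n$-bit string, so the algorithm must have read $\Omega(n)$ cells probing $M$ and the preprocessed data structure; and finally, charge these probes back against the total preprocessing/query budget to force $\Omega(n^{3})$ total work. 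The key obstruction is that the algorithm can, in principle, encode products of \emph{multiple} rows of $M$ in a single preprocessed cell and then answer each query with sublinearly many probes; ruling this out is precisely the barrier that prevents improving on the trivial $\Omega(n^2 / \log^2 n)$ cell-probe lower bounds for online matrix-vector multiplication (the paper itself cites \cite{LW17} which gives a tiny polylog saving, so any $n^{3-\Omega(1)}$ upper bound would already be news).

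A second angle would be reduction from another popular hardness assumption — 3SUM, APSP, or combinatorial BMM — to $\OMv$. But these go the \emph{wrong direction}: $\OMv$-hardness implies hardness of those problems (via the reductions in \cite{HKNS15}), not conversely. So this line cannot yield a proof of Conjecture~\ref{conj:omv} without first refuting it: the truly ``combinatorial'' version of $\OMv$ (disallowing fast matrix multiplication style algebraic tricks) is roughly equivalent to combinatorial BMM, and even there no superlinear-in-$n^2$ unconditional lower bound is known.

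The main obstacle, therefore, is fundamental: an unconditional proof would require new cell-probe or circuit lower bound techniques far beyond the current state of the art, since it would in particular imply $n^{3-o(1)}$ lower bounds for Boolean matrix multiplication in a natural online model. This is why the statement is framed as a \emph{conjecture} rather than a theorem — the purpose of stating it here is to fix the hypothesis used in the conditional reductions of \cref{thm:equivmatch,thm:equivvtx}, not to prove it. My ``proof plan'' is accordingly to treat Conjecture~\ref{conj:omv} as a black-box axiom in the rest of the paper and to verify instead that the reductions built on top of it are tight, which is the actual mathematical content the author will develop.
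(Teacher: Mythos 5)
You correctly identify that \cref{conj:omv} is an unproven hardness assumption introduced in \cite{HKNS15}, not a theorem: the paper gives no proof and treats it exactly as you suggest, as a black-box hypothesis grounding the conditional equivalences in \cref{thm:equivmatch,thm:equivvtx}. Your discussion of why an unconditional proof is out of reach is sensible context but beyond what the paper itself claims or needs.
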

We show an equivalence between the $\OMv$ conjecture and algorithms for dynamic vertex cover with certain parameters.
\begin{theorem}
\label{thm:equivvtx}
The $\OMv$ conjecture (\cref{conj:omv}) is true if and only if there is \emph{no} randomized algorithm that maintains a $(1+\eps)$-approximate vertex cover in amortized $n^{1-c}\eps^{-C}$ time for any constants $c, C > 0$.
\end{theorem}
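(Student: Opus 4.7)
The theorem is an ``if and only if'', so I plan to prove both directions: (a) a fast dynamic vertex cover algorithm implies an $\OMv$ algorithm beating the $n^{3-o(1)}$ barrier, and (b) an $\OMv$ algorithm beating the barrier yields a fast dynamic VC algorithm.

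\textbf{Direction (a): fast dynamic VC refutes $\OMv$.} Given an $\OMv$ matrix $M \in \{0,1\}^{n\times n}$, I would view $M$ as the bipartite adjacency of $G = (L\cup R, E)$ and feed $G$ into the hypothesized dynamic VC data structure during an $O(n^2)$ preprocessing phase of insertions. For each query vector $v^{(i)}$ with support $B_i \subseteq R$, I would perform $O(|B_{i-1}\triangle B_i|) = O(n)$ edge updates that install a ``$B_i$-selector gadget'' on top of $G$: attach a heavy pendant cluster to every $v \in R\setminus B_i$ so that $v$ is pinned into any near-optimal cover (absorbing all its incident edges ``for free''), and attach lighter gadgetry (pendant leaves, $k$ parallel copies) to each $u \in L$ to bias the cover toward the $L$-side. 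The target is a gadget for which any $(1+\eps)$-approximate vertex cover, restricted to $L$, equals $L_i := \{u : N(u) \cap B_i \neq \emptyset\}$, i.e., the support of $Mv^{(i)}$. I then read $L_i$ off the maintained cover set, output it, and revert the gadget so the next query begins from $G$. Over the $n$ queries this totals $O(n^2)$ updates at amortized cost $n^{1-c}\eps^{-C}$, for total time $O(n^{3-c}\eps^{-C})$; choosing $\eps = n^{-c/(2C)}$ gives $O(n^{3-c/2})$, contradicting \cref{conj:omv}.

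\textbf{Direction (b): fast $\OMv$ gives fast dynamic VC.} Assuming an $\OMv$ algorithm with total time $n^{3-c_0}$, I would maintain the $(1+\eps)$-approximate cover via a periodic-rebuild framework. Between rebuilds, maintain a cheap coarse (e.g., $2$-approximate) cover via a dynamically-maintained maximal matching. Every $T$ updates, recompute a $(1+\eps)$-approximate cover from scratch using a primal-dual / multiplicative-weights scheme on the vertex-cover LP: the scheme runs $\eps^{-O(1)}$ rounds, and in each round needs to identify, for every vertex, whether it has an edge to the current ``uncovered'' candidate set---precisely an $\OMv$ query on a preprocessed copy of the adjacency matrix. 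The rebuild therefore costs $\wt{O}(n^{3-c_0}\eps^{-O(1)})$, so amortized over $T$ updates this is $\wt{O}(n^{3-c_0}\eps^{-O(1)}/T)$; balancing against in-batch work yields amortized $n^{1-c}\eps^{-C}$ for appropriate $c, C > 0$.

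\textbf{Main obstacle.} The hardest step is the selector gadget in Direction (a). The $(1+\eps)$-approximation slack is $\eps \cdot \mathrm{OPT}$, which can be much larger than one vertex, and König's theorem allows a bipartite min VC to shift freely between $L$ and $R$, so a naive construction does not guarantee $u \in \mathrm{VC}\cap L \iff u \in L_i$. My plan to overcome this is \emph{asymmetric amplification}: use $k = n^{\Omega(1)}$ parallel copies of each $L$-side gadget together with a polynomially small $\eps = n^{-\Omega(1)}$, chosen so that the marginal cost of covering a relevant edge via the $L$-side strictly beats any $R$-side alternative even after the $(1+\eps)$ slack. This forces every $(1+\eps)$-approximate cover to coincide with $L_i$ on the $L$-side, letting us decode $L_i$ from the maintained cover.
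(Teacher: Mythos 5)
Your Direction~(a) has a fatal structural gap, and the ``asymmetric amplification'' fix you propose does not address it. The obstacle you correctly identify---that a $(1+\eps)$-approximate cover can differ from the minimum on $\Theta(\eps\cdot\mathrm{OPT})$ vertices---is not just a difficulty to be engineered around; it makes the goal of exactly decoding $L_i = \supp(Mv^{(i)})$ from the maintained cover impossible. Adding $k$ parallel copies of each $L$-gadget inflates $\mathrm{OPT}$ by a factor of $\Theta(k)$, so the absolute slack $\eps\cdot\mathrm{OPT}$ inflates proportionally: an adversarial $(1+\eps)$-approximate data structure can still misreport $\Theta(\eps n)$ coordinates of $L_i$ while remaining within its contract, no matter how $k$ and $\eps$ are tuned (and shrinking $\eps$ below $1/n$ would blow up the $\eps^{-C}$ in the hypothesized update time). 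The paper sidesteps this entirely by \emph{not} trying to compute $Mv$ from the cover. It reduces to $\uMv$---the decision problem ``is $G[A,B]$ empty?''---which is known to be polynomially equivalent to $\OMv$ \cite[Lemma 2.11]{HKNS15}. A $(1,2\eps n)$-additive approximate vertex cover $S$ of $G[A,B]$ (obtained from \cref{lemma:inducedvertex} by installing pendant edges on $L\setminus A$ and $R\setminus B$, an $O(n)$-update gadget much simpler than yours) answers this decision robustly: if $|S| > 2\eps n$ there is definitely an edge; if $|S|\le 2\eps n$ one brute-forces the $O(\eps n^2)$ pairs touching $S$. Setting $\eps=n^{-\delta}$ then yields a subcubic $\uMv$ algorithm. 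The conceptual step you are missing is precisely this retreat from computing $Mv$ to deciding nonemptiness.

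Your Direction~(b) is on the right general track (MWU on the vertex-cover LP, with $\OMv$ used to implement the per-round oracle), but it glosses over three necessary pieces. First, the graph is changing while you want to make $\OMv$ queries on it; the paper handles this via a static-to-dynamic $\OMv$ reduction (\cref{lemma:dynomv}, block decomposition with periodic rebuilds), not by re-preprocessing the matrix from scratch every $T$ updates---naive re-preprocessing has cost that cannot be amortized to $n^{1-c}$ unless $T$ is so large that a superlinear number of edges change, which would then require handling a fully fresh graph anyway. Second, the MWU oracle does not need merely ``does $v$ have an edge to $S$'' but $(1\pm\eps)$-multiplicative estimates of $\deg^{G[L,S]}(v)$; Boolean matrix-vector does not directly give these, and the paper obtains them with random sampling for high-degree vertices plus a binary-encoding trick for low-degree vertices (\cref{lemma:implementvoracle}). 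Third, when the minimum vertex cover is small (size $s \ll n$) the MWU-over-$\OMv$ route is not competitive, and the paper falls back to the $\O(\eps^{-O(1)}s)$-update algorithm of \cite{GP13}; your sketch does not distinguish the regimes. None of these is fatal in isolation, but together they mean your Direction~(b) sketch is missing roughly half the machinery.
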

Thus, under the $\OMv$ conjecture, the offline runtime we achieve in \cref{thm:mainuppervtx} is not achievable online. This is perhaps unsurprising, since our offline algorithms makes heavy use of fast matrix multiplication. To state our reductions for dynamic matching, we introduce a few more problems in the spirit of $\OMv$ that, to our knowledge, have not been introduced before.
\begin{definition}[Approximate $\OMv$]
\label{def:approxomv}
In the $(1-\gamma)$-approximate $\OMv$ problem, an algorithm is given a Boolean matrix $M \in \{0,1\}^{n \times n}$. After preprocessing, the algorithm receives an online sequence of query vectors $v^{(1)}, \dots, v^{(n)} \in \{0,1\}^n$. After receiving $v^{(i)}$, the algorithm must respond with a vector $w^{(i)} \in \{0,1\}^n$ such that $d(Mv^{(i)}, w^{(i)}) \le \gamma n$, where $Mv^{(i)}$ is the Boolean matrix product, and $d(\cdot, \cdot)$ is the Hamming distance.
\end{definition}
We also introduce a more challenging dynamic version.
\begin{definition}[Dynamic approximate $\OMv$]
\label{def:dynapproxomv}
In the $(1-\gamma)$-approximate dynamic $\OMv$ problem, an algorithm is given a matrix $M \in \{0, 1\}^{n \times n}$, initially $0$. Then, it responds to the following:
\begin{itemize}
    \item $\textsc{Update}(i, j, b)$: set $M_{ij} = b$.
    \item $\textsc{Query}(v)$: output a vector $w \in \{0,1\}^n$ with $d(Mv, w) \le \gamma n$.
\end{itemize}
\end{definition}
We show the following equivalence between dynamic matching and dynamic approximate $\OMv$.
\begin{theorem}
\label{thm:equivmatch}
There is an algorithm solving dynamic $(1-\gamma)$-approximate $\OMv$ with $\gamma = n^{-\delta}$ with amortized $n^{1-\delta}$ for \textsc{Update}, and $n^{2-\delta}$ time for \textsc{Query}, for some $\delta > 0$ against adaptive adversaries, if and only if there is a randomized algorithm that maintains a $(1-\eps)$-approximate dynamic matching with amortized time $n^{1-c}\eps^{-C}$, for some $c, C > 0$ against adaptive adversaries.
\end{theorem}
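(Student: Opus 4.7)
The plan is to prove the two directions separately.

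For the direction approximate $\OMv \Rightarrow$ dynamic matching, I would combine the short-augmenting-paths framework with $\OMv$-powered alternating BFS. Maintain the adjacency matrix $M$ of $G$ inside the $\OMv$ data structure (paying $n^{1-\delta}$ per graph update) and rebuild a $(1-\eps)$-approximate matching every $\Theta(\eps n)$ updates by running $O(1/\eps)$ Hopcroft--Karp-style phases, each searching for a maximal vertex-disjoint set of augmenting paths of length at most $2/\eps - 1$. The key observation is that each $L$-to-$R$ expansion in the depth-$O(1/\eps)$ alternating BFS is exactly a Boolean matrix-vector product $M \mathbf{1}_S$ on the current frontier $S$, which I would implement with one approximate $\OMv$ query at cost $n^{2-\delta}$. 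Then $O(\eps^{-2})$ queries per rebuild, amortized over $\Theta(\eps n)$ updates, give an amortized per-update cost of $O(\eps^{-3} n^{1-\delta})$, matching $n^{1-c}\eps^{-C}$ with $c = \delta$ and $C = 3$. The $\gamma n$ Hamming error per query is handled by verifying each candidate augmenting path edge-by-edge to reject false positives, while false negatives degrade the matching size by only $O(\gamma n/\eps^2)$ per rebuild, absorbed into the approximation slack for $\eps$ not too small (for smaller $\eps$ the amortized bound $n^{1-c}\eps^{-C}$ is already trivial).

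For the direction dynamic matching $\Rightarrow$ approximate $\OMv$, I would maintain the dynamic matching algorithm on the bipartite graph $G_M$ whose adjacency matrix equals $M$, with approximation parameter $\eps = \gamma = n^{-\delta}$. Each \textsc{Update}$(i,j,b)$ is a single edge flip in $G_M$ at amortized cost $n^{1-c}\gamma^{-C} \le n^{1-\delta}$ for $\delta$ small enough. To answer \textsc{Query}$(v)$ with $B = \supp(v)$, the output must lie within Hamming distance $\gamma n$ of the indicator of $N(B) \cap L = \supp(Mv)$. I would construct an ephemeral graph $H_v$ from the induced subgraph $G_M[L, B]$ by attaching enough capacity-duplicates of each $b \in B$ that the maximum matching of $H_v$ saturates $N(B) \cap L$ entirely on the left (removing single-$b$ bottlenecks). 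The $L$-matched side of any $(1-\eps)$-approximate matching of $H_v$ is then at Hamming distance at most $\eps n = \gamma n$ from $Mv$.

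The main obstacle is keeping the ephemeral query construction within the $n^{2-\delta}$ budget. A naive $n$-fold duplication blows $|H_v|$ up to $\Theta(n^2)$ and costs $\Theta(n^{3-c+C\delta})$ per query to set up, far beyond budget. I would replace blanket duplication with (i) random sampling of a sublinear subset of $B$ together with moderate capacity-duplication to cover high-degree witnesses, and (ii) direct enumeration on the sparse residual for low-degree witnesses, exploiting the $\gamma n$ approximation slack to tolerate some coverage loss. Adaptive-adversary correctness is inherited from the underlying matching oracle via standard fresh re-randomization between queries.
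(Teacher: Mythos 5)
Your proposal takes genuinely different routes on both directions, but each has gaps that would need substantive additional work.

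For the direction \emph{approximate $\OMv \Rightarrow$ matching}, you replace the paper's multiplicative-weights-update (MWU) framework (\cref{def:moracle}, \cref{lemma:inducedtomoracle}, \cref{lemma:mmoracle}) with Hopcroft--Karp phases where BFS frontier expansion is an $\OMv$ query. The arithmetic of the amortized bound works, and the direction-of-attack is plausible, but the error accounting is asserted rather than proved. The paper's MWU oracle has an explicit \emph{width} condition that lets it tolerate an additive $\beta$ loss in each maximal-matching subroutine (absorbed into a $\tilde\Omega(\eps^5\mu(G))$ budget per call); the paper even explains in the last paragraph of \cref{overview:vertexcover} that MWU was chosen over Sinkhorn precisely because MWU tolerates dropping heavy rows whereas an alternative iterative scheme does not. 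Your Hopcroft--Karp variant has no analogous robustness argument: the standard $O(1/\eps)$-phase guarantee relies on each phase finding a \emph{maximal} set of vertex-disjoint shortest augmenting paths, so that the shortest-augmenting-path length strictly increases. If the approximate-$\OMv$ oracle adversarially and persistently hides the same $\gamma n$ rows, a phase may leave many short augmenting paths unfound, and the ``length increases each phase'' invariant breaks. You claim the cumulative loss is $O(\gamma n/\eps^2)$ and can be absorbed, but you would need to actually prove a robust Hopcroft--Karp lemma (e.g., via König) bounding the deficit of the final matching in terms of the per-phase false negatives, against an adversary that can correlate errors across phases. You also omit the reduction to $\mu(G)=\Theta(n)$ (the paper's \cref{lemma:contract}), without which ``rebuild every $\Theta(\eps n)$ updates'' does not preserve the approximation ratio when the matching size is $o(n)$.

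For the direction \emph{matching $\Rightarrow$ approximate $\OMv$}, your primary idea --- building an ephemeral graph $H_v$ with enough capacity-duplicates of each $b\in B$ so that one approximate matching of $H_v$ saturates all of $N(B)\cap L$ --- fails on resources, as you acknowledge, and the sketched fix is not a proof. Handing a $(1-\eps)$-approximate matching algorithm a graph with $\Theta(n^2)$ vertices is not the same as issuing $O(n)$ edge updates to the existing data structure on $G$; its guarantee is in terms of \emph{its} $n$, which has blown up. The paper's route (\cref{algo:omv}) never leaves the original graph: it restricts to $G[A,B]$ by inserting $O(n)$ pendant dummy edges (\cref{lemma:induced}), does a two-sided degree-reduction by sampling, and then \emph{peels} matchings of size $\ge\eps n$ one at a time, shrinking $A$ until $\mu(G[A,B])\le 4\eps n$, at which point $|E(G[A,B])|\le 4D\eps n$ bounds the Hamming error. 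This peeling loop is the key mechanism missing from your sketch: it is what keeps both the approximation error and the number of edge updates to $\cA$ linear in $n$, with no duplication of vertices. Your ``(ii) direct enumeration on the sparse residual'' is also underspecified, since the algorithm has no efficient access to the adjacency list of the induced subgraph $G[L,B]$ (the paper's degree-reduction step I(b) works precisely because it can afford $O(n)$ time per discovered heavy $b$).

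Both gaps are at the heart of what makes the reduction nontrivial; the proposal identifies the right target quantities (per-query cost $n^{2-\delta}$, per-update cost $n^{1-\delta}$) but the mechanisms proposed to hit them are not established.
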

This implies a reduction from approximate $\OMv$ (\cref{def:approxomv}) to dynamic matching, though we do not know of an equivalence. We do not know of a subcubic time algorithm for approximate $\OMv$, for some $\gamma = n^{-\delta}$.
\begin{corollary}
\label{cor:reduce}
If there is an $(1-\eps)$-approximate dynamic matching algorithm against adaptive adversaries with amortized runtime $n^{1-c} \eps^{-C}$ for constants $c, C > 0$, then the approximate $\OMv$ problem with $\gamma = n^{-\delta}$ can be solved in $n^{3-\delta}$ time for some $\delta > 0$.
\end{corollary}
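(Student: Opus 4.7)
The plan is to deduce the corollary directly from \cref{thm:equivmatch}, using the observation that the (non-dynamic) approximate $\OMv$ problem is trivially a special case of the dynamic approximate $\OMv$ problem. So the approach has two steps: first invoke the equivalence to get a dynamic approximate $\OMv$ algorithm, and then simulate the static setting with it.

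Concretely, assume there is a $(1-\eps)$-approximate dynamic matching algorithm against adaptive adversaries running in amortized time $n^{1-c}\eps^{-C}$ for some constants $c, C > 0$. By the ``only if'' direction of \cref{thm:equivmatch}, this yields (for some $\delta > 0$ depending on $c, C$) an algorithm for $(1-\gamma)$-approximate dynamic $\OMv$ with $\gamma = n^{-\delta}$, using amortized $n^{1-\delta}$ time per \textsc{Update} and $n^{2-\delta}$ time per \textsc{Query}.

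To solve an instance of approximate $\OMv$ (\cref{def:approxomv}) with matrix $M \in \{0,1\}^{n \times n}$ and online vectors $v^{(1)}, \dots, v^{(n)}$, I would first load $M$ into the dynamic data structure by performing $n^2$ \textsc{Update} operations, one per entry of $M$; this costs amortized $n^2 \cdot n^{1-\delta} = n^{3-\delta}$ time in total. Then for each incoming $v^{(i)}$, call $\textsc{Query}(v^{(i)})$ and return the resulting $w^{(i)}$; this takes $n \cdot n^{2-\delta} = n^{3-\delta}$ time across all queries. By definition of the dynamic approximate $\OMv$ guarantee, each returned $w^{(i)}$ satisfies $d(Mv^{(i)}, w^{(i)}) \le \gamma n = n^{1-\delta}$, which is exactly what approximate $\OMv$ with parameter $\gamma = n^{-\delta}$ demands. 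Summing, the total runtime is $n^{3-\delta}$, possibly after shrinking $\delta$ by a constant factor to absorb the two equal contributions.

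There is no real obstacle beyond checking that the exponents match up: the only thing to verify is that the $\delta$ supplied by \cref{thm:equivmatch} (which controls both the update/query runtimes and the approximation error) can be identified with the $\delta$ in the target statement, which is immediate since both are thresholds guaranteed to exist for some positive constant. Note that adaptive adversary guarantees are more than enough here, as the static approximate $\OMv$ reduction constructs an entirely oblivious update sequence (just writing down the fixed entries of $M$), after which only queries are issued.
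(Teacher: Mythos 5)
Your proof is correct and follows the same route as the paper: initialize the dynamic approximate $\OMv$ data structure with $n^2$ \textsc{Update} calls, then answer each of the $n$ queries via \textsc{Query}, for total time $n^{3-\delta}$. One small terminology slip: the direction of \cref{thm:equivmatch} you are using (matching $\Rightarrow$ dynamic approximate $\OMv$) is what the paper calls the \emph{if} direction, not the \emph{only if} direction.
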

\begin{proof}
Follows because any algorithm for dynamic approximate $\OMv$ can solve approximate $\OMv$ by calling \textsc{Update} at most $n^2$ times to initialize $M$, which uses total time $n^{3-\delta}$, and then calling \textsc{Query} $n$ times, with total time $n^{3-\delta}$.
\end{proof}

\paragraph{Interpretation of results.} In the author's opinion, the results stated in this section demonstrate the following point. In the maximum matching and vertex cover problems, the challenge in solving the problems efficiently (with $n^{1-c}\eps^{-C}$ update times) does not stem from the matching structure at all. Instead, these problems are difficult in the same way that $\OMv$ is: it is (conjecturally) hard to find edges in induced subgraphs of a graph $G$ in an online manner.

Why might solving approximate $\OMv$ with $\gamma = n^{-\delta}$ actually require almost cubic time? To start, our results use a simple reduction (\cref{lemma:induced}) which shows that a fully dynamic matching algorithm with amortized time $n^{1-c}\eps^{-C}$ can be used to implement the following oracle in \emph{subquadratic} time. Given a bipartite graph $G = (V = L \cup R, E)$ and subsets $A \subseteq L, B \subseteq R$, find a matching on the induced subgraph $G[A, B]$ whose size is within $O(\eps n)$ of optimal. Then, a potentially difficult case is when $G[A, B]$ is itself an induced matching, and the algorithm must locate many edges of this induced matching in subquadratic time. This is very similar to the why designing fast algorithms for $\OMv$ is difficult. Additionally, there are graphs (called \emph{Ruzsa-Szemeredi graphs}) where $G[A, B]$ are induced matchings for many different pairs $(A, B)$.

\subsection{Online dynamic matching and vertex cover}
Leveraging the reductions of \cref{thm:equivvtx,thm:equivmatch} and a fast $\OMv$ algorithm of \cite{LW17}, we give faster algorithms for approximate dynamic matching and vertex cover against adaptive adversaries.
\begin{theorem}
\label{thm:onlineupperm}
There is a randomized algorithm that maintains a $(1-\eps)$-approximate maximum matching on a dynamic graph $G$ in amortized $O(\eps^{-O(1)}\frac{n}{2^{\Omega(\sqrt{\log n})}})$ time against adaptive adversaries.
\end{theorem}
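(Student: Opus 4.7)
The plan is to combine the ``if'' direction of \cref{thm:equivmatch}---which turns a dynamic $(1-\gamma)$-approximate $\OMv$ data structure into a dynamic $(1-\eps)$-matching algorithm---with an efficient dynamic approximate $\OMv$ implementation built on the static $\OMv$ algorithm of Larsen--Williams~\cite{LW17}. Recall that \cite{LW17} solves static $\OMv$ on an $n\times n$ matrix in total time $n^{3}/2^{\Omega(\sqrt{\log n})}$, i.e.\ amortized $n^{2}/2^{\Omega(\sqrt{\log n})}$ per query; this subpolynomial-savings subroutine is what I will lift to the dynamic setting and then pipe through the reduction.

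To handle updates to the matrix, I use a standard epoch/rebuild scheme. Maintain a base matrix $M_{\text{base}}$ (the matrix at the last rebuild) preprocessed by Larsen--Williams, together with a list $\Delta$ of entries changed since the last rebuild. Each $\textsc{Update}$ appends to $\Delta$ in $O(1)$ time; each $\textsc{Query}(v)$ returns $M_{\text{base}}v + \Delta v$, where the first term is served by Larsen--Williams in $n^{2}/2^{\Omega(\sqrt{\log n})}$ time and the second directly in $O(|\Delta|)$ time. I rebuild after every $K := n / 2^{\Omega(\sqrt{\log n})}$ updates, which amortizes the Larsen--Williams preprocessing to $n / 2^{\Omega(\sqrt{\log n})}$ per update and keeps the correction term $|\Delta|\le K$ dominated by the Larsen--Williams query cost. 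The result is a deterministic (hence adaptive-adversary robust) dynamic approximate---in fact exact---$\OMv$ data structure with amortized update time $n / 2^{\Omega(\sqrt{\log n})}$ and query time $n^{2} / 2^{\Omega(\sqrt{\log n})}$.

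Feeding this data structure into the reduction underlying \cref{thm:equivmatch} produces the claimed dynamic matching algorithm: each edge update to $G$ triggers $O(1)$ $\textsc{Update}$ calls and an amortized $\eps^{-O(1)}$ worth of $\textsc{Query}$ calls (used to extract near-maximum matchings on induced subgraphs $G[A,B]$ via queries $v=\mathbf{1}_B$, in the spirit of \cref{lemma:induced}), giving the claimed amortized bound $\eps^{-O(1)} \cdot n / 2^{\Omega(\sqrt{\log n})}$. The main obstacle I anticipate is that \cref{thm:equivmatch} is stated with polynomial savings $n^{1-c}\eps^{-C}$, whereas here the savings are merely $2^{\Omega(\sqrt{\log n})}$; I will need to verify, by careful bookkeeping, that the underlying reduction composes with only subpolynomial savings, and that the $\OMv$-oracle approximation parameter $\gamma$ can be chosen as a function of $\eps$ alone (together with suitably chosen recursion depths in the induced-matching extraction) so that the aggregated matching error remains $O(\eps n)$ without the $\eps^{-O(1)}$ overhead absorbing the Larsen--Williams savings.
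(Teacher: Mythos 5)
Your high-level strategy is the same as the paper's: build a \emph{dynamic} $\OMv$ data structure from the Larsen--Williams static $\OMv$ algorithm, then feed it into the direction of \cref{thm:equivmatch} that converts dynamic $\OMv$ into dynamic matching (which the paper calls the ``only if'' direction, not ``if''). Where you diverge, and where the gap is, is in the construction of the dynamic $\OMv$ data structure.

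\textbf{The Boolean correction is not additive.} You write that $\textsc{Query}(v)$ returns ``$M_{\text{base}}v + \Delta v$'' with the correction computed in $O(|\Delta|)$ time. For \emph{Boolean} $\OMv$, this works for insertions (OR in the new entries), but not for deletions: if an entry $M_{ij}$ flips $1 \to 0$, you may have $(M_{\text{base}}v)_i = 1$ solely because of that deleted entry, and you cannot tell from the Boolean output of the LW query whether any surviving entry still witnesses row $i$. The standard fix is to re-evaluate from scratch every row $i$ that has a flipped entry, which costs $O(n \cdot \min(|\Delta|, n))$ per query, not $O(|\Delta|)$. (You claim the data structure is ``in fact exact''; with your stated correction it is not.)

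\textbf{The rebuild interval is over-constrained.} With the corrected $O(n\min(|\Delta|,n))$ per-query overhead, you need $|\Delta| \le n/2^{\Omega(\sqrt{\log n})}$ for the correction to be dominated by the LW query cost, hence $K \le n/2^{\Omega(\sqrt{\log n})}$ --- which matches your choice. But then the amortized rebuild cost is $P_{\mathrm{LW}}/K$, and $P_{\mathrm{LW}} \ge n^2$ just to read the matrix (and in fact LW's preprocessing is $n^{2+o(1)}$), so $P_{\mathrm{LW}}/K \ge n\cdot 2^{\Omega(\sqrt{\log n})}$, which is \emph{super}-linear per update. Your claim that this amortizes to $n/2^{\Omega(\sqrt{\log n})}$ per update implicitly assumes $P_{\mathrm{LW}} = n^2/2^{\Omega(\sqrt{\log n})}$, which is impossible. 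The two constraints --- $K$ small enough for the correction, $K$ large enough to amortize preprocessing --- are incompatible for a single $n\times n$ LW instance.

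\textbf{What the paper does instead.} \cref{lemma:dynomv} partitions $M$ into $n^{2-2\alpha}$ blocks of size $n^\alpha\times n^\alpha$ and runs an independent $\OMv$ instance on each block. An update dirties one block; dirty blocks are brute-forced at query time in $n^{2\alpha}$ each (which naturally respects the Boolean OR over blocks, sidestepping the deletion issue), and a global rebuild is triggered after $n^{2-3\alpha}$ updates \emph{or} $n^\alpha$ queries. The query-triggered rebuild is essential because the $\OMv$ guarantee is stated as amortized over (at most) $n^\alpha$ queries per $n^\alpha\times n^\alpha$ instance. The small block size makes each block's preprocessing cheap enough that the rebuild cost amortizes to $n^{4\alpha}$ per update, and balancing the three cost terms with $\alpha = 1/5$ gives $\cref{cor:dynomv}$. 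Your concern that the polynomial-savings statement of \cref{thm:equivmatch} must be re-checked for subpolynomial savings is correct and is exactly what \cref{cor:dynomv} (with exact, i.e.\ $\gamma = 0$, dynamic $\OMv$) accomplishes, but the ``careful bookkeeping'' requires the block decomposition, not a single-instance $\Delta$-patching scheme.
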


\begin{theorem}
\label{thm:onlineupperv}
There is a randomized algorithm that maintains a $(1+\eps)$-approximate minimum vertex cover on a dynamic graph $G$ in amortized $O(\eps^{-O(1)}\frac{n}{2^{\Omega(\sqrt{\log n})}})$ time against adaptive adversaries.
\end{theorem}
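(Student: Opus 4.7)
The plan is to combine the forward direction of the reduction in \cref{thm:equivvtx} with the $\OMv$ algorithm of Larsen and Williams \cite{LW17}, which solves $\OMv$ in total time $O(n^3/2^{\Omega(\sqrt{\log n})})$. What we actually need is a quantitative version of the ``only if'' direction of \cref{thm:equivvtx}: an online $\OMv$ algorithm with total runtime $T(n)$ should yield an algorithm for $(1+\eps)$-approximate dynamic vertex cover with amortized update time $\wt O(T(n)/n^2 \cdot \eps^{-O(1)})$. Substituting the LW17 bound then gives exactly the claimed amortized update time of $O(\eps^{-O(1)} \cdot n/2^{\Omega(\sqrt{\log n})})$.

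To carry this out, I would first unpack the reduction underlying \cref{thm:equivvtx}. The natural scheme is periodic rebuilding: every $\Theta(\eps n)$ updates, recompute an approximately optimal vertex cover $C$ from scratch (e.g., via a static $(1+\eps)$-approximation). Between rebuilds, edges incident to $C$ are automatically covered, so problematic edges are only those both of whose endpoints lie in $V \setminus C$. The key primitive the $\OMv$ subroutine provides is then the ability, given a query vertex set $S$, to efficiently detect which $v \in S$ currently have a neighbor in $V \setminus C$. This is a Boolean matrix-vector multiplication against the adjacency matrix restricted to $V \setminus C$, which is precisely what the LW17 data structure handles in an online manner. Batching such queries across the $\Theta(\eps n)$ updates in a single epoch amortizes the $\OMv$ cost down to $\wt O(n/2^{\Omega(\sqrt{\log n})})$ per update.

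The main steps are then: (i) formalize the rebuild-based reduction and verify that savings in the $\OMv$ per-query time carry over proportionally to vertex cover amortized update time, up to $\eps^{-O(1)}$ factors; (ii) instantiate the $\OMv$ subroutine with \cite{LW17}; (iii) argue correctness of the $(1+\eps)$-approximation, using that a near-optimal vertex cover certified at each rebuild drifts by at most $\eps n$ over an epoch of $\Theta(\eps n)$ updates; and (iv) verify adaptive-adversary robustness, which should hold because the LW17 algorithm is deterministic (its responses cannot be probed by an adversary to steer future inputs) and any remaining randomness in the reduction can be made adversary-robust by standard techniques.

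The hard part will be step (i): making sure that the subpolynomial $2^{\Omega(\sqrt{\log n})}$ speedup from LW17 actually survives the reduction. In particular, enough $\OMv$ queries must be issued per epoch so that the amortized cost per update inherits the full speedup, and the $\eps^{-O(1)}$ overhead from approximation must enter multiplicatively, rather than compounding with the OMv savings. Fortunately, this is essentially the same schema as in the matching version (\cref{thm:onlineupperm}), so the argument used there should transfer to the vertex cover setting with only minor modifications (most notably, replacing calls that produce matching edges with calls that produce a small number of uncovered vertices to add to $C$).
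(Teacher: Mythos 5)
Your high-level plan --- combining a quantitative ``if'' direction of \cref{thm:equivvtx} with the Larsen--Williams $\OMv$ algorithm --- is exactly the paper's strategy. However, the reduction you sketch is quite different from the paper's, and as written it has several gaps.

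First, the core of your rebuild step is unspecified in the way that matters. You say ``recompute an approximately optimal vertex cover $C$ from scratch (e.g., via a static $(1+\eps)$-approximation),'' and later describe the key $\OMv$ primitive as detecting which $v\in S$ have a neighbor in $V\setminus C$. But detecting uncovered edges and peeling endpoints only yields a constant-factor (essentially a $2$-)approximation, not $(1+\eps)$. The paper instead runs the covering-LP multiplicative weights method (\cref{subsec:mwuvertexcover}, \cref{def:voracle}, \cref{cor:voracle}); each MWU iteration needs estimates of $\deg^{G[L,S]}(v)$ for buckets $S$ of similarly-weighted vertices, and this is the primitive that is implemented via $\OMv$ (\cref{lemma:voracleabstract}, \cref{lemma:implementvoracle}). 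Your sketch has no analogue of this, and without it the claimed $(1+\eps)$-approximation does not follow.

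Second, \cite{LW17} gives a \emph{static} $\OMv$ data structure over a fixed matrix $M$; in your setting the graph changes during the epoch. The paper proves a genuine conversion from static to dynamic $\OMv$ (\cref{lemma:dynomv}, \cref{cor:dynomv}) via block-decomposition and periodic rebuilds, and uses that. Your proposal glosses over this, and simply batching the LW17 data structure over an epoch does not handle the evolving matrix. Third, the paper's reduction handles graphs with small vertex cover using \cite{GP13}'s technique (maintaining a core graph of size $O(s^2)$ when the cover has size $s$); rebuilding every $\Theta(\eps n)$ steps is not affordable when $s\ll n$, and you do not address this regime. Fourth, a minor factual error: \cref{thm:lw} is a \emph{randomized} algorithm, not deterministic, so the adaptive-adversary claim cannot rest on determinism; the paper instead cites LW17's guarantee directly against adaptive adversaries.

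Finally, the claim that the matching argument (\cref{thm:onlineupperm}) transfers to vertex cover with only ``minor modifications'' understates the difference. The matching MWU oracle (\cref{def:moracle}, \cref{lemma:inducedtomoracle}) reduces to \emph{maximal matching queries on induced subgraphs}, whereas the vertex cover MWU oracle (\cref{def:voracle}, \cref{lemma:voracleabstract}) reduces to \emph{multiplicative degree estimation in induced subgraphs}. These are genuinely different primitives, implemented by different subroutines, and the vertex cover case crucially relies on exact (not merely Hamming-approximate) dynamic $\OMv$, which is why \cref{thm:equivvtx} is an equivalence with $\OMv$ itself, while \cref{thm:equivmatch} is an equivalence with the approximate dynamic variant.
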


The previous best amortized runtimes (for matching) were a combination of $O(n/(\log^* n)^c)$ of \cite{ABKL23}, via the Szemeredi regularity lemma, or $O_{\eps}(n^{1-\Omega_{\eps}(1)})$ (only maintaining the size), of \cite{BKS23b}. As far as we know, \cref{thm:onlineupperm} is the first sublinear time algorithm for $\eps = (\log n)^{-O(1)}$.
In some sense, our algorithm uses very little graph structure compared to these works. It uses a multiplicative weights method to show equivalence of matching and a dynamic form of $\OMv$, which we then solve in a black-box manner using \cite{LW17}.

\subsection{Offline dynamic matching and vertex cover}

We give a faster offline algorithm for dynamic matching.
\begin{theorem}
\label{thm:mainupper}
There is a randomized algorithm that given an offline sequence of edge insertions and deletions to an $n$-vertex bipartite graph, maintains the edges of a $(1-\eps)$-approximate matching in amortized $O(n^{.58}\eps^{-O(1)})$ time with high probability.
\end{theorem}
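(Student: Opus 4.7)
The plan is to leverage offline knowledge of the entire update sequence to batch the matrix--vector-product primitives behind \cref{thm:equivmatch} into rectangular matrix--matrix products that can be handled by fast matrix multiplication.

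First I would use (the offline version of) the reduction underlying \cref{thm:equivmatch}, so that it suffices to design an offline algorithm for dynamic approximate OMv with Hamming tolerance $\gamma = \mathrm{poly}(\eps)$ and amortized per-operation cost $O(n^{.58}\eps^{-O(1)})$. In the offline version of \cref{def:dynapproxomv}, the full sequence of updates and query vectors is revealed in advance, so the algorithm can preprocess the whole stream before producing any answers.

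Next I would partition the stream into epochs of length $B$. Within one epoch the matrix $M$ changes by at most $B$ entries, so serving every query in that epoch against the start-of-epoch snapshot $M_0$ incurs Hamming error at most $B$ per query, which is within tolerance whenever $B \le \gamma n = \mathrm{poly}(\eps) \cdot n$. Stacking the epoch's query vectors into $V \in \{0,1\}^{n \times B}$ and computing $M_0 V$ by a single rectangular matrix multiplication costs $n^{\omega(1,1,\log_n B)}$, for an amortized per-operation cost of $n^{\omega(1,1,\log_n B) - \log_n B}$. To drive the exponent down to $.58$, I would add a second, cross-epoch layer of batching in which, for every fixed multiplicative-weights iteration index $k$, the $k$-th matrix--vector query from many consecutive epochs is collected into one rectangular $n \times n \times n^a$ product, with a cheap correction handling the $O(B)$-entry differences between consecutive epoch matrices.

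The main obstacle is that the Hamming tolerance caps $B$ at $O(\eps n)$, so the single-layer scheme above yields only $n^{\omega - 1}$ amortized, far from $n^{.58}$. The critical step is thus the cross-epoch batching, which relies on the multiplicative-weights structure implicit in the reduction -- each matching solve uses only $\mathrm{poly}(\eps^{-1})$ adaptive matrix--vector queries, so the $k$-th queries across many epochs can be scheduled together once the $(k-1)$-st has been resolved -- together with the fact that consecutive epochs' matrices agree on all but $O(B) = O(\eps n)$ entries, which allows the differences to be summarized by a low-rank style correction. Balancing the epoch length $B$, the cross-epoch batch width $n^a$, and the MWU iteration count against current bounds for $\omega(1,1,\cdot)$ is what yields the $n^{.58}$ exponent; the randomization in the statement enters through the sketches used to control the cross-epoch correction terms with high probability.
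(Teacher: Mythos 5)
Your high-level direction is the same as the paper's — use offline knowledge to replace matrix--vector products with rectangular matrix--matrix products, and pipeline so that the $k$-th (adaptive) query from each of many consecutive recomputations is batched into one product — and you correctly identify that the within-epoch single-product $M_0V$ idea fails because of adaptivity. However, the proposal is missing the two ingredients that make the exponent $.58$ actually come out, and adds machinery the paper does not need. First, your matrix multiplications have shape $n \times n \times n^a$, but the paper's have shape $s \times (s/D) \times t$: the extra $1/D$ factor in the inner dimension is essential. It comes from a degree-reduction-by-sampling step (\cref{lemma:highdegree}) — for each candidate vertex $a$, sample $\O(n/D)$ random partners to either find an edge or certify $\deg^{G[A,B]}(a)\le D$ — followed by subsampling $B$ at rate $1/D$ and using the binary-label trick of \cite{WX20} (\cref{claim:degreeone}) to pin down the unique neighboring \emph{edge} for each degree-one vertex (note the matching algorithm needs the edges themselves, not just Boolean hit/miss answers). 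Without $D$ as a free parameter, balancing only $a$ against $T(n,n,n^a)$ gives an exponent strictly worse than $.58$. Second, you omit the contraction reduction (\cref{lemma:contract}) that replaces $n$ by the effective instance size $s$ with $\mu(G)=\Theta(\eps s)$, which is what lets the whole bound be stated in terms of $n$ uniformly.

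Separately, routing through the Hamming-error budget of approximate OMv to absorb the stale-snapshot error is an avoidable loss: the paper corrects the at most $\Gamma$ entry differences between $G_1$ and $G_i$ \emph{exactly}, at additive cost $t\Gamma$, rather than spending the $\gamma n$ error tolerance on staleness. That exact correction is just a per-entry patch to the output of the matrix product, so the ``low-rank style correction'' and ``randomized sketches'' you invoke are not needed (and are vague enough that I would not accept them as stated). In short, the skeleton (offline $+$ rectangular matmul $+$ cross-epoch pipelining) matches the paper, but the degree-reduction subroutine, the bit trick for recovering edges, the contraction step, and the exact $\Gamma$-correction are all load-bearing parts of the argument that the proposal does not supply, and the parameter balance you would get without them does not reach $n^{.58}$.
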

Previously, the best known \emph{online} algorithm only maintains the \emph{size} of the matching runs in amortized time $O_{\eps}(m^{1/2-\Omega_{\eps}(1)}) = O_{\eps}(n^{1-\Omega_{\eps}(1)})$, where the dependence is exponential in $\eps^{-1}$ \cite{BKS23b}. In fact, several recent works (based on sublinear matching) only maintain the approximate size of matchings \cite{B23,BRR23,BKS23a,BKSW23}. If the matrix multiplication constant $\omega = 2$, our algorithm in \cref{thm:mainupper} would run in time $\O(\eps^{-O(1)}n^{0.5+o(1)})$.

We also achieve an offline algorithm for approximate vertex cover.
\begin{theorem}
\label{thm:mainuppervtx}
There is a randomized algorithm that given an offline sequence of edge insertions and deletions to an $n$-vertex bipartite graph, maintains a $(1+\eps)$-approximate vertex cover in amortized $O(n^{.723}\eps^{-O(1)})$ time.
\end{theorem}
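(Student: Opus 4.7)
I would combine the multiplicative-weights reduction underlying \cref{thm:equivvtx} (vertex cover $\leftrightarrow \OMv$) with the offline-batching strategy behind \cref{thm:mainupper}. At a high level, a $(1+\eps)$-approximate \emph{fractional} vertex cover can be maintained by a primal-dual / multiplicative-weights loop that interacts with the graph only through neighborhood oracles of $\OMv$ type: given a weight vector $w$ on one side, return the indicator vector of those right-vertices having a left-neighbor with $w_u$ below the current threshold. A $(1+\eps)$-approximate integral cover is then extracted by threshold-rounding the fractional solution and patching the sparse set of residual uncovered edges; since the integrality gap is $1$ for bipartite vertex cover (König), this final step only requires a small maximum matching, which can be handled by invoking \cref{thm:mainupper} as a subroutine within the same budget.

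Offline, the entire update stream is known in advance, so I would process updates and the $\OMv$-style queries in blocks of width $n^{a}$: at the start of each block, fold all pending updates into the adjacency matrix in $O(n^2)$ work, and then answer every query in the block via a single rectangular Boolean matrix product of shape $n \times n \times n^{a}$, costing $n^{\omega(1,1,a)}$ time. Summing over $n^{1-a}$ blocks and amortizing against the $\Omega(n)$ updates inside each block, the per-update cost becomes $n^{\omega(1,1,a)-1-a}\cdot\eps^{-O(1)}$. Plugging in the current best rectangular matrix-multiplication bounds and optimizing $a$ should yield the stated exponent $.723$; the larger exponent relative to the matching result $n^{.58}$ is expected, since each MWU iteration for vertex cover probes all of $L$ simultaneously rather than a single edge, forcing a less favorable balance point.

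The first main obstacle is making the multiplicative-weights loop compatible with offline batching: a priori, each query determines the next weight update, which prevents fusing many queries into a single matrix product. I would address this with the standard width-reduction / lazy-update idea, in which many constraints are charged in parallel within one MWU iteration, together with a potential argument showing that the total number of iterations remains $\eps^{-O(1)}\log n$; then an entire block's worth of MWU work can indeed be carried out by one rectangular product against the batched weight matrix. The second, more delicate obstacle is the \emph{integral} rounding, since the algorithm must output the actual cover and not merely its size: I would maintain the residual matching between successive MWU passes and charge the total number of vertex insertions/removals from the output cover against that matching work, which, by the preceding \cref{thm:mainupper}, already fits inside the target $n^{.723}\eps^{-O(1)}$ budget.
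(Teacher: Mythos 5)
Your proposal has a genuine gap: it omits the vertex-cover contraction step, which is the conceptual heart of the paper's argument for this theorem. Running MWU plus batched rectangular matrix products directly on the $n\times n$ adjacency matrix, with $\Theta(n)$ updates per block, gives amortized cost $n^{\omega(1,1,a)-1} \ge n$ for every $a\in[0,1]$ (not the $n^{\omega(1,1,a)-1-a}$ you wrote; amortizing $n^{\omega(1,1,a)}$ work over $\Theta(n)$ updates loses exactly one factor of $n$, not $n^{1+a}$), so the scheme as described never beats linear. To get below $n$ one must first shrink the working graph to size proportional to the vertex cover $s$ (times $\eps^{-O(1)}$) via a random contraction, so that the MWU lives on an $O(s\eps^{-3})$-vertex graph and the expensive matrix products are on $s\times s$ matrices. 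The paper's \cref{lemma:contractvtx} and \cref{lemma:helper} are exactly what make this work: random contraction trivially preserves matchings, but for vertex cover one must show that, with high probability, a near-minimum cover of the contracted graph can be \emph{lifted} back to a near-minimum cover of $G$, which is delicate because contracted vertices may mix. That lifting step — computing $\bigcup_{a\in A\setminus\bar A}N(a)$ and $\bigcup_{b\in B\setminus\bar B}N(b)$ in the original $n$-vertex graph — is moreover the true bottleneck: it forces one $n\times n\times s^{x}$ product per batch, which is why vertex cover lands at $n^{.723}$ rather than matching's $n^{.58}$. For small $s$ the paper instead falls back on the $\O(\eps^{-O(1)}s)$ algorithm of \cite{GP13}, another ingredient you don't mention.

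Your rounding story is also off track. The paper does not round the fractional cover via threshold-plus-K\H{o}nig and a residual matching; integral covers come out of the same contraction machinery (a fractional cover is thresholded in $O(n)$ time, and the recovery to the uncontracted graph is handled by \cref{lemma:contractvtx}), and there is no dynamic-matching subroutine at all. Invoking \cref{thm:mainupper} as a black box inside the vertex-cover algorithm would also create a circularity in the exponent accounting that you have not resolved. Finally, the worry you raise about sequential MWU iterations versus batching is a real one, but the paper sidesteps it differently from what you sketch: updates are batched into $\eps s$-sized intervals (so the graph changes by a bounded amount between recomputations), and \emph{all} MWU iterations for a fixed snapshot are run to completion; the batching is across time snapshots, not across MWU rounds within one snapshot.
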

Once again, the algorithm is able to maintain the exact set of vertices in the vertex cover.

\subsection{Previous work}
\label{subsec:previous}

\paragraph{Fully dynamic matching and vertex cover.} For the problem of maintaining $(1-\eps)$-approximate matchings in fully dynamic graphs, until recently, the best known runtime was $O_{\eps}(\sqrt{m} \eps^{-2})$ \cite{GP13,PS16}. Recently, the runtime was improved to $O(m^{1/2-\Omega_{\eps}(1)})$ \cite{BKS23b}; however, the algorithm can only maintain the \emph{size} of the matching, and not its edges. The \cite{GP13} algorithm can also be extended to maintain $(1-\eps)$-approximate maximum weighted matchings, with polylogarithmic dependence on the maximum weight. However, the dependence on $\eps$ becomes exponential. A black-box reduction from weighted to unweighted matching was shown in \cite{BDL21} (once again, with exponential dependence on $\eps$).

There are also several works studying fully dynamic matching and vertex cover with larger approximation factors \cite{OR10,BHI18,BHN16,BS16,BCH17,RSW22,BK22,GSSU22,Kiss23}. For example, it is known how to maintain maximal matchings (and hence $1/2$-approximate maximum matchings) in amortized polylogarithmic (even constant) time \cite{BGS18,S16}. Recent works have improved this to a $(2-\sqrt{2})$-approximation in polylogarithmic time \cite{B23,BKSW23}. However, the algorithms can only maintain the \emph{size} of the matching again.

Finally, there are algorithms that maintain an exact matching size in fully dynamic graphs in subquadratic $O(n^{1.407})$ time \cite{S07,BNS19}. These algorithms are all based on fast matrix multiplication, to the author's knowledge. Recently, \cite{BC24} studied fully dynamic vertex cover and matching in certain geometric graphs.

\paragraph{Matching in other settings.} There has been significant recent work on partially dynamic matching, where the graph is either incremental (only undergoes edge insertions) or decremental (only undergoes edge deletions). In these settings, $(1-\eps)$-approximation algorithms with amortized runtime $\O(\eps^{-O(1)})$ are known \cite{BLSZ14,D16,Gupta14,BGS20,ABD22}, and recent works have even reduced the dependence on $m$ (hidden by the $\O(\cdot)$) \cite{GLSSS19,BK23esa}, and $\eps$ \cite{BKS23lp}. It is worth mentioning the works \cite{JJST22,BKS23lp}, which maintain a fractional matching incrementally/decrementally by using optimization methods such as multiplicative weights or entropy-regularized optimal transport. Similarly, our algorithms and reductions are based on multiplicative weights algorithms for matching and vertex cover.

Several dynamic matching algorithms are based on \emph{sublinear} matching algorithms.
It is known via standard reductions (see \cref{lemma:contract}) that a $(1-\eps)$-approximate sublinear matching size estimator in time $T(n)$ implies a $(1-\eps-\delta)$-approximate dynamic matching algorithm with amortized time $O_{\eps,\delta}(T(n)/n)$. \cite{BKS23b} in fact designs faster $(1-\eps)$-approximate sublinear matching algorithms, and \cite{BKSW23,B23} are based on $2$-approximate sublinear matching algorithms with $\O(n)$ runtime \cite{B21}. It should be mentioned that there is a recent lower bound showing that any algorithm which estimates the matching size to within $(2/3+\eps)$ requires at least $n^{1.2-o(1)}$ queries \cite{RSW22}. This implies that the approach of using sublinear matching to give $(1-\eps)$-approximate dynamic matching cannot go below $\O(n^{0.2})$ time per update.

\paragraph{Fine-grained complexity.} There is some evidence that improving over $O(\sqrt{m})$ for maintaining an approximate matching in fully dynamic graphs may be difficult. Indeed, there are conditional lower bounds based on the $\mathsf{3SUM}$ and $\OMv$ conjectures for fully dynamic matching algorithms that maintain a matching without length $O(1)$-augmenting paths \cite{AW14,HKNS15,KPP16}. Having no short augmenting paths is a natural way to ensure that the algorithm maintains a $(1-\eps)$-approximate matching. However, the algorithm in \cite{GP13} does not satisfy this property, and more generally, it is unlikely that an algorithm based on ideas from optimization (such as ours) would satisfy this either. Also, it is known that maintaining a matching \emph{exactly} even in incremental or decremental graphs requires $\Omega(n^{1-o(1)})$ amortized time under the $\OMv$ conjecture.

Recently, the NFA acceptance hypothesis was proposed in \cite{BGKL24}. The conjecture is that for any two Boolean matrices $M_0, M_1$, sequence $(b_1, \dots, b_n) \in \{0, 1\}^n$, and $v \in \{0, 1\}^n$, computing the Boolean product $M_{b_1}M_{b_2} \dots M_{b_n} v$ requires at least $n^{3-o(1)}$ time. If true, it immediately implies the $\OMv$ conjecture.

\subsection{Preliminaries}
\label{sec:prelim}

\paragraph{General notation.} We let $[m] = \{1, 2, \dots, m\}$. Let $\vec{0}$ and $\vec{1}$ denote the all-zero and all-ones vectors respectively. We let $T(a, b, c)$ be the runtime needed to multiply a $a \times b$ by $b \times c$ matrix. For vectors $a, b \in \R^n$ we let $a \circ b \in \R^n$ be the vector with entries $(a \circ b)_i := a_ib_i$.
Similarly, we let $a^{-1}$ denote the entry-wise inverse of a vector. We denote $S' \sim_p S$ to denote that $S'$ is a random subset of $S$, where each element in $S$ is included in $S'$ independently with probability $p$.

\paragraph{Graphs.} We let $G = (V = L \cup R, E)$ denote a bipartite graph. Let $\mu(G)$ denote the maximum matching size of $G$. For $A \subseteq L, B \subseteq R$, we let $G[A, B]$ be the induced graph with vertex set $A \cup B$ and edges $\{ (a, b) : a \in A, b \in B, (a, b) \in E(G)\}$. For a graph $H$, we let $\deg^H(v)$ be the degree of $v$ in $H$. For a vertex $v \in V$, let $N(v)$ be the neighbors of $v$, and $N_E(v)$ be the set of neighboring edges to $v$.
We say that a matching $M$ of $G$ is a $(c,\delta n)$-approximate matching of $G$ if $|M| \ge c \cdot \mu(G) - \delta n$. We abbreviate $(c, 0)$-approximate to $c$-approximate for brevity sometimes.

For a bipartite graph $G = (V, E)$ we say that a vector $x \in \R_{\ge0}^E$ is a \emph{fractional matching} if for all $v \in V$, $\sum_{e = (u, v) \in E} x_e \le 1$. In this case, the value of the fractional matching is $\langle \vec{1}, x \rangle = \sum_{e \in E} x_e$. If a graph has a fractional matching of value $\nu$, it also has an integral matching with $\lceil \nu \rceil$ edges. Similarly, a fractional vertex cover $\phi \in \R_{\ge0}^V$ satisfies that $\phi(u) + \phi(v) \ge 1$ for all $(u, v) \in E$, and has value $\langle \vec{1}, \phi \rangle$. If there is a fractional vertex cover with value at most $\nu$, there is a integral vertex cover with value at most $\lfloor \nu \rfloor$.

\subsection{Overview of matching algorithms}
\label{overview:matching}

\subsubsection{Multiplicative weights framework}
\label{subsubsec:mwumatch}

Maximum matching is an instance of a packing linear program. Let us consider the decision version: for a constant $c$, determine whether there is a vector $x \in \R_{\ge0}^E$ satisfying $\sum_{e \in E} x_e = 1$ and $\sum_{e \in N_E(v)} x_e \le c$ for all $v \in V$. The minimal value of $c$ where this is feasible is $c = 1/\mu(G)$, which is achieved by setting $x^*_e = 1/\mu(G)$ for $e \in M$ for some maximum matching $M$ of size $\mu(G)$, and $x^*_e = 0$ for $e \notin M$. It suffices to solve this decision version, by trying all values $c = (1+\eps)^{-i}$ for $0 \le i \le O(\eps^{-1}\log n)$. For simplicity in this overview, let us set $c = 1/\mu(G)$.

We review the analysis of a multiplicative weight update (MWU) algorithm for this packing linear program. The algorithm will use $\O(\eps^{-O(1)})$ iterations. We then discuss how to implement this algorithm in two settings: offline, and using an approximate, dynamic $\OMv$ algorithm (\cref{def:dynapproxomv}). Set $\lambda = \delta \mu(G)$, where $\delta = (\eps/\log n)^{O(1)}$. For $x \in \R^n$ let $f_v(x) := \sum_{e \in N_E(v)} x_e$, and consider the potential $\Phi(x) := \sum_{v \in V} \exp(\lambda f_v(x)).$ In one iteration of the MWU, the algorithm wishes to find a vector $\Delta \in \R_{\ge0}^E$ satisfying the following:
\begin{enumerate}
    \item \label{item:1} $\sum_{e \in E} \Delta_e = 1$.
    \item \label{item:2} $\sum_{e = (u,v) \in E} (\exp(\lambda f_u(x)) + \exp(\lambda f_v(x))) \Delta_e \le (1+\eps)/c \cdot \Phi(x)$.
    \item \label{item:3} For all $v \in V$, $\sum_{e \in N_E(v)} \Delta_e \le \eps/\lambda$.
\end{enumerate}
The final condition says that the \emph{width} of the solution $\Delta$ is small. Given such a $\Delta$, the algorithm updates $x \gets x+\Delta$. After $T$ iterations, the algorithm outputs the vector $\frac{1}{T}x$.

Overall, standard MWU analyses show that if there is a matching of size $\mu(G)$ in $G$, then running the above algorithm for $T = \O(\eps^{-O(1)})$, returns a fractional matching of quality at least $(1-\eps)\mu(G)$ after scaling. In our algorithms, $\Delta$ will be $n$-sparse during each iteration, and thus the fractional matching has at most $O(nT)$ nonzeros, and thus can be rounded to an integral matching efficiently.

It remains to discuss why such a sparse $\Delta$ exists. In fact, one can take $\Delta$ be exactly be the vector $x^*$, the indicator vector on the maximum matching, scaled down by $\mu(G)$. In the following two sections, we discuss how to implement a single iteration of this MWU in their respective settings.

\subsubsection{Offline matching}
By a nice reduction by \cite{Kiss23}, we may assume that $\mu(G) = \Theta(n)$ (see \cref{lemma:contract}). We describe a general strategy for finding $\Delta$ satisfying the above three properties. Let
\[ \hat{E} := \left\{ e = (u,v) \in E : \exp(\lambda f_u(x)) + \exp(\lambda f_v(x)) \le (1+\eps)/c \cdot \Phi(x) \right\}. \]
Clearly, if $\Delta$ is supported on $\hat{E}$, then \cref{item:2} is true. Additionally, note that at most $(1-\eps)|M|$ edges in a matching $M$ violate \cref{item:2} by Markov's inequality. Thus, $\hat{E}$ contains at least $\eps|M|$ edges in $M$, and thus contains a matching of that size. We will ultimately choose $\Delta$ to be supported on a matching of size at least $\Omega(\eps|M|)$, scaled so that $\sum_{e \in E} \Delta_e = 1$. Then, it is clear that \cref{item:3} is also satisfied.

\paragraph{Reducing to maximal matching on induced subgraphs.} It remains to construct such a $\Delta$, which we do formally in \cref{lemma:inducedtomoracle}. For this, we will essentially find a maximal matching supported on $\hat{E}$. Partition the vertices in $L = L_1 \cup \dots \cup L_t$, and $R = R_1 \cup \dots \cup R_t$, where vertices within the same $L_i$ or $R_i$ have values $f_v(x)$ differing by additive $\eps/\lambda$, so that $\exp(\lambda f_v(x))$ differ by multiplicative $(1+\eps)$. One can show that $t \le \O(\eps^{-O(1)})$. We will build a maximal matching by iterating over subgraphs $G[L_i, R_j]$. Let the current maximal matching supported on $\hat{E}$ that we are maintaining involve all vertices in $\bar{A} \cup \bar{B}$, and let $A = L \setminus \bar{A}, B = R \setminus \bar{B}$. Thus, it suffices to find a maximal matching on $G[L_i \cap A, R_j \cap B]$.

\paragraph{Random sampling to reduce degrees.} This is done in \cref{lemma:highdegree}. Let $A' = L_i \cap A, B' = R_j \cap B$, so that we wish to find a maximal matching on $G[A', B']$. We go through the vertices $a \in A'$ one by one. We randomly sample $\O(n/D)$ vertices $b \in B'$ for a parameter $D$, trying to find an edge. If we find an edge, add it to the maximal matching and update $A', B'$. Otherwise, we conclude that the degree of $a$ is at most $D$, with high probability.

\paragraph{Locating edges using matrix multiplication.} We have reduced to the case where $G[A', B']$ satisfies that all vertices $a \in A'$ have degree at most $D$. We will find all these edges in amortized subquadratic time, done formally in \cref{lemma:lowdegree}. 

Recall that in the offline setting, we are solving many instances of this problem in parallel. More concretely, let $G_1, \dots, G_{\tau}$ be graphs such that $G_i$ is the graph after $(i-1) \cdot \eps n$ updates. Because we are assuming that $\mu(G_i) = \Theta(n)$, it suffices to solve the problem on the $G_i$ only.
Also, note that $G_i$ and $G_1$ differ in at most $O(\tau n)$ edges. Our goal is to locate all the edges of the graphs $G_i[A_i', B_i']$, for $i = 1, \dots, \tau$. To do this, we leverage a trick from \cite{WX20} to subsample $A_i'$ to reduce to $\O(D)$ instances where the degrees in $A'$ are $0$ or $1$, and then use Boolean matrix multiplication to find edges in $G_1[A_i', B_i']$ (see \cref{claim:degreeone}). Because $G_1$ and $G_i$ differ in $O(n\tau)$ edges, we can update our edges sets in $O(n\tau)$ time per $i$. Trading off $\tau, D$ properly gives \cref{thm:mainupper}.

\subsubsection{Reducing matching to dynamic $\OMv$}
\label{subsubsec:mwumatch3}
From the above discussion, the only part that used that the dynamic matching problem was offline was in the last paragraph. In this case, we can similarly reduce to finding a maximal matching in an induced subgraph $G[A', B']$, where $G$ is changing dynamically, and the degrees in $A'$ are at most $D$. In fact, it suffices if the maximal matching size also has additive error $O(\delta n)$, for some $\delta = (\eps/\log n)^C$ for sufficiently large $C$. This would only decrease the overall matching size by $\delta n$ per subgraph $G[A', B']$ processed in the algorithm, of which there are $\O(\eps^{-O(1)})$, because there are at most that many pairs $(L_i, R_j)$. Now, using the same trick from \cite{WX20}, we can subsample vertices in $B'$ to reduce to the case where vertices in $A'$ have degree $0$ or $1$, and repeat this $D$ times. Then, we can call an $(1-\gamma)$-approximate $\OMv$ query (\cref{def:dynapproxomv}) to locate these edges, making $O(\gamma n)$ errors. Since this is repeated $D$ times, the total number of errors is $O(\gamma Dn)$. To conclude, we choose $D$ properly in terms of $\gamma, \eps$.

The reason we must use a dynamic approximate $\OMv$ algorithm, and not just an approximate $\OMv$ algorithm is because the graph $G$ on which we make these queries changed due to the updates to the dynamic matching algorithm. We do not know how to simulate a dynamic approximate $\OMv$ algorithm with approximate $\OMv$. On the other hand, it is true that an analogous dynamic $\OMv$ algorithm (without approximations), and standard $\OMv$, are actually equivalent. We discuss this below in \cref{overview:vertexcover} on vertex cover.

The reduction discussed in this section (the only if direction of \cref{thm:equivmatch}) implies in particular that dynamic $\OMv$ (without errors) with subquadratic query time implies a dynamic matching algorithm with sublinear update time. By applying the $\OMv$ algorithm of \cite{LW17} which has amortized query time $n^2/2^{\Omega(\sqrt{\log n})}$, we deduce \cref{thm:onlineupperm}.

\subsubsection{Reducing dynamic $\OMv$ to matching}

\paragraph{Implementing approximate matching queries on induced subgraphs.} $\OMv$ can be thought of as trying to find edges in induced subgraphs of bipartite graphs. Formally, if one wishes to compute the Boolean product $u^\top Mv$ for vectors $u, v$, this is equivalent to being able to decide given a bipartite graph $G = (L \cup R, E)$, whether the induced subgraph $G[A, B]$ for some $A \subseteq L, B \subseteq R$ has at least one edge ($A, B$ are the set of $1$'s in $u, v$ and $M$ is the adjacency matrix of $G$). Thus, our first step is to argue why dynamic matching allows us to implement approximate matching queries on induced subgraphs of $G$ efficiently. Interestingly, previous works used the reverse direction \cite{M05,BKS23b}, and argue that an algorithm that solves approximate matchings on induced subgraphs efficiently can be boosted to one that gives $(1-\eps)$-approximate matchings in $O_{\eps}(1)$ rounds.

Let $\cA$ be a $(1-\eps)$-approximate dynamic matching algorithm for a dynamic graph $G$. We can implement a $(1, O(\eps n))$-matching algorithm on $G[A, B]$ for sets $A, B$ by making $O(n)$ edge insertions to $G$ in the following way. For each vertex $a \in L \setminus A$, feed to $\cA$ an edge insertion from $a$ to a unique isolated vertex for each $a$, and the same for vertices $b \in R \setminus B$. Let this new graph be $G^+$. It is easy to prove that any $(1-\eps)$-approximate matching on $G^+$ can be converted to a $(1, O(\eps n))$-approximate maximum matching on $G[A, B]$. Formally, we have the following.

\begin{lemma}
\label{lemma:induced}
Let $G = (V = L \cup R, E)$ be a bipartite graph, and $A \subseteq L, B \subseteq R$. Let $G^+$ be the graph $G$ with additional edges $(a, a')$ for each $a \in L \setminus A$, $(b', b)$ for each $b \in R \setminus B$, where $a', b'$ are all distinct. Let $M$ be a $(1-\eps)$-approximate matching on $G^+$. Then $M[A, B]$ is a $(1,2\eps n)$-approximate matching on $G[A, B]$.
\end{lemma}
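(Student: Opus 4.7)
The plan is to control $\mu(G^+)$ from below in terms of $\mu(G[A,B])$, and $|M|$ from above in terms of $|M[A,B]|$, and then just combine the two.

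First, I would establish the lower bound
\[ \mu(G^+) \;\ge\; \mu(G[A,B]) + |L \setminus A| + |R \setminus B|. \]
This is witnessed by an explicit matching: take any maximum matching on $G[A,B]$, and additionally include all pendant edges $(a,a')$ for $a \in L \setminus A$ and $(b',b)$ for $b \in R \setminus B$. Since the vertices $a', b'$ are fresh and distinct, the union is a matching in $G^+$ of exactly the stated size.

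Next, I would prove the complementary upper bound that for \emph{any} matching $M$ in $G^+$,
\[ |M| \;\le\; |M[A,B]| + |L \setminus A| + |R \setminus B|. \]
Decompose $M = M[A,B] \sqcup M_c \sqcup M_n$, where $M_c$ consists of edges of $G$ with an endpoint in $L \setminus A$ or $R \setminus B$, and $M_n$ consists of edges of the form $(a,a')$ or $(b',b)$. Every edge in $M_c \cup M_n$ occupies at least one vertex of $L \setminus A$ or $R \setminus B$, and since $M$ is a matching these occupied vertices are distinct, so $|M_c| + |M_n| \le |L \setminus A| + |R \setminus B|$, giving the claim.

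Finally, combining the two inequalities with the assumption $|M| \ge (1-\eps)\mu(G^+)$,
\[ |M[A,B]| \;\ge\; |M| - (|L \setminus A| + |R \setminus B|) \;\ge\; (1-\eps)\mu(G[A,B]) - \eps(|L \setminus A| + |R \setminus B|) \;\ge\; \mu(G[A,B]) - 2\eps n, \]
where the last step uses $\mu(G[A,B]) \le n$ and $|L \setminus A| + |R \setminus B| \le n$. This is exactly the $(1, 2\eps n)$-approximation claim, and also verifies that $M[A,B]$ is indeed a matching in $G[A,B]$ (which is immediate since $M$ is a matching and $M[A,B]$ only keeps its edges with both endpoints in $A \cup B$, which lie in $G[A,B]$ by definition).

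There isn't really a hard step here; the only thing to be careful about is the accounting in the upper bound on $|M|$, specifically that pendant edges in $M_n$ and crossing edges in $M_c$ compete for the same vertex budget on $L \setminus A$ and $R \setminus B$, which is what allows a single $|L \setminus A| + |R \setminus B|$ term to absorb both.
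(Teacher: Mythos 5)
Your proof is correct and takes essentially the same route as the paper: bound $\mu(G^+)$ from below by $\mu(G[A,B]) + |L\setminus A| + |R\setminus B|$, bound $|M| - |M[A,B]|$ from above by the same quantity, and combine. One small slip in the final justification: the claim that $|L\setminus A| + |R\setminus B| \le n$ is not true in general (it can be as large as $2n$, e.g.\ when $A=B=\emptyset$). What actually makes the last step go through is the combined bound $\mu(G[A,B]) + |L\setminus A| + |R\setminus B| \le 2n$, which holds because $\mu(G[A,B]) \le |A|$ gives $\mu(G[A,B]) + |L\setminus A| \le |L| = n$ and separately $|R\setminus B| \le n$; this is also just $\mu(G^+) \le 2n$, which is the bound the paper invokes directly.
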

\begin{proof}
Given a matching $M$ on $G^+$, it can be converted into a matching $M'$ with $|M'| \ge |M|$, and which uses all edges $(a, a')$, $(b', b)$, in $O(n)$ time. Indeed, if $a$ is matched to $\bar{a}$, then replace $(a, \bar{a})$ with $(a, a')$. Thus, $\mu(G^+) = \mu(G[A,B]) + (n-|A|) + (n-|B|)$, and if $M$ is the returned matching, $|M| \ge (1-\eps)\mu(G^+) \ge \mu(G^+) - 2\eps n$, because $\mu(G^+) \le 2n$. Also,
$|M[A, B]| \ge |M| - (n - |A|) - (n - |B|) \ge \mu(G[A, B]) - 2\eps n$, where the first inequality follows because any edge in $M$, but not in $M[A, B]$, must involve at least one vertex in $(L \setminus A) \cup (R \setminus B)$.
\end{proof}

\paragraph{Solving approximate-$\OMv$: } Assume that $\cA$ solves $(1-\eps)$-dynamic matching in amortized $\cT := n^{1-c} \eps^{-C}$ time, for any $\eps$. We will use $\cA$ to give an algorithm for dynamic $O(D\eps)$-approximate-$\OMv$ (\cref{def:dynapproxomv}) with update time $\cT$ and query time $\O(n^2/\sqrt{D} + n\cT)$, for any $D$. For $D = \eps^{-1/2}$, this implies a dynamic $(1-O(\sqrt{\eps}))$-approximate $\OMv$ algorithm with amortized query time $\O(\eps^{1/4} n^2 + n^{2-c}\eps^{-C})$, which is subquadratic for a proper choice of $\eps$.

We now explain the reduction. Pass edge updates directly to $\cA$. Now, let $M$ be the adjacency matrix of bipartite graph $G$, and let the query vector $v$ be the indicator of a set $B \subseteq R$. Computing $Mv$ is equivalent to determining which vertices in $L$ are adjacent to some vertex in $B$, and we are allowed some errors.

\paragraph{Reducing degrees.} Let $w$ be our output vector. We want that $w$ agrees with $Mv$ in many coordinates. To start, pick $O(n^2/D_L \cdot \log n)$ random pairs $(\ell, b) \in L \times B$. If $(\ell, b)$ is an edge, set $w_{\ell} = 1$ permanently. Let $A$ be the set of $\ell \in L$ where $w_{\ell} = 0$ still. Then with high probability, for all $a \in A$, $\deg^{G[L, B]}(a) \le D_L$, i.e., the degree of $a$ to vertices in $B$ is at most $D_L$. The runtime of this step is $\O(n^2/D_L)$.

We now perform a similar procedure to vertices in $B$. We wish to determine approximately which vertices in $b \in B$ have $\deg^{G[A,B]}(b) \ge D_R$. To do this, choose $O(n^2/D_R \cdot \log n)$ random pairs $(a, b) \in A \times B$. If $(a, b)$, then loop over all neighbors $a'$ of $b$, and mark $w_{a'} = 1$ permanently. Note that this step takes $O(n)$ time, and not $O(\deg^{G[A,B]}(b))$ time, because we do not have query access to the adjacency list of $G[A, B]$. Then, remove $b$ from $B$. Because $\deg^{G[L, B]}(a) \le D_L$, out of the $O(n^2/D_R \cdot \log n)$ random pairs, only $\O(nD_L/D_R)$ will be edges. Thus, this step takes $\O(n^2D_L/D_R)$ time. Because any vertex $b \in B$ with $\deg^{G[A,B]}(b) \ge D_R$ is found with high probability, after this phase, we have reduced to solving the problem on $G[A, B]$, which has maximum degree $D_R$.

\paragraph{Peeling off matchings.} Let us set $D_R := D$ and $D_L = \sqrt{D}$, so that $G[A, B]$ has maximum degree $D$, and the previous step took $\O(n^2/\sqrt{D})$ time. Now consider the induced subgraph $G[A, B]$. If it has at most $O(D \eps n)$ edges, we terminate and return $w$. This may cause us to set $w_a = 0$ incorrectly for $O(D \eps n)$ edges. Thus, the output is valid for $O(D\eps)$-approximate-$\OMv$. Otherwise,
$G[A, B]$ has more than $O(D \eps n)$ edges, and hence has a matching of size at least $O(\eps n)$, because the maximum degree of $G[A,B]$ is at most $D$. Use algorithm $\cA$ to find a $(1, O(\eps n))$-approximate matching on $G[A, B]$ using the algorithm of \cref{lemma:induced}.
The matching will have at least $\eps n$ edges, say involving vertices $A' \subseteq A$. Set $w_a = 1$ for all $a \in A'$, and update $A \gets A \setminus A'$. Because $A$ is decreasing, this phase can be implemented by making $O(n)$ edge insertion calls to $\cA$, because we can implement induced subgraph queries with edge insertions as discussed above (\cref{lemma:induced}). Thus, the total time is $O(n \cT)$. The formal argument is in \cref{subsec:ifmatch}.

\subsection{Overview of vertex cover algorithms}
\label{overview:vertexcover}

\subsubsection{Multiplicative weights framework}
There is a dual covering linear program for the vertex cover problem, which we formally cover in \cref{subsec:mwuvertexcover}. Its decision version can be stated as determining whether there is a vector $y \in \R_{\ge0}^V$ with $\sum_{v \in V} y_v = 1$ and $y_u + y_v \ge c$ for all edges $(u, v) \in E$. Note that the maximum possible value of $c$ is $1/\mu(G)$.
Similarly to matching, we can define the potential function
\[ \Phi(y) := \sum_{(u,v) \in E} \exp(-\lambda(y_u+y_v)), \] for $\lambda = (\eps/\log n)^{O(1)}\mu(G).$ In this setting, the potential will decrease over the course of the algorithm. Each iteration, the algorithm will try to find as many vertices $v \in V$ as possible where \[ \sum_{u \in N(v)} \exp(-\lambda(y_u+y_v)) = \exp(-\lambda y_v) \sum_{u \in N(v)} \exp(-\lambda y_u) \] is above some threshold, the analogue of \cref{item:2} above. Formally, see \cref{def:voracle}. Below, we discuss how to solve this subproblem in the various settings we consider.

\subsubsection{Offline vertex cover}
Once again, for simplicity we assume that $\mu(G) = \Theta(n)$. Formalizing this is actually quite challenging, for the following reason. In the case of matching, previous works \cite{Kiss23,BKS23b} showed that random contractions preserved the matching with high probability (\cref{lemma:contract}). Critically, a matching in the contracted graph can be lifted to a matching on the original graph trivially. This is much less evident for vertex cover: if $a, b$ are contracted into the same vertex, which is included in a vertex cover, how do we create a vertex cover on the uncontracted graph? However, we are able to give such a reduction against oblivious adversaries, while the reduction in \cref{lemma:contract} is against adaptive adversaries. It's worth noting that if our only goal was to achieve runtimes of the form $n^{1-c}\eps^{-O(1)}$ for some positive constant $c$, this reduction is not necessary. However, we find the reduction to be interesting, and it also improves the runtime. We discuss further in \cref{subsec:vtxomv}.

Given this, our offline vertex cover algorithm is simple. We can wait $\eps n$ updates between computing the vertex cover, since we are assuming the size is $\Theta(n)$. Now, over multiple subproblems, we can use fast matrix multiplication to evaluate all the quantities $\sum_{u \in N(v)} \exp(-\lambda y_u)$: this is multiplying the adjacency matrix of $G$ by a vector. As before, the graph $G$ changes over time, but over $\tau$ subproblems we wish to solve, the number of edges differs by at most $O(n\tau)$, so we can update the contribution of these edges directly.

\subsubsection{Reducing vertex cover to $\OMv$}
\label{subsec:vtxomv}
We first show how to reduce vertex cover to dynamic $\OMv$, i.e., approximate dynamic $\OMv$ without errors. In the case where $\mu(G) = \Theta(n)$, this is similar to the previous section, except to compute $\sum_{u \in N(v)} \exp(-\lambda y_u)$ we cannot use a real-valued matrix-vector product. Instead, we estimate this sum using Boolean matrix-vector and random sampling. Once again, partition vertices by grouping vertices with similar $y_u$ values. Thus, our problem reduces to estimating $|N(v) \cap S|$ for all $v$ and a fixed subset $S$. If this quantity is at least $D$, then we can estimate it in $O(\eps^{-O(1)}n/D)$ random samples per vertex. Otherwise, we use the dynamic $\OMv$ oracle to locate all the remaining $O(nD)$ edges, similar to as done in the matching case.

It remains to discuss how to reduce dynamic $\OMv$ to static $\OMv$, and why it morally suffices to consider the case $\mu(G) = \Theta(n)$. For the latter, it is known the amortized cost of updates during times when $\mu(G) \approx k$ is $\O(k \eps^{-2})$, from work of \cite{GP13}. Their idea is that one can maintain a dynamic graph with $O(k^2)$ edges whose vertex cover is identical to the graph $G$.
This runtime is sublinear unless $k \approx n$.

To reduce dynamic $\OMv$ to $\OMv$ (which we do formally in \cref{lemma:dynomv}), the idea is to split the graph $G$ into many subgraphs of size $n^{\alpha} \times n^{\alpha}$. We initialize an $\OMv$ algorithm on each of them. When a subgraph undergoes an edge update, we mark it, and in the future we brute force the matrix-vector products on that subgraph in $O(n^{2\alpha})$ time. When a large fraction of the subgraphs undergo an edge update, we reinitialize.

The reduction in this section proves the if direction of \cref{thm:equivvtx}. Once again, by applying the $\OMv$ algorithm of \cite{LW17} with this section, we deduce \cref{thm:onlineupperv}, a faster fully dynamic $(1+\eps)$-approximate vertex cover algorithm.

\subsubsection{Reducing $\OMv$ to vertex cover}
Similar to matching, we can use a dynamic vertex cover algorithm to implement calls to approximate vertex cover on induced subgraphs.
\begin{lemma}
\label{lemma:inducedvertex}
Let $G = (V = L \cup R, E)$ be a bipartite graph, and $A \subseteq L, B \subseteq R$. Let $G^+$ be the graph $G$ with additional edges $(a, a')$ for each $a \in L \setminus A$, $(b', b)$ for each $b \in R \setminus B$, where $a', b'$ are all distinct. Let $C$ be a $(1+\eps)$-approximate vertex cover on $G^+$. Then $C \cap (A \cup B)$ is a $(1, 2\eps n)$-approximate vertex cover on $G[A, B]$.
\end{lemma}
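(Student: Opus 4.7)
The plan is to mirror the structure of the proof of \cref{lemma:induced} but on the cover side, using König's theorem (or a direct argument) to decompose $\tau(G^+)$ cleanly into the cost of covering $G[A,B]$ and the cost of covering the pendant edges. The key observation is that the added pendant edges $(a, a')$ for $a \in L \setminus A$ and $(b', b)$ for $b \in R \setminus B$ are vertex-disjoint from each other and from $G[A, B]$, because the pendant endpoints are all fresh, and the ``internal'' endpoints lie in $(L \setminus A) \cup (R \setminus B)$, which is disjoint from $A \cup B$.

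First I would show the identity
\[ \tau(G^+) = \tau(G[A,B]) + (n - |A|) + (n - |B|). \]
The $\le$ direction follows by taking any minimum vertex cover of $G[A, B]$ and adding one endpoint from each pendant edge. For $\ge$, I would observe that any vertex cover $C'$ of $G^+$ splits as $C' = (C' \cap (A \cup B)) \sqcup (C' \setminus (A \cup B))$; the first piece must cover every edge of $G[A,B]$, contributing $\ge \tau(G[A,B])$ vertices, while the second piece must cover every pendant edge (since no pendant edge touches $A \cup B$), contributing $\ge (n-|A|) + (n-|B|)$ vertices.

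Now for the given $(1+\eps)$-approximate cover $C$ of $G^+$, validity of $C \cap (A \cup B)$ on $G[A,B]$ is immediate: every edge of $G[A,B]$ is an edge of $G^+$, hence covered by $C$, and both its endpoints already lie in $A \cup B$. For the size bound, I would plug the decomposition and the lower bound $|C \setminus (A \cup B)| \ge (n-|A|) + (n-|B|)$ into
\[ |C \cap (A \cup B)| = |C| - |C \setminus (A \cup B)| \le (1+\eps)\tau(G^+) - (n-|A|) - (n-|B|) = \tau(G[A,B]) + \eps\, \tau(G^+). \]
Finally, I would bound $\tau(G^+) \le 2n$ using $\tau(G[A,B]) \le \min(|A|,|B|)$, which gives $\tau(G^+) \le \min(|A|,|B|) + (n-|A|) + (n-|B|) \le 2n$, yielding $|C \cap (A \cup B)| \le \tau(G[A,B]) + 2\eps n$ as desired.

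There is no real obstacle here; the only subtlety is making sure the pendant endpoints are genuinely outside $A \cup B$ so that the decomposition $|C| = |C \cap (A \cup B)| + |C \setminus (A \cup B)|$ cleanly separates the ``useful'' cost from the ``forced'' cost, and ensuring the additive error comes out as $2\eps n$ rather than something larger (this is where the bound $\tau(G^+) \le 2n$ is used, rather than the naive $\tau(G^+) \le $ total vertex count of $G^+$, which is up to $4n$).
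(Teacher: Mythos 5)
Your proof is correct and takes essentially the same approach as the paper: the same decomposition $\tau(G^+)=\tau(G[A,B])+(n-|A|)+(n-|B|)$ and the same bound $\tau(G^+)\le 2n$. The only cosmetic difference is that the paper first normalizes $C$ so that $(L\setminus A)\cup(R\setminus B)\subseteq C$ and no pendant endpoint lies in $C$ (making $|C\setminus(A\cup B)|$ exactly $(n-|A|)+(n-|B|)$), whereas you lower-bound $|C\setminus(A\cup B)|$ directly by noting every pendant edge is disjoint from $A\cup B$ and the pendant edges are vertex-disjoint; you also work purely with vertex covers, implicitly avoiding the appeal to K\H{o}nig's theorem that the paper makes by phrasing things via $\mu$.
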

\begin{proof}
We may assume that $(L \setminus A) \subseteq C$ and $(R \setminus B) \subseteq C$. Indeed, consider an edge $(a, a')$ for $a \in L \setminus A$. If $a' \in C$, then we can simply replace it with $a$ (this is only better).
Because $\mu(G^+) = 2n-|A|-|B|+\mu(G[A, B])$, we know that $|C \cap (A \cup B)| = |C| - (2n-|A|-|B|) \le (1+\eps)\mu(G^+) - (2n-|A|-|B|) \le \mu(G[A,B]) + 2\eps n$, as desired.
\end{proof}
Towards using this oracle to solve $\OMv$, recall that there is a subcubic time $\OMv$ algorithm over $n$ queries if and only if there is a subcubic time $\uMv$ algorithm over $n$ queries (as shown in \cite{HKNS15}): given subsets $A, B$, determine whether $G[A, B]$ has at least one edge. To solve this, use \cref{lemma:inducedvertex}. If $G[A, B]$ is empty, then $C \cap (A \cup B)$ has at most $2\eps n$ vertices. We can brute force all edges with at least one endpoint in $C$ in time $O(\eps n^2)$. Setting $\eps = n^{-\delta}$ completes the algorithm.

\paragraph{Outer loops not based on MWU.} There are many outer iterative procedures we could have based our algorithms on. For example, \cite{BKS23b} uses a combinatorial outer loop of \cite{M05} which essentially reduces maximum matching to $\eps^{-O(1/\eps)}$ instances of finding a constant-factor approximate maximum matching an induced subgraph $G[A, B]$ which is guaranteed to have a matching of size at least $\Omega_{\eps}(n)$. We use MWU over this because we wish for our algorithms to have polynomial dependence on $\eps$, instead of exponential. Another optimization algorithm for solving maximum matching, vertex cover, and more generally, optimal transport, is the Sinkhorn algorithm \cite{Cut13,AWR17}. While this algorithm also have polynomial dependence on $\eps$, its iterations require estimating row and column sums of a matrix undergoing row and column rescalings. While this would suffice for our offline algorithms, with the same bounds up to $\eps^{-O(1)}$ factors, it does not interact as nicely with the setup of \cref{thm:equivmatch}. This is because it is difficult to implement an iteration of the Sinkhorn algorithm with a dynamic approximate $\OMv$ oracle: if there is a row in the matrix with a single large entry, we cannot afford to simply ignore it, while we can in the MWU algorithm described above in \cref{subsubsec:mwumatch3}.

\section{Dynamic Bipartite Matching}
\label{sec:matching}

\subsection{Multiplicative weights framework}
\label{subsec:mwumatching}
In this section, we will give a multiplicative weights procedure for finding an approximately maximum fractional matching. We consider the decision version of the maximum matching problem on a bipartite graph $G = (V, E)$: determine whether there is a vector $x \in \R_{\ge0}^E$ satisfying $\sum_{e \in E} x_e = 1$ and $\sum_{e \in N_E(v)} x_e \le c$ for all $v \in V$. Set $f_v(x) := \sum_{e \in N_E(v)} x_e$, a parameter $\lambda$, and define the potential function $\Phi(x) := \sum_{v \in V} \exp(\lambda f_v(x))$. Let $x^{(0)} = 0$ so that $\Phi(x^{(0)}) \le n^2$.

We abstractly define an oracle that given $x$, finds a vector $\Delta \in \R_{\ge0}^E$ to add that does not increase the potential significantly.
\begin{definition}[MWU oracle]
\label{def:moracle}
Given graph $G = (V, E)$, $x \in \R_{\ge0}^E$, and $c, \lambda, \eps > 0$, we say that $\Delta$ is the output of an \emph{MWU oracle} for matching if it satisfies the following:
\begin{enumerate}
    \item \label{item:m1} $\sum_{e \in E} \Delta_e = 1$.
    \item (Value) \label{item:m2} $\sum_{e = (u,v) \in E} \Delta_e(\exp(\lambda f_u(x)) + \exp(\lambda f_v(x))) \le (1+\eps)c \cdot \Phi(x).$
    \item (Width) For all $v \in V$, $\sum_{e \in N_E(v)} \Delta_e \le \eps/\lambda.$ \label{item:m3}
\end{enumerate}
\end{definition}
We show that calling the oracle $T = \O(\eps^{-O(1)})$ times gives a fractional matching with value at least $(1-\eps)/c$. We start by bounding the potential increase in one iteration.
\begin{lemma}
\label{lemma:phiincrease}
Let $\Delta$ be an output of the MWU oracle. Then $\Phi(x+\Delta) \le (1 + (1+3\eps)\lambda c)\Phi(x)$.
\end{lemma}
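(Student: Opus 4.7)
The plan is to expand $\Phi(x+\Delta)$ by multiplicativity of the exponential across the sum defining $f_v$, namely
\[ \Phi(x+\Delta) = \sum_{v \in V} \exp(\lambda f_v(x)) \exp(\lambda f_v(\Delta)), \]
then bound $\exp(\lambda f_v(\Delta))$ via a first-order linearization, and finally convert the resulting sum over vertices into a sum over edges so that the value condition \cref{item:m2} can be invoked.

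In more detail, the width condition \cref{item:m3} guarantees $f_v(\Delta) \le \eps/\lambda$, so $z := \lambda f_v(\Delta) \in [0,\eps]$. For such $z$, the Taylor tail gives $e^z \le 1 + z + z^2 \le 1 + (1+\eps)z$, so
\[ \exp(\lambda f_v(\Delta)) \le 1 + (1+\eps)\lambda f_v(\Delta). \]
Substituting and distributing yields
\[ \Phi(x+\Delta) \le \Phi(x) + (1+\eps)\lambda \sum_{v \in V} \exp(\lambda f_v(x)) f_v(\Delta). \]
Now I would swap the order of summation: since $f_v(\Delta) = \sum_{e \in N_E(v)} \Delta_e$, each edge $e = (u,v)$ contributes to exactly the two vertex terms $u$ and $v$, giving
\[ \sum_{v \in V} \exp(\lambda f_v(x)) f_v(\Delta) = \sum_{e = (u,v) \in E} \Delta_e \bigl(\exp(\lambda f_u(x)) + \exp(\lambda f_v(x))\bigr). \]
By the value condition \cref{item:m2}, this sum is at most $(1+\eps) c \, \Phi(x)$.

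Combining the two inequalities gives $\Phi(x+\Delta) \le \Phi(x)\bigl(1 + (1+\eps)^2 \lambda c\bigr)$, and the final cleanup is to observe $(1+\eps)^2 = 1 + 2\eps + \eps^2 \le 1 + 3\eps$ for $\eps \le 1$, yielding the desired bound $\Phi(x+\Delta) \le (1 + (1+3\eps)\lambda c)\Phi(x)$. There is no real obstacle here; the only subtle point is ensuring the second-order bound $e^z \le 1 + (1+\eps)z$ is justified by the width constraint, which is exactly why \cref{item:m3} is imposed in the oracle definition. Everything else is algebraic bookkeeping standard to the MWU/packing LP framework.
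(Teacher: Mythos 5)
Your proposal matches the paper's proof essentially step for step: the same multiplicative expansion of $\Phi(x+\Delta)$, the same use of the width condition to justify the Taylor bound $e^z \le 1+(1+\eps)z$, the same reindexing from a vertex sum to an edge sum so that the value condition applies, and the same final simplification $(1+\eps)^2 \le 1+3\eps$. Your version just makes explicit the summation-swap that the paper leaves implicit when invoking \cref{item:m2}.
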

\begin{proof}
Recall that for $x \le \eps$, $\exp(x) \le 1+x+x^2 \le 1+(1+\eps)x$. Thus,
\begin{align*}
    \Phi(x+\Delta) &= \sum_{v \in V} \exp(\lambda f_v(x))\exp(\lambda f_v(\Delta)) \le \sum_{v \in V} \exp(\lambda f_v(x))(1 + (1+\eps)\lambda f_v(\Delta)) \\
    &\le \Phi(x) + (1+\eps)\lambda \sum f_v(\Delta)\exp(\lambda f_v(x)) \le \Phi(x) + (1+\eps)\lambda \cdot (1+\eps) c \cdot \Phi(x) \\
    &\le \Phi(x) + (1+3\eps)\lambda c \cdot \Phi(x) \le (1 + (1+3\eps)\lambda c)\Phi(x).
\end{align*}
Here, the first inequality uses \cref{def:moracle} \cref{item:m3}, and the second line uses \cref{item:m2}.
\end{proof}
This lets us upper bound the value of $\Phi$ after $T$ iterations.
\begin{corollary}
\label{cor:phifinal}
Let $x^{(0)} = 0$ and $x^{(i+1)} = x^{(i)} + \Delta^{(i)}$ where $\Delta^{(i)}$ is a MWU oracle for $x^{(i)}$. Then $\Phi(x^{(T)}) \le \exp((1+3\eps)T\lambda c)n^2$.
\end{corollary}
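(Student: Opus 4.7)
The plan is to iterate the bound from Lemma \ref{lemma:phiincrease} $T$ times and then apply the elementary inequality $1+y \le e^y$. Concretely, since $\Delta^{(i)}$ is an MWU oracle output for $x^{(i)}$, the lemma gives $\Phi(x^{(i+1)}) \le (1 + (1+3\eps)\lambda c)\Phi(x^{(i)})$ for every $i$. Unfolding the recursion from $i=0$ to $i=T-1$ yields
\[ \Phi(x^{(T)}) \le \bigl(1 + (1+3\eps)\lambda c\bigr)^T \Phi(x^{(0)}). \]

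Next I would bound the base case. Since $x^{(0)} = 0$, each summand $\exp(\lambda f_v(x^{(0)})) = \exp(0) = 1$, so $\Phi(x^{(0)}) = |V| \le 2n \le n^2$ (for $n \ge 2$; this matches the bound stated in the paragraph preceding Definition \ref{def:moracle}).

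Finally, the standard inequality $1+y \le e^y$ applied with $y = (1+3\eps)\lambda c \ge 0$ gives
\[ \bigl(1+(1+3\eps)\lambda c\bigr)^T \le \exp\bigl((1+3\eps)T\lambda c\bigr), \]
and combining this with the previous two bounds yields the claimed estimate $\Phi(x^{(T)}) \le \exp((1+3\eps)T\lambda c) n^2$. There is no real obstacle here; the statement is essentially a routine telescoping of Lemma \ref{lemma:phiincrease} together with the trivial initial bound on $\Phi(x^{(0)})$.
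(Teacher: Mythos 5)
Your proof is correct and matches the (implicit) argument in the paper: the corollary is exactly the $T$-fold iteration of Lemma~\ref{lemma:phiincrease} combined with the initial bound $\Phi(x^{(0)}) \le n^2$ noted just before Definition~\ref{def:moracle}, finished off by $1+y \le e^y$. The paper leaves this unstated precisely because it is this routine telescoping.
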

From this we can extract a nearly feasible vector for the linear program.
\begin{corollary}
\label{cor:xfinal}
Let $x^{(0)} = 0$ and $x^{(i+1)} = x^{(i)} + \Delta^{(i)}$ where $\Delta^{(i)}$ is a MWU oracle for $x^{(i)}$. Then the vector $\bar{x} := \frac{1}{T}x^{(T)}$ satisfies $\sum_{e \in E} \bar{x}_e = 1$ and $\sum_{u \in N_E(v)} \bar{x}_e \le (1+3\eps)c + \frac{2\log n}{\lambda T}$ for all $v \in V$.
\end{corollary}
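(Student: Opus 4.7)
The proof splits naturally into verifying the two claimed properties of $\bar{x}$.

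For the first property $\sum_{e \in E} \bar{x}_e = 1$, the plan is to simply telescope. Since $x^{(0)} = \vec{0}$ and $x^{(i+1)} = x^{(i)} + \Delta^{(i)}$, we have $x^{(T)} = \sum_{i=0}^{T-1} \Delta^{(i)}$. Each oracle output satisfies $\sum_{e \in E} \Delta^{(i)}_e = 1$ by \cref{def:moracle} \cref{item:m1}, so $\sum_{e \in E} x^{(T)}_e = T$, and dividing by $T$ gives the claim. The function $f_v(\cdot)$ is linear, so $f_v(\bar x) = f_v(x^{(T)})/T$, which we will use below.

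For the second (width) property, the plan is to invoke \cref{cor:phifinal} and then extract a per-vertex inequality from the sum defining $\Phi$. Because $\Phi(x) = \sum_{v \in V} \exp(\lambda f_v(x))$ is a sum of nonnegative terms, each individual term is bounded by the whole sum; applied to $x^{(T)}$ and combined with \cref{cor:phifinal}, this yields
\[
\exp(\lambda f_v(x^{(T)})) \;\le\; \Phi(x^{(T)}) \;\le\; \exp\bigl((1+3\eps) T \lambda c\bigr)\, n^2
\]
for every $v \in V$. Taking logarithms and using $\log n^2 = 2\log n$ gives $\lambda f_v(x^{(T)}) \le (1+3\eps) T \lambda c + 2\log n$, and dividing by $\lambda T$ together with $f_v(\bar x) = f_v(x^{(T)})/T$ yields the stated bound
\[
\sum_{e \in N_E(v)} \bar{x}_e \;=\; f_v(\bar x) \;\le\; (1+3\eps) c + \frac{2\log n}{\lambda T}.
\]

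There is no real obstacle here: once \cref{cor:phifinal} is in hand, the only thing to check is the elementary bookkeeping step that bounding the whole exponential sum also bounds each summand, and the linearity of $f_v$ which makes averaging commute with $f_v$. The corollary is essentially a restatement of \cref{cor:phifinal} in the language of a (scaled) feasible solution for the covering LP, and the approximation loss $2\log n / (\lambda T)$ comes exactly from the initial $\log \Phi(x^{(0)}) \le 2\log n$ term.
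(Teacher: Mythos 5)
Your proof is correct and matches the paper's proof of \cref{cor:xfinal} essentially verbatim: bound $\exp(\lambda f_v(x^{(T)}))$ by $\Phi(x^{(T)})$ via nonnegativity of the summands, invoke \cref{cor:phifinal}, take logarithms, and divide by $\lambda T$. You additionally spell out the (trivial) telescoping argument for $\sum_e \bar x_e = 1$ and the linearity of $f_v$, which the paper leaves implicit; the only blemish is terminological (the matching LP is a packing LP, not covering, and the second conclusion is approximate feasibility rather than the ``width'' condition, which refers to \cref{item:m3} of \cref{def:moracle}).
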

\begin{proof}
We know that $\exp(\lambda f_v(x^{(T)})) \le \Phi(x^{(T)}) \le \exp((1+3\eps)T\lambda c)n^2$ by \cref{cor:phifinal}. The result follows by taking logarithms and dividing by $\lambda T$.
\end{proof}

\subsection{Offline dynamic matching}
\label{subsec:offlinematching}

\subsubsection{Contractions}
Recent works have shown that when solving $(1-\eps)$-approximate dynamic matching, we can assume that the graph has a maximum matching of size at least $\Omega(\eps n)$. For this reduction, we need the notion of a \emph{contraction}.
\begin{definition}[Contraction]
\label{def:contraction}
For a function $\phi: V \to V_{\phi}$, let $G_{\phi}$ be the graph with edge set $E_{\phi} = \{(\phi(u), \phi(v)) : (u, v) \in E(G), \phi(u) \neq \phi(v)\}$. We say that $\phi$ is a \emph{contraction} of $G$, with corresponding graph $G_{\phi}$.
\end{definition}
Note that for any contraction $\phi$, $\mu(G_{\phi}) \le \mu(G)$.
\begin{lemma}[\!{\cite[Lemma 7.2]{BKS23b}}]
\label{lemma:contract}
There exists a dynamic algorithm $\cA$ with $\O(1)$ worst case update time, which maintains a set of $K = \O(1)$ contractions $\{\phi_1, \dots, \phi_K\}$ with corresponding graphs $\{G_{\phi_1}, \dots, G_{\phi_K}\}$, and a subset $I \subseteq K$. Throughout the sequence of updates (with high probability against an adaptive adversary) the algorithm ensures that: (i) $|V_{\phi_i}| = \Theta(\mu(G)/\eps)$ for all $i \in I$, and (ii) there is an $i^* \in I$ such that $(1-\eps)\mu(G) \le \mu(G_{\phi_{i^*}}) \le \mu(G)$.
\end{lemma}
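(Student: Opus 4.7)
The plan is to maintain, for each dyadic scale $j = 0, 1, \dots, O(\log n)$, a random contraction $\phi_j \colon V \to V_{\phi_j}$ with $|V_{\phi_j}| = \Theta(2^j/\eps)$, defined by hashing each vertex independently (via a pairwise-independent hash) into a bucket. The set $I$ consists of the scales $j$ such that $2^j = \Theta(\mu(G))$, so $|I| = O(1)$ and condition (i) holds by construction. The algorithm maintains a coarse $2$-approximation of $\mu(G)$ via a standard dynamic maximal matching oracle \cite{BGS18,S16}, which is used to compute $I$. Each edge update in $G$ translates into a single edge update in each $G_{\phi_j}$, with edge multiplicities tracked so that insertions and deletions behave correctly after contraction. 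With $K = O(\log^2 n) = \O(1)$ contractions, this gives $\O(1)$ worst-case update time.

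The analytical heart is showing that at the correct scale $j^*$ with $2^{j^*} = \Theta(\mu(G))$, we have $\mu(G_{\phi_{j^*}}) \ge (1-\eps)\mu(G)$ with constant probability over the random hash. Fix a maximum matching $M$ of $G$ with $|M| = k = \mu(G)$, and let $B = |V_{\phi_{j^*}}| = Ck/\eps$ for a sufficiently large constant $C$. The images of edges $(u,v) \in M$ fail to form a matching in $G_{\phi_{j^*}}$ for three reasons: an edge degenerates when $\phi_{j^*}(u) = \phi_{j^*}(v)$ (probability $1/B$ each), or two distinct edges collide at a left or right endpoint (probability $1/B$ per pair, per side). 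By linearity of expectation, the number of edges of $M$ one must remove to recover a matching of $G_{\phi_{j^*}}$ is at most $O(k^2/B) = O(\eps k / C)$, so Markov's inequality gives $\mu(G_{\phi_{j^*}}) \ge (1-\eps)k$ with constant probability. To boost to high probability we maintain $\Theta(\log n)$ independent hashes per scale, include each resulting contraction in $I$, and observe that the best of them achieves the bound.

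The main obstacle is the \emph{adaptive} adversary requirement: the adversary sees the entire update history and the maintained contracted graphs, and could in principle design future updates that break the hash. The resolution is twofold. First, the contracted graphs $G_{\phi_j}$ are exposed only up to an internal relabeling of bucket identities, so the adversary never learns $\phi_j(v)$ for a specific $v \in V$; this reduces their advantage to a (much weaker) coloring statistic rather than the hash itself. Second, the hash family is refreshed every $\Theta(\eps \mu(G))$ updates (amortized into $\O(1)$ per step), so between refreshes the adversary issues at most $O(\eps \mu(G))$ updates, each of which can spoil at most one edge of the preserved matching; this additive error is absorbed into the $(1-\eps)$ slack. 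A union bound over the $O(n/(\eps \mu(G)))$ refresh phases, combined with the per-phase high-probability bound from the preceding paragraph, yields the full adaptive-adversary claim. Making precise what the adversary can observe and scheduling refreshes so that updates remain worst-case $\O(1)$ is the delicate bookkeeping step, carried out in detail in \cite[Lemma 7.2]{BKS23b}, which we use as a black box.
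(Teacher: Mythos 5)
The paper does not prove this lemma; it imports it verbatim from \cite[Lemma 7.2]{BKS23b}, so there is no in-paper argument to compare against, and you ultimately defer to the same citation as a black box. Your sketch of the underlying mechanism (dyadic scales, a random vertex contraction into $\Theta(2^j/\eps)$ buckets, linearity of expectation plus Markov for a constant-probability preservation of $\mu$, boosting with $\Theta(\log n)$ independent hashes per scale, $K = O(\log^2 n)$) is a plausible reconstruction of the scaffolding.

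However, the adaptivity argument you propose has a concrete gap. Refreshing the hash family every $\Theta(\eps\mu(G))$ updates requires rebuilding each contracted graph $G_{\phi_j}$ from scratch, which touches all $\Theta(m)$ edges of $G$; amortizing this over $\Theta(\eps\mu(G))$ updates gives $\Omega(m/(\eps\mu(G)))$ per step, which can be polynomial in $n$ and hence is nowhere near $\O(1)$. More fundamentally, the lemma claims \emph{worst-case} $\O(1)$ update time, so a periodic rebuild cannot be hidden by amortization at all. The relabeling defense is also not airtight: even if bucket identities are anonymized, an adaptive adversary observing $G_{\phi_j}$ can extract collision information about $\phi_j$ (for instance, by inserting edges from two vertices $u, v \in L$ to a common neighbor and watching for a multi-edge or degree anomaly in $G_{\phi_j}$), so one must argue carefully that such information does not help, rather than that it is unavailable. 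These are exactly the points the black-boxed \cite{BKS23b} argument must handle by different means, so the sketch as written does not stand as an independent proof of the lemma even modulo routine details.
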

Given a matching on a contraction $G_{\phi}$ of $G$, it is straightforward to return the corresponding matching in $G$ of the same size.

\subsubsection{From induced subgraph maximal matching queries to a MWU oracle}
We use maximal matchings queries on induced subgraphs of $G$ to build a MWU oracle.
\begin{definition}
\label{def:ismm}
For a graph $G = (V = L \cup R, E)$ and $A \subseteq L, B \subseteq R$, we say that $M$ is a \emph{$\beta$-near maximal matching} on $G[A, B]$ if $M$ is a matching, and at most $\beta$ edges in $G[A, B]$ are not adjacent to a matched vertex in $M$.
\end{definition}
\begin{lemma}
\label{lemma:inducedtomoracle}
Let $\lambda = \eps^2/(10c)$, $T = 20 \eps^{-3} \log n$, and $c \ge 1/\mu(G)$. There is an algorithm that implements a MWU oracle as in \cref{def:moracle} by making $\O(\eps^{-4})$ of the following form: return a $\beta$-near maximal matching of $G[A, B]$ for $A \subseteq L, B \subseteq R$ for $\beta = \tilde{\Omega}(\eps^5 \mu(G))$, plus $\O(n \eps^{-O(1)})$ additional time.
\end{lemma}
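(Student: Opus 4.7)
The plan is to realize $\Delta$ as the uniform distribution on a matching $M'$ all of whose edges lie in
\[
\hat{E} := \{e = (u,v) \in E : \exp(\lambda f_u(x)) + \exp(\lambda f_v(x)) \le (1+\eps) c\, \Phi(x)\}.
\]
Setting $\Delta := \mathbf{1}_{M'}/|M'|$ gives \cref{item:m1} of \cref{def:moracle} immediately, \cref{item:m2} follows from $M' \subseteq \hat{E}$, and \cref{item:m3} reduces to $|M'| = \Omega(\eps/c)$ since each vertex is incident to at most one edge of $M'$ and $\eps/\lambda = 10c/\eps$ under our parameter choice.

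Step 1 (Markov on the max matching). For any matching $N$ in $G$, disjointness of edges gives $\sum_{e=(u,v)\in N} (\exp(\lambda f_u(x)) + \exp(\lambda f_v(x))) \le \Phi(x)$. Applied to a maximum matching $N^*$ with $|N^*| = \mu(G) \ge 1/c$, Markov's inequality shows that at least $\Omega(\eps \mu(G))$ edges of $N^*$ have value at most $(1 + \eps/4)c\Phi(x)$, i.e., lie comfortably inside $\hat{E}$.

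Step 2 (bucketing into level sets). Partition $L = L_1 \cup \dots \cup L_t$ by $L_i := \{v \in L : \lambda f_v(x) \in [(i-1)\gamma, i\gamma)\}$ with $\gamma := \eps/4$, and analogously partition $R$. \cref{cor:phifinal} with our choice of $\lambda, T$ gives $\Phi(x) \le n^{O(1/\eps)}$, so $\lambda f_v(x) = O(\eps^{-1}\log n)$ and $t = \tilde{O}(\eps^{-2})$. Call a pair $(i,j)$ \emph{good} if $\exp(i\gamma) + \exp(j\gamma) \le (1+\eps)c\Phi(x)$. Within a good pair every edge of $G[L_i, R_j]$ lies in $\hat{E}$, and conversely every Markov-surviving edge from Step 1 lands in a good pair, because rounding both endpoints up to the top of their bucket inflates the value by at most $e^{\gamma} \le 1 + \eps/3$ and $(1+\eps/4)(1+\eps/3) \le 1+\eps$.

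Step 3 (greedy assembly). Initialize $A \gets L$, $B \gets R$, $M' \gets \emptyset$, and iterate over the at most $t^2 = \tilde{O}(\eps^{-4})$ good pairs. For each good $(i,j)$, invoke the $\beta$-near maximal matching oracle on $G[L_i \cap A, R_j \cap B]$, append the returned matching to $M'$, and delete the newly matched endpoints from $A, B$. By construction $M' \subseteq \hat{E}$. To lower bound $|M'|$, observe that for each Markov-surviving edge $e \in N^*$ in good pair $(i,j)$, when that pair is processed either an endpoint of $e$ was already removed from $A \cup B$ (so $e$ is adjacent to $M'$), or $e$ persists in the induced subgraph and is either incident to a matched vertex of the oracle's output, or counted among the $\le \beta$ missed edges (\cref{def:ismm}). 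Double counting gives $\Omega(\eps \mu(G)) \le 2|M'| + \beta t^2$, hence $|M'| \ge \Omega(\eps \mu(G)) - \tilde{O}(\eps^{-4}\beta) = \Omega(\eps/c)$ whenever $\beta = \tilde{O}(\eps^5 \mu(G))$, matching the lemma's permitted slack.

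Step 4 (runtime). Maintaining $f_v(x)$ between MWU iterations is cheap because $\Delta$ has $|M'| \le n$ nonzeros, so only those endpoints' bucket indices need updating. Bucketing, enumerating good pairs, and maintaining $A, B$ add $\tilde{O}(n \eps^{-O(1)})$ per MWU oracle call, and we make $t^2 = \tilde{O}(\eps^{-4})$ near-maximal matching queries as claimed.

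The main calibration obstacle is ensuring that the three $(1+\eps)$-style losses, namely the Markov threshold $(1+\eps/4)c\Phi(x)$, the bucket inflation $e^{\gamma} \le 1 + \eps/3$, and the definitional threshold $(1+\eps)c\Phi(x)$ for $\hat{E}$, compose consistently while simultaneously keeping $t = \tilde{O}(\eps^{-2})$ so the query budget $t^2 = \tilde{O}(\eps^{-4})$ is met; choosing both slacks $\Theta(\eps)$ balances these.
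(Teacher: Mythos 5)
Your proof is correct and follows essentially the same approach as the paper: restrict to the edge set $\hat{E}$ of low MWU weight, use a Markov argument on a maximum matching to show $\hat{E}$ contains an $\Omega(\eps\mu(G))$-size matching, bucket vertices by $f_v(x)$ into $\tilde{O}(\eps^{-2})$ level sets, and greedily build a matching over the $\tilde{O}(\eps^{-4})$ good bucket pairs via near-maximal matching queries, controlling the $\beta$-slack accumulation. The only cosmetic differences are that you bucket additively on $\lambda f_v(x)$ while the paper buckets multiplicatively on $\exp(\lambda f_v(x))$, and that you lower-bound $|M'|$ by a double-counting inequality whereas the paper completes $M'$ to a genuine maximal matching on $\hat{E}$ by adding at most $\tilde{O}(\beta\eps^{-4})$ edges; these are equivalent.
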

\begin{proof}
Let $x$ be the current point in the MWU. 
By \cref{cor:xfinal}, we know that $\exp(\lambda f_v(x)) \le \exp(2T\lambda c)n^2 \le \exp(O(\eps^{-1}\log n))$. Let $\tau = O(\eps^{-2}\log n)$, and let $L_i = \{u \in L : \exp(\lambda f_u(x)) \in [(1+\eps/10)^{i-1}, (1+\eps/10)^i]$ for $i = 1, 2, \dots, \tau$. Define $R_i$ similarly. Note that $L = L_1 \cup \dots \cup L_{\tau}$ and $R = R_1 \cup \dots \cup R_{\tau}$. Finally, define $\hat{E} = \bigcup\{G[L_i, R_j] : (1+\eps/10)^i + (1+\eps/10)^j \le (1+\eps)c \cdot \Phi(x) \}.$
Let $M$ be a maximal matching on the edges $\hat{E}$, and define $\Delta_e = 1/|M|$ for $e \in M$ and $\Delta_e = 0$ otherwise. We claim that $\Delta_e$ is a valid output to the MWU oracle.

\cref{item:m1} is clear. \cref{item:m2} follows because all edges in $(u, v) \in \hat{E}$ satisfy $\exp(\lambda f_u(x)) + \exp(\lambda f_v(x)) \le (1+\eps)c \cdot \Phi(x)$ by the definition of $L_i, R_j$ and $\hat{E}$. To show \cref{item:m3} it suffices to argue that $1/|M| \le \eps/\lambda = 10c/\eps$, or $|M| \ge \eps/(10c)$, which follows from $|M| \ge \eps\mu(G)/10$ as $c \ge 1/\mu(G)$ by assumption.
Let $M^*$ be a maximum matching in $G$ of size $\mu(G)$, and let $\Delta^*_e = 1/\mu(G)$ for $e \in M^*$ and $0$ otherwise. Then we know that:
\[ \sum_{e \in M^*} \exp(\lambda f_u(x)) + \exp(\lambda f_v(x)) \le \Phi(x). \]
Thus, by Markov's inequality, the number of edges $e \in M^*$ with $\exp(\lambda f_u(x)) + \exp(\lambda f_v(x)) \le (1+\eps/2)c \cdot \Phi(x)$ is at least:
\[ |M^*| - \frac{1}{(1+\eps/2)c} \ge \mu(G) - (1-\eps/3)\mu(G) = \eps/3 \cdot \mu(G), \]
where we have used $c \ge 1/\mu(G)$. All such edges must be in $\hat{E}$ by definition. Thus, every maximal matching of $\hat{E}$ contains at least $\eps \cdot \mu(G)/6$ edges.

Finally, we discuss how to discuss how to find a maximal matching of $\hat{E}$. Maintain the sets of currently unmatched vertices $A \subseteq L, B \subseteq R$. Iterate over pairs $(i, j)$ with $(1+\eps/10)^i + (1+\eps/10)^j \le (1+\eps)c \cdot \Phi(x)$. Now, query for a $\beta$-near maximal matching on $G[A \cap L_i, B \cap R_j]$, add it to the current matching, and update $A, B$ accordingly. Because there are at most $\O(\eps^{-4})$ pairs $(i, j)$, the number of queries is $\O(\eps^{-4})$. We argue that the returned matching $M$ must have at least $\eps \mu(G)/10$ edges. For each $(i, j)$, let $E_{i,j}$ be the set of edges that are potentially not adjacent to an endpoint of $M$, where $|E_{i,j}| \le \beta$. Then the only edges in $\hat{E}$ that are potentially not adjancent to $M$ are in $\bigcup_{i,j} E_{i,j}$. This is at most $\O(\beta \eps^{-4}) \le \eps \mu(G)/20$ edges. Thus, $M$ can be turned into a maximal matching on $\hat{E}$ by adding at most $\mu(G)/20$ edges, so $|M| \ge \eps\mu(G)/6 - \eps\mu(G)/20 \ge \eps\mu(G)/10$.
\end{proof}

\subsubsection{Implementing a MWU oracle offline}
\label{subsec:mwuoracle}
We work with the following general setup. We have graphs $G_1, \dots, G_t$, all with the same vertex set $L \cup R$, on $n$ vertices. Additionally, $\mu(G_i) \ge \eps n$ for all $i \in [t]$. Finally, each $G_i$ differs from $G_1$ in at most $\Gamma$ edges. Our goal is to implement a MWU oracle as in \cref{def:moracle} for all the graphs $G_1, \dots, G_t$ simultaneously. To see why this corresponds to dynamic offline matching, the reader can think of $G_i$ as the graph after $\eps^2 n(i-1)$ updates. We do not have to consider the graphs between $G_i$ and $G_{i+1}$ in the update sequence because we can afford $\eps^2n$ additive error. Then $\Gamma = \eps^2nt$.

Using \cref{lemma:inducedtomoracle}, it suffices to solve the following problem: return a maximal matching on $G_i[A_i, B_i]$, for sets $A_i \subseteq L, B_i \subseteq R$. We start with the low-degree setting.
\begin{lemma}
\label{lemma:lowdegree}
Let $G_1, \dots, G_t$ be graphs such that $G_i$ and $G_1$ differ in at most $\Gamma$ edges. Let $A_i \subseteq L, B_i \subseteq R$ for $i \in [t]$, such that for all $a \in A_i$, $\deg^{G_i[A_i,B_i]}(a) < D$. There is a randomized algorithm that finds all edges in $G_i[A_i, B_i]$ for all $i \in [t]$ with high probability in total time
\[ \O\left(D \cdot T(n/D, n, t) + t\Gamma \right). \]
\end{lemma}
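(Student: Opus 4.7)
The plan is to produce the edge sets of $G_1[A_i, B_i]$ for every $i \in [t]$ via a single family of rectangular matrix multiplications that batches all $t$ instances in parallel, and then convert the outputs into $E(G_i[A_i, B_i])$ by toggling the per-instance edge differences. The conversion pass iterates, for each $i$, through the at most $\Gamma$ edges of the symmetric difference $E(G_1)\,\triangle\,E(G_i)$ and flips each one in the candidate output; this contributes $O(t\Gamma)$ time and is the source of the second term in the bound. A useful side effect is that it will also repair any rows whose degree in $G_1[A_i, B_i]$ exceeds $D$ purely because of the symmetric difference, since $\sum_{a\in A_i}\bigl(\deg_{G_1}(a, B_i)-\deg_{G_i}(a, B_i)\bigr)\le \Gamma$ for every $i$.

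To find $E(G_1[A_i, B_i])$ in parallel, I would follow the subsampling trick from \cite{WX20} referenced by \cref{claim:degreeone}: run $R=\Theta(D\log n)$ randomized rounds where each round reduces to the degree-$0/1$ regime handled by the claim. In round $r$, a suitable random refinement (stratifying at rate $1/(CD)$ for a large constant $C$) ensures that in the refined instance every $a\in A_i$ has at most one surviving neighbor, and the refined side has $\tilde O(n/D)$ surviving vertices. Using the hypothesis $\deg^{G_i[A_i,B_i]}(a)<D$, a direct Chernoff/Poisson calculation gives that any fixed edge $(a,b)\in E(G_i[A_i,B_i])$ is \emph{isolated} in round $r$ (i.e., is the unique surviving neighbor of $a$) with probability at least $\tfrac{1}{CD}\bigl(1-\tfrac{1}{CD}\bigr)^{D-1}=\Omega(1/D)$. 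A union bound over the $\le tnD$ candidate edges and the $R$ rounds shows that every such edge is isolated in at least one round with high probability.

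\cref{claim:degreeone} then realises one round as a single rectangular matrix multiplication of shape $(n/D)\times n\times t$, stacking the $t$ query vectors as columns of the third factor and using $O(\log n)$ bit-slices to recover the \emph{identity} of each unique neighbor (not merely the indicator of its existence); each produced candidate is verified in $O(1)$ time against the adjacency list of $G_1$. Summing over $R=\tilde{O}(D)$ rounds yields the dominant $\tilde{O}(D\cdot T(n/D, n, t))$ term, and combining with the conversion pass completes the bound. Correctness is immediate from the union bound of the previous paragraph together with the correctness of \cref{claim:degreeone}.

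The main obstacle I anticipate is the interplay between the matrix-multiplication phase, which operates on the adjacency of $G_1$, and the degree bound, which is stated only for $G_i$: a priori, a row could have $\deg_{G_1}(a,B_i)\gg D$ due to edges in the symmetric difference, which would defeat isolation. This is defused by the degree-excess budget argument in the first paragraph: the total bad mass per instance is $O(\Gamma)$, so any common edge that the matrix phase fails to locate lies in a small set of rows whose missing edges are absorbed into the same $O(t\Gamma)$ patching pass used to convert $E(G_1[A_i,B_i])$ into $E(G_i[A_i,B_i])$.
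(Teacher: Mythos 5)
Your overall architecture — subsample at rate $\Theta(1/D)$ for $\tilde O(D)$ rounds, reduce to the degree-$0/1$ regime, and batch all $t$ instances through one rectangular matrix product per bit-slice — matches the paper's. But the place where you and the paper part ways is exactly the obstacle you flag at the end, and your proposed fix does not close the gap. You want to first compute $E(G_1[A_i,B_i])$ via matrix products on $G_1$'s adjacency matrix, and then convert to $E(G_i[A_i,B_i])$ by a single toggling pass over the $\Gamma$ symmetric-difference edges. The isolation probability you invoke, $\tfrac{1}{CD}(1-\tfrac1{CD})^{D-1}$, is valid only when $a$'s degree in the \emph{sampled} graph you are multiplying is $<D$; but that graph is $G_1$, where $\deg_{G_1}(a,B_i)$ can be arbitrarily large. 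The common edges $(a,b)\in E(G_1)\cap E(G_i)$ sitting in such a row then fail to isolate in any round with high probability, and crucially they are \emph{not} in $E(G_1)\triangle E(G_i)$, so the $O(t\Gamma)$ conversion pass — which by your own description only flips symmetric-difference edges — never touches them. Your budget argument bounds how many common edges are missed (indeed $O(\Gamma)$ per instance), but it provides no mechanism to \emph{locate} them within the stated time: brute-forcing the $O(\Gamma/D)$ bad rows costs $O(n)$ each per instance, i.e.\ $O(tn\Gamma/D)$, which is not $O(t\Gamma)$.

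The paper sidesteps this entirely by making the degree-one test and the bit-slice recovery refer to $G_i$, not $G_1$: inside each subsampling round, \cref{claim:degreeone} first computes the $G_1$ counts by matrix multiplication and then patches in the (surviving) $\Gamma_i$ differences \emph{before} deciding which rows have degree one. Since the degree hypothesis is stated for $G_i$, every edge of $G_i[A_i,B_i]$ is found with probability $\Omega(1/D)$ per round, with no bad-row exception. The price is that the patching happens in every round, not once; the paper's $O(t\Gamma)$ budget survives because the $1/D$-rate subsampling also kills all but a $1/D$ fraction of the differing edges in expectation, so the per-round patching cost is $O(t\Gamma/D)$ and the $\tilde O(D)$ rounds sum to $\tilde O(t\Gamma)$. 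This per-round accounting is a genuine additional idea your write-up would need; as stated, you cannot both pay for a single conversion pass and get a $G_i$-based isolation guarantee.
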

We first show a subclaim that an algorithm can, for every vertex in $A_i$ of degree at most one, find its neighboring edge. This lemma has a few differences in hypotheses from the previous. First, we say that $G_i$ and $G_1$ differ from $\Gamma_i$ edges, depending on $i$. Also, the graphs $G_i$ have vertex set $L \cup R'$, where $|L| = n$, and $|R| = n' \ll n$. This is the setting we will apply the claim in, when we prove \cref{lemma:lowdegree}.
\begin{claim}
\label{claim:degreeone}
Let $G_1, \dots, G_t$ be bipartite graphs with vertex set $L \cup R'$, where $|L| = n$ and $|R'| = n'$. Also, say $G_i$ and $G_1$ differ in at most $\Gamma_i$ edges for $i \in [t]$. Let $A_i \subseteq L, B_i' \subseteq R'$ for $i \in [t]$. There is an algorithm, that for all $i \in [t]$ and $a \in A_i$ with $\deg^{G_i[A_i,B_i']}(a) = 1$, finds the adjacent edge to $a$ in $G_i[A_i,B_i']$, in total time
\[ \O\left(\sum_{i\in[t]} \Gamma_i + T(n, n', t)\right). \]
\end{claim}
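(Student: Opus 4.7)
The plan is to reduce the entire task to $O(\log n')$ integer matrix products against the fixed adjacency matrix of $G_1$, and then correct each output to $G_i$ by walking over the $\Gamma_i$ differing edges. Let $M_1 \in \{0,1\}^{n \times n'}$ be the $L \times R'$ adjacency matrix of $G_1$, and let $V \in \{0,1\}^{n' \times t}$ be the matrix whose $i$-th column is the indicator vector of $B_i'$. Index the vertices of $R'$ by integers in $[n']$ throughout.

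The first step is to identify, for each $(a,i)$ with $a \in A_i$, whether $\deg^{G_i[A_i, B_i']}(a) = 1$. Compute the integer product $M_1 V$ in time $T(n, n', t)$ (the entries are bounded by $n'$, so bit complexity contributes only an $\O$ factor). The $(a,i)$ entry equals $|N_{G_1}(a) \cap B_i'|$. I then perform a correction pass: for each $i \in [t]$ and each of the $\Gamma_i$ edges in which $G_i$ differs from $G_1$, increment or decrement the $(a,i)$ entry whenever the other endpoint lies in $B_i'$. After this, the entry equals $\deg^{G_i[L, B_i']}(a)$, and restricting to rows $a \in A_i$ (which is free since $A_i \subseteq L$) identifies the degree-$1$ vertices. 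The correction cost over all $i$ is $O(\sum_i \Gamma_i)$.

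Identifying degree is not enough; to actually recover the neighbor for each flagged vertex, I use a bit decomposition. For each bit position $k = 0, 1, \dots, \lceil \log_2 n' \rceil$, form $M_1^{(k)} \in \{0,1\}^{n \times n'}$ by zeroing out the columns of $M_1$ whose index has a $0$ in bit $k$. Compute $M_1^{(k)} V$ in time $T(n, n', t)$ and run the analogous correction pass, this time adjusting by $\pm b^{(k)}$ for each differing edge with endpoint $b$, where $b^{(k)}$ is the $k$-th bit of $b$. The resulting $(a,i)$ entry equals $\sigma_k(a,i) := \sum_{b \in N_{G_i}(a) \cap B_i'} b^{(k)}$. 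For any $a \in A_i$ that the first step flagged as having degree exactly $1$, the tuple $(\sigma_0(a,i), \sigma_1(a,i), \dots)$ is literally the binary expansion of the unique neighbor $b^\star$, and the algorithm outputs the edge $(a, b^\star)$.

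The total runtime is $O(\log n')$ matrix products of cost $T(n, n', t)$ plus $O(\log n')$ correction passes, each of cost $O(\sum_i \Gamma_i)$, giving the claimed bound $\O(\sum_i \Gamma_i + T(n, n', t))$. The main conceptual obstacle is that a Boolean matrix--vector product only tells us \emph{whether} $a$ has a neighbor in $B_i'$, not \emph{which} one; integer arithmetic with a bit decomposition resolves this at the price of a single $\log n'$ factor, which is absorbed into $\O$. The main bookkeeping point is keeping the correction pass within $\O(\sum_i \Gamma_i)$: each differing edge contributes $O(1)$ work per bit-matrix, for a total of $O(\log n')$ work per edge, which is fine.
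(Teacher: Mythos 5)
Your proposal is correct and matches the paper's approach: both rely on the bit-decomposition trick of Williams--Xu, computing $O(\log n')$ integer matrix products against the fixed adjacency matrix of $G_1$ and then correcting each column by the $\Gamma_i$ differing edges, and both recover the unique neighbor of a degree-one vertex by reading off its bits. Your version is slightly more explicit in separating out the degree test (the initial $M_1 V$ product) from the bit-recovery products, but the argument and the runtime accounting are the same.
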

\begin{proof}
The algorithm is based on a nice idea of \cite{WX20}. Label the vertices in $R'$ with $\{1, 2, \dots, n'\}$. Construct subsets $R^{(j)} \subseteq R'$ for $j \le \lceil \log_2 n' \rceil$, where $R^{(j)}$ contains all $r \in R'$ whose label has $j$-th bit in binary equals $1$. For any $a \in A_i$, $i \in [t]$, we can determine $\deg^{G_i[A_i,B_i'\cap R^{(j)}]}(a)$ in total time
\[ \O\left(\sum_{i\in[t]} \Gamma_i + T(n, n', t)\right) \] in the following way. First, find $\deg^{G_1[A_i,B_i'\cap R^{(j)}]}(a)$ for all $a \in A_i$ by multiplying the adjacency matrix of $G_1$ (which is $n \times n'$) by the matrix of indicator vectors of $B_i'$ for $i \in [t]$ (which is $n' \times t$). Then, find $\deg^{G_i[A_i,B_i'\cap R^{(j)}]}(a)$ by updating the $\Gamma_i$ edges where $G_1$ and $G_i$ differ. If $\deg^{G[A_i,B_i']}(a) = 1$, and $a$ is adjacent to $b \in B_i'$, then the set of $j$ with $\deg^{G[A_i,B_i'\cap R^{(j)}]}(a) = 1$ is exactly the bits in the label of $b$. Thus, we can recover $b$.
\end{proof}
We can reduce from the degree $D$ to degree one case by subsampling.
\begin{proof}[Proof of \cref{lemma:lowdegree}]
For $j = 1, 2, \dots, \O(D)$, let $R^{(j)} \sim_{1/D} R$ be independently sampled subsets. Note that $|R^{(j)}| \le \O(n/D)$ with high probability. Now, instantiate \cref{claim:degreeone} for the setting $R' = R_j$ (so $n' = |R_j| \le \O(n/D)$) and $B_i' = B_i \cap R_j$.

We argue that every edge in $G_i[A_i, B_i]$ is found with high probability. Fix an edge $(a, b)$ adjacent to $a \in A_i$. For a fixed $j$, $(a, b)$ is the only edge in $G_i[A_i, B_i \cap R^{(j)}]$ with probability at least $1/D \cdot (1-1/D)^{D-1} \ge 1/(3D)$. Thus, with high probability there exists $j \le \O(D)$ where $(a, b)$ is the unique edge adjacent to $a$ in $G_i[A_i, B_i \cap R^{(j)}]$.

Finally we bound the runtime. The contribution of the $T(n,n',t)$ terms in \cref{claim:degreeone} is $\O(D \cdot T(n, n/D, t))$. For a fixed $j$, the expectation of the $\sum_{i\in[t]} \Gamma_i$ terms is at most $t\Gamma/D$, as each edge differing between $G_i$ and $G_1$ survives in $G_i[L, R^{(j)}]$ with probability $1/D$. Thus, the total contribution over $j = 1, \dots, \O(D)$ is $\O(t\Gamma)$ as claimed.
\end{proof}
Now we give an algorithm for general subset maximal matching queries.
\begin{lemma}
\label{lemma:highdegree}
Let $G_1, \dots, G_t$ be graphs such that $G_i$ and $G_1$ differ in at most $\Gamma$ edges. Let $A_i \subseteq L, B_i \subseteq R$ for $i \in [t]$. There is a randomized algorithm that returns a maximal matching on each $G_i[A_i, B_i]$ for $i \in [t]$ with high probability in total time
\[ \O\left(t\Gamma + n^2t/D + D \cdot T(n, n/D, t) \right), \]
for any parameter $D$.
\end{lemma}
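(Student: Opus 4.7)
The plan is a two-phase reduction to \cref{lemma:lowdegree}. In the first phase, I use random sampling to peel off matching edges incident to vertices of high degree in the induced subgraphs, leaving a residual where every unmatched vertex of $A_i$ has induced degree less than $D$. In the second phase, I invoke \cref{lemma:lowdegree} on the residual to enumerate its edges, then greedily complete each partial matching to a maximal one.

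First I would set up $O(1)$-time edge queries to each $G_i$: preprocess a hash table of the at-most-$\Gamma$ edges where $G_i$ differs from $G_1$, so that $(a,b) \in G_i$ can be decided in $O(1)$ time using the static adjacency structure of $G_1$ together with the per-graph diff. This preprocessing contributes $O(t\Gamma)$ overall. Then, for each $i \in [t]$, I initialize $A_i' \gets A_i$, $B_i' \gets B_i$, $M_i \gets \emptyset$, and iterate through the vertices $a \in A_i$. For the currently processed $a$, I draw $\Theta(n D^{-1} \log n)$ uniform samples from $B_i'$ and query each for an edge in $G_i$. If some sampled $b$ gives an edge, I add $(a,b)$ to $M_i$ and delete $a$ from $A_i'$ and $b$ from $B_i'$; otherwise I leave $a$ in $A_i'$. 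By a standard Chernoff bound, if $\deg^{G_i[A_i',B_i']}(a) \ge D$ at the moment of processing, then a neighbor is found with high probability; a union bound over the $nt$ processing events ensures correctness of the degree guarantee overall. This phase runs in $\tilde{O}(n^2 t / D) + O(t\Gamma)$ time.

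Next, I apply \cref{lemma:lowdegree} to the residual instance $(G_i, A_i', B_i')_{i\in[t]}$, which is valid because at the end of Phase 1 every $a \in A_i'$ satisfies $\deg^{G_i[A_i',B_i']}(a) < D$: the set $B_i'$ only shrinks after $a$ is processed, so its degree in the residual is bounded by the degree at the processing moment. The lemma returns all edges of $G_i[A_i', B_i']$ in total time $\tilde{O}(D \cdot T(n/D, n, t) + t\Gamma)$. Finally, for each $i$ I greedily extend $M_i$ to a maximal matching of $G_i[A_i, B_i]$ by sweeping the returned edge list; each residual subgraph has at most $nD$ edges, so this step costs $O(ntD)$, which is dominated by $n^2 t / D + D \cdot T(n, n/D, t)$ whenever $D \le n$. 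Summing the three contributions yields the claimed runtime.

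The main obstacle is the subtle interaction between the sequentially-updating sets $(A_i', B_i')$ during Phase 1 and the degree precondition needed for Phase 2; verifying that a ``low degree at processing time'' guarantee propagates to a ``low degree in the final residual'' guarantee is the crux, and it works precisely because deletions from $B_i'$ only occur after the corresponding $a$ has been processed, so induced degrees of unmatched vertices can only decrease. Everything else is a routine composition of Chernoff bounds with \cref{lemma:lowdegree}, plus a check that $ntD$ is absorbed by the other terms.
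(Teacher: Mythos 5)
Your proof is correct and follows essentially the same approach as the paper: Phase 1 uses random sampling over $B_i$ to match away vertices of $A_i$ with induced degree at least $D$, and Phase 2 invokes \cref{lemma:lowdegree} on the low-degree residual, then greedily completes to a maximal matching. The extra details you spell out---the $O(t\Gamma)$ preprocessing for constant-time edge queries, the observation that a vertex's induced degree only decreases after it is processed, and the check that the $O(ntD)$ greedy sweep is absorbed---are left implicit in the paper's proof but do not change the argument.
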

\begin{proof}
For each $i \in [t]$, do the following procedure. Iterate over vertices $a \in A_i$. Take $\O(n/D)$ random samples over vertices $b \in B_i$. If $(a, b)$ is an edge, add it to the maximal matching for $G_i[A_i, B_i]$, remove $a, b$ from $A_i, B_i$ respectively, and move onto the next $a \in A_i$. If no edge was found, also continue on to the next $a \in A_i$. The total time over $G_1, \dots, G_t$ is $\O(n^2t/D)$. Also, for each vertex remaining in $A_i$ that was unmatched, its degree within $G_i[A_i, B_i]$ is at most $D$, with high probability, or else an adjacent edge would be found with high probability over $\O(n/D)$ samples. Then, we apply \cref{lemma:lowdegree} to find all remaining edges in $G_i[A_i, B_i]$ to trivially find a maximal matching in linear time. The total runtime follows from \cref{lemma:lowdegree}.
\end{proof}
\cref{thm:mainupper} follows by combining \cref{lemma:highdegree} with \cref{lemma:inducedtomoracle}.
\begin{proof}[Proof of \cref{thm:mainupper}]
By \cref{lemma:contract}, it suffices to consider the case where the graphs in the dynamic update sequence have $s$ vertices, and have maximum matching size at least $\eps s$, for some $s \le n$. Let $t = s^x$, for some $x \in [0, 1]$ chosen later. Let $G_i$ be the graph after $(i-1)\eps^2 s$ updates, for $i \in [t]$.
Thus, $G_i$ and $G_1$ differ in at most $\Gamma \le ts$ edges. By \cref{lemma:inducedtomoracle} and \cref{cor:xfinal}, running a MWU to find a $(1-\eps)$-approximate fractional matching on $G_i$ uses $\O(\eps^{-4}T) = \O(\eps^{-7})$ calls to a maximal matching oracle on induced subgraphs (\cref{def:ismm}). Each call can be implemented for all $G_1, \dots, G_t$ in total time $\O\left(t\Gamma + s^2t/D + D \cdot T(s, s/D, t) \right)$ by \cref{lemma:highdegree}. Let $D = s^y$. Note that this only returns a fractional matching. However, the support is size at most $\O(\eps^{-O(1)}s)$, because each iteration adds a matching to $x$. Thus, it can be rounded to a integral matching in time $\O(\eps^{-O(1)}s)$ which is negligible.
The amortized runtime over $\eps^2 st$ updates is then:
\[ \O\left(\eps^{-O(1)}(t\Gamma + s^2t/D + D \cdot T(s, s/D, t))/(st)\right) = \O\left(\eps^{-O(1)}(s^x + s^{1-y} + s^{y-1-x} \cdot T(s, s^{1-y}, s^x)\right). \]
For the choices $x = 0.579, y = 0.421$ (found use \cite{Complexity}), the amortized runtime is $O(\eps^{-O(1)}n^{.58})$.
\end{proof}

\subsection{Equivalence of dynamic matching and approximate dynamic matrix-vector}
\label{subsec:reducematching}

\subsubsection{Matching from dynamic matrix-vector}
Assume that there is an algorithm for $(1-\gamma)$-approximate dynamic matrix-vector (\cref{def:dynapproxomv}), for $\gamma = n^{-\delta}$, with update time $\cT := n^{1-\delta}$, and query time $n\cT = n^{2-\delta}$. We show how to use this to implement a $\beta$-near maximal matching oracle. The proof closely follows the approach in the previous section: first reduce degrees, and then subsample to reduce to degree one graphs.
\begin{lemma}
\label{lemma:mmoracle}
There is a randomized algorithm that on a graph $G = (V = L \cup R, E)$ with subsets $A \subseteq L, B \subseteq R$, returns a $\O(D\gamma n)$-near maximal matching on $G[A, B]$ in amortized $\O(n^2/D + Dn\cT)$ time for any parameter $D$, with high probability.
\end{lemma}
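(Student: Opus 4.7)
The plan is to mirror the two-phase strategy sketched in Section~\ref{subsubsec:mwumatch3}: first eliminate the high-degree rows of $G[A,B]$ by direct random sampling, and then use the dynamic approximate $\OMv$ oracle, together with the binary-encoding trick from \cref{claim:degreeone}, to read off remaining edges from a sequence of subsampled induced subgraphs in which the surviving rows are forced to have degree at most one.

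Phase~1 (degree reduction) will iterate over $a \in A$ and, for each, sample $\O(n/D)$ vertices $b \in B$ uniformly, checking adjacency to $a$ in $G$ in $O(1)$ time per test. If a sampled $b$ is adjacent to $a$, we add $(a,b)$ to the output matching $M$ and delete both endpoints from $A$ and $B$. A Chernoff bound then guarantees with high probability that every $a$ remaining in $A$ satisfies $\deg^{G[A,B]}(a) \le D$. The total time of this phase is $\O(n \cdot n/D) = \O(n^2/D)$.

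Phase~2 will label $B$ by $\{1, \dots, |B|\}$ and let $B_k \subseteq B$ be the vertices whose label has $k$-th bit equal to $1$, for $k \le \lceil \log |B| \rceil$. Then, $\O(D)$ times, we will sample $B^{(j)} \sim_{1/D} B$, and for each bit $k$ query the dynamic $\OMv$ oracle on the indicator of $B^{(j)} \cap B_k$ to obtain $w^{(j,k)}$ with $d(M \mathbf{1}_{B^{(j)} \cap B_k}, w^{(j,k)}) \le \gamma n$. For each unmatched $a \in A$, we will decode the bits $(w^{(j,k)}_a)_k$ to a candidate label $b \in B^{(j)}$, verify in $O(1)$ time whether $(a,b) \in E$, and if so add $(a,b)$ to $M$ and remove $a, b$ from $A, B$. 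The cost is dominated by the $\O(D)$ $\OMv$ queries of cost $n\cT$ each, giving $\O(Dn\cT)$ total; decoding and verification cost only $\O(nD)$ in aggregate.

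For correctness and the error bound, any edge $(a,b) \in G[A,B]$ surviving into Phase~2 has $\deg^{G[A,B]}(a) \le D$, so the event that $b \in B^{(j)}$ while no other neighbor of $a$ in $B$ lies in $B^{(j)}$ has probability at least $(1/D)(1-1/D)^{D-1} \ge 1/(3D)$ per $j$. Across $\O(D)$ independent samples, this isolation event occurs for at least $\Omega(\log n)$ values of $j$ w.h.p., and whenever it does, $(a,b)$ is extracted unless one of the $O(\log n)$ bit queries at row $a$ for that $j$ errs. The main obstacle is the error charging: across all $\O(D)$ $\OMv$ queries the total number of coordinate errors is at most $\O(D) \cdot \gamma n = \O(D\gamma n)$, and dividing by the $\Omega(\log n)$ lower bound on isolating $j$'s (absorbed into the $\O$) yields that the number of $a$'s all of whose isolating $j$'s are corrupted is at most $\O(D\gamma n)$. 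Thus at most $\O(D\gamma n)$ edges of $G[A,B]$ fail to be adjacent to $M$, giving an $\O(D\gamma n)$-near maximal matching.
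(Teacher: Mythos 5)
Your proof is correct and follows essentially the same two-phase strategy as the paper: degree reduction on $A$ by direct random sampling, then $\O(D)$ rounds of subsampling $B^{(j)}\sim_{1/D}B$ combined with the binary-encoding trick of \cref{claim:degreeone} and queries to the dynamic approximate $\OMv$ oracle. Your error-charging argument is in fact somewhat more explicit than the paper's (which just notes that at most $\O(D\gamma n)$ coordinate errors occur in total and that an error-free coordinate recovers its edges); both arguments ultimately rest on the observation that missed edges consume disjoint query errors, so the bound $\O(D\gamma n)$ on missed edges follows.
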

\begin{proof}
Iterate over $a \in A$, and sample $\O(n/D)$ vertices $b \in B$. If there is an edge, add it to the maximal matching, and remove $a, b$ from $A, B$ respectively. Otherwise, continue. This costs $\O(n^2/D)$ total time, and with high probability we have reduced to the case where $\deg^{G[A,B]}(a) \le D$ for all $a \in A$. We give a procedure to find all edges in $G[A, B]$.

We repeat the argument of \cref{lemma:lowdegree,claim:degreeone}. For $j = 1, 2, \dots, \O(D)$ let $B_j \sim_{1/D} B$ be randomly sampled sets. With high probability, for every edge $(a, b) \in G[A, B]$, we have that $b \in B_j$ and $\deg^{G[A, B_j]}(a) = 1$ for some $j$. Fix such a $j$. Express the labels of vertices in $B_j$ in binary, and let $B_j^{(k)}$ be the vertices whose $k$-th bit is a $1$. If $M$ is the adjacency matrix of $G$, we can compute the Boolean product $M 1_{B_j^{(k)}}$  up to error in $\gamma n$ coordinates, in time $n\cT$ by assumption.
Any vertex $a \in A$ whose answers are never wrong correctly finds its neighboring edge.
Since we query this oracle $\O(D)$ times, the total number of edges in $G[A, B]$ that we do not find is bounded by $\O(D\gamma n)$, and the total runtime is $\O(Dn\cT)$. Finally, we return a maximal matching over the set of edges that the algorithm finds, which by definition is $\O(D\gamma n)$-near to a maximal matching.
\end{proof}

\begin{proof}[Proof of only if direction of \cref{thm:equivmatch}]
Let $\cA$ be a $(1-\gamma)$-approximate dynamic $\OMv$ data structure.
By \cref{lemma:contract}, we may assume that the graphs in our dynamic sequence have $s$ vertices, and maximum matching size at least $\eps s$.
We can wait $\eps^2 s$ updates between graphs on which we must compute an approximate maximum matching. We directly pass these updates to $\cA$ to update the matrix $M$, which will be the adjacency matrix of $G$. This uses amortized $\cT = s^{1-\delta}$ time.

Given a graph $G$ whose adjacency matrix is $M$, 
by \cref{lemma:inducedtomoracle}, calling a $\beta$-near maximal matching oracle a total of $\O(\eps^{-7})$ times suffices to implement an MWU that returns a fractional matching of value at least $(1-\eps)\mu(G)$, with sparsity $O(sT) = \O(s \eps^{-3})$, which can then be rounded to an integral matching. We apply \cref{lemma:mmoracle} with $D = s^{\delta/2}$ to implement each oracle call. For $\delta > 0$, $\O(D\gamma s) = \O(s^{1-\delta/2}) \le \tilde{\Omega}(\eps^5 \mu(G))$, because $\mu(G) \ge \eps s$, and thus \cref{lemma:mmoracle} successfully implements the required oracle of \cref{lemma:inducedtomoracle}. By \cref{lemma:mmoracle}, the time of this step is $\eps^{-O(1)} \cdot \O(s^2/D + Ds\cT) = \eps^{-O(1)} \cdot \O(s^{2-\delta/2})$. Amortized over $\eps^2 s$ steps between matching computations, the amortized time is $\O(\eps^{-O(1)} s^{1-\delta/2})$ as desired.
\end{proof}

We now show \cref{thm:onlineupperm} by combining this reduction with the $\OMv$ algorithm of \cite{LW17}.
\begin{theorem}[\!\!{\cite[Theorem 1.1]{LW17}}]
\label{thm:lw}
There is a randomized algorithm for $\OMv$ against adaptive adversaries with amortized time $n^2/2^{\Omega(\sqrt{\log n})}$.
\end{theorem}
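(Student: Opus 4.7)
This theorem is the main result of \cite{LW17} and is invoked as a black box in the above applications, so the ``proof'' in this paper is really a citation. To sketch how one would establish it from scratch, the natural approach is to adapt the combinatorial Boolean matrix multiplication algorithm of Bansal--Williams, which multiplies two $n \times n$ Boolean matrices in time $n^3/2^{\Omega(\sqrt{\log n})}$, to the online setting. The guiding observation is that OMv may be viewed as computing $M \cdot [v^{(1)} \mid \cdots \mid v^{(n)}]$, so any BMM algorithm that cleanly separates into an $M$-dependent preprocessing phase followed by per-column work yields an OMv algorithm with matching amortized complexity.

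The steps I would carry out are: (i) Preprocess $M$ into a succinct data structure. Start from a Four-Russians-style block decomposition at block width $s = \Theta(\sqrt{\log n})$, building lookup tables of size $2^s = 2^{\Theta(\sqrt{\log n})}$ per block that encode the block times every possible $s$-bit input. On top of this, layer the Bansal--Williams combinatorial trick, which uses sparse configurations arising from Behrend-type constructions to eliminate redundant contributions across blocks, shaving the extra $2^{\Omega(\sqrt{\log n})}$ factor beyond what plain Four Russians gives. (ii) For each query $v^{(i)}$, chunk $v^{(i)}$ into $s$-bit pieces and assemble $M v^{(i)}$ from table lookups combined with word-level packing, for a per-query cost of $n^2 / 2^{\Omega(\sqrt{\log n})}$. (iii) Amortize preprocessing: the preprocessing cost is $n^2 \cdot 2^{O(\sqrt{\log n})}$, which spread over $n$ queries contributes $n \cdot 2^{O(\sqrt{\log n})}$ per query, comfortably within the target.

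The main obstacle is that plain Four-Russians caps out at an $O(\log n)$ speedup over the $n^2$ trivial bound, so obtaining $2^{\Omega(\sqrt{\log n})}$ inherently requires the deeper combinatorial savings from the Williams / Bansal--Williams BMM framework, and one must verify that those savings survive when only one side of the product (the vectors $v^{(i)}$) is revealed online. Robustness against an \emph{adaptive} adversary is inherited because the preprocessing of $M$ is a deterministic function of $M$ alone; any randomness used at query time can be fixed before the adversary sees outputs and boosted by standard repetition, as carried out in \cite{LW17}.
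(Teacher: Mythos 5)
The paper does not prove this statement at all: it is imported verbatim as \cite[Theorem 1.1]{LW17} and used as a black box, so the only "proof" required here is the citation, which you correctly identify. The problem is with your sketch of how one would establish it, which contains a genuine gap: the claimed source of the $2^{\Omega(\sqrt{\log n})}$ savings does not exist. The Bansal--Williams combinatorial BMM algorithm runs in time $n^3/\log^{2.25} n$ (and later combinatorial improvements, e.g.\ Yu's, are still only polylogarithmic); no regularity-- or Four-Russians--based BMM algorithm with $n^3/2^{\Omega(\sqrt{\log n})}$ running time was known, so "layering the Bansal--Williams trick" on top of a width-$\Theta(\sqrt{\log n})$ table decomposition cannot shave a $2^{\Omega(\sqrt{\log n})}$ factor --- exactly as you note, the table-lookup part caps out at polylogarithmic savings, and the ingredient you invoke to go further does not deliver it. You also misattribute the role of Behrend's construction: it is not an algorithmic device inside Bansal--Williams but an \emph{obstruction} showing that regularity-based methods cannot save much more than $2^{O(\sqrt{\log n})}$ (this is precisely how the present paper cites \cite{B46} in the introduction).

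More broadly, the framing "any BMM algorithm that separates into $M$-dependent preprocessing plus per-column work yields $\OMv$" is true but vacuous here: the content of the $\OMv$ conjecture is that known subcubic BMM algorithms do \emph{not} decompose this way, and the Larsen--Williams algorithm is not an onlinification of any BMM algorithm; its $2^{\Omega(\sqrt{\log n})}$ speedup comes from randomized preprocessing and lookup-table techniques designed for the online matrix-vector (data-structure) setting, with the bound holding in an amortized sense over the $n$ queries. The adaptive-adversary claim also deserves more care than "fix the randomness and repeat": the reason adaptivity is harmless is that the algorithm's answers are correct with high probability for every query, so a union bound over the adaptively chosen sequence applies. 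For the purposes of this paper, simply citing \cite{LW17} (as the paper does) is the right move; if you want to sketch its internals, the sketch must be based on the actual Larsen--Williams construction rather than on a combinatorial BMM pipeline that does not achieve the stated bound.
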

Recall that \cref{thm:equivmatch} shows that matching is equivalent to approximate \emph{dynamic} $\OMv$. Thus, we need to argue that static $\OMv$ implies dynamic $\OMv$.
We start by using an $\OMv$ algorithm to implement a dynamic $\OMv$ algorithm. Let the \emph{dynamic $\OMv$} problem be as in \cref{def:dynapproxomv}, with $\gamma = 0$.
\begin{lemma}
\label{lemma:dynomv}
Assume there is an algorithm that solves the $\OMv$ problem against adaptive queries in total update and query time $n^{3-\delta}$. Then there is an algorithm that solves the dynamic $\OMv$ problem in amortized update time $n^{1-\delta/5}$ and query time $n^{2-\delta/5}$ against adaptive adversaries.
\end{lemma}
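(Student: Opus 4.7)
The plan is to reduce dynamic $\OMv$ to the hypothesized static $\OMv$ algorithm by a standard block decomposition. Partition $M$ into $n^{2-2\alpha}$ blocks of dimension $n^\alpha \times n^\alpha$, with $\alpha = 1/5$. On each block maintain an instance of the static $\OMv$ algorithm, which by hypothesis handles preprocessing plus up to $n^\alpha$ queries in total time $n^{\alpha(3-\delta)}$, along with a ``dirty'' flag that is initially false. On $\textsc{Update}(i,j,b)$, set $M_{ij}=b$ and mark the containing block dirty in $O(1)$ time. On $\textsc{Query}(v)$, partition $v$ into $n^{1-\alpha}$ subvectors of length $n^\alpha$; for each clean block invoke its static $\OMv$ data structure on the appropriate subvector, and for each dirty block brute-force the matrix-vector product in $O(n^{2\alpha})$ time. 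XOR-aggregate the block outputs to form $Mv$.

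To bound both the dirty-block count and the per-block $\OMv$ query budget, rebuild from scratch (reinitializing $\OMv$ on every block) after the earlier of $R = n^{6/5}$ updates or $n^\alpha$ queries since the last rebuild. Each rebuild costs $n^{2-2\alpha} \cdot n^{\alpha(3-\delta)} = n^{2+\alpha-\alpha\delta}$. If the rebuild was triggered by hitting the update limit, charge its cost evenly to those $R$ updates; if triggered by the query limit, charge evenly to those $n^\alpha$ queries. Adaptivity is preserved because each block's $\OMv$ data structure is itself adaptive (so composition is safe), and the reduction only feeds the adversary's query vector (partitioned into subvectors) to the block instances.

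With $\alpha = 1/5$ and $R = n^{6/5}$: a query pays $n^{2-2\alpha}\cdot n^{2\alpha-\alpha\delta} = n^{2-\delta/5}$ for clean blocks, at most $R\cdot n^{2\alpha} = n^{8/5}$ for dirty blocks, and at most $n^{2+\alpha-\alpha\delta}/n^\alpha = n^{2-\delta/5}$ amortized for a query-triggered rebuild; since $\delta \le 1$ (as any $\OMv$ algorithm must emit $n^2$ output bits), we have $n^{8/5} \le n^{2-\delta/5}$, so the per-query total is $O(n^{2-\delta/5})$. Each update pays $O(1)$ plus amortized rebuild charge $n^{2+\alpha-\alpha\delta}/R = n^{1-\delta/5}$.

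The main obstacle is simultaneously balancing three competing costs: the clean-block $\OMv$ cost $n^{2-\alpha\delta}$ (which pushes $\alpha$ up), the dirty-block brute-force cost $Rn^{2\alpha}$ (which pushes $R$ and $\alpha$ down), and the rebuild cost $n^{2+\alpha-\alpha\delta}/R$ amortized per update (which pushes $R$ up). The choice $\alpha = 1/5$, $R = n^{6/5}$ is essentially forced by the constraints $\alpha\delta \ge \delta/5$ (query bound) and $1 + \alpha - \alpha\delta + \delta/5 \le \log_n R \le 2 - 2\alpha - \delta/5$ (compatibility of update and dirty-block bounds), which together solve to $\alpha \le (1 - 2\delta/5)/(3-\delta)$ with $\alpha \ge 1/5$, feasible for all $\delta \le 1$.
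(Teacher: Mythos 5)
Your proposal is correct and uses essentially the same block-decomposition-plus-periodic-rebuild approach as the paper. The parameters differ slightly but inessentially: you rebuild after $R = n^{6/5}$ updates, whereas the paper rebuilds after $n^{2-3\alpha} = n^{7/5}$ updates (with $\alpha = 1/5$); both choices satisfy the claimed bounds, so the assertion that $R = n^{6/5}$ is ``essentially forced'' overstates the rigidity of the constraints. One small slip worth fixing: for the Boolean $\OMv$ problem considered here, the per-block outputs along a fixed row-block must be \emph{OR}-aggregated, not XOR-aggregated --- multiple column-blocks can each contribute a $1$ to the same output coordinate, and XOR would cancel them. (XOR would be the right operation only for an $\mathbb{F}_2$ matrix-vector variant.)
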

\begin{proof}
Partition $[n] = S_1 \cup S_2 \cup \dots \cup S_t$ with $|S_i| = n^{\alpha}$ for all $i \in [t]$, so $t = n^{1-\alpha}$. Initialize an $\OMv$ algorithm $\cA_{i,j}$ for $G[S_i,S_j]$ on all pairs $(i, j) \in [n]^2$. Initializing all these data structures costs total time $n^{2-2\alpha} n^{(3-\delta)\alpha} = n^{2+\alpha-\delta\alpha}$. Let $M$ be the adjancency matrix of $G$.

At each point in time, the algorithm remembers the set of updates since the last rebuild. Say there were $u$ updates and $q$ queries. If $u \ge n^{2-3\alpha}$, or $q \ge n^\alpha$, rebuild the data structure. Otherwise, implement a query on vector $v$ as follows. For $(i, j)$ such that $S_i \times S_j$ contains an update since the last rebuild, compute $M[S_i, S_j]v$ directly in time $n^{2\alpha}$. Otherwise, use the $\OMv$ data structure on $M[S_i, S_j]$ (which hasn't changed), in amortized time $n^{2\alpha(1-\delta)}$ (this is valid since we only make $q \le n^\alpha$ queries before rebuilding).
Thus, the amortized time ignoring rebuilds is at most \[ n^{2-3\alpha}n^{2\alpha} + n^{2-2\alpha} \cdot n^{2\alpha(1-\delta)} \le O(n^{2-2\alpha\delta}). \]
The amortized time of rebuilding because of updates is: $n^{2+\alpha-\delta\alpha} / n^{2-3\alpha} \le n^{4\alpha}$. The amortized cost of rebuilding due to queries is: $n^{2+\alpha-\delta\alpha}/n^{\alpha} = n^{2-\delta\alpha}$. We can take $\alpha = 1/5$ to conclude.
\end{proof}
\begin{corollary}
\label{cor:dynomv}
There is a randomized algorithm for dynamic $\OMv$ against adaptive adversaries with amortized update time $n/2^{\Omega(\sqrt{\log n})}$ and query time $n^2/2^{\Omega(\sqrt{\log n})}$.
\end{corollary}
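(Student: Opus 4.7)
The plan is to mimic the reduction in \cref{lemma:dynomv}, but instead of invoking it as a black box (which only works when the $\OMv$ savings are polynomial $n^{\delta}$), I will re-run the block-partition-and-rebuild analysis with the Larsen--Williams bound from \cref{thm:lw} plugged in. The subpolynomial savings factor $2^{\Omega(\sqrt{\log n})}$ will propagate through each step, at the cost of having to choose the block-size exponent $\alpha$ more carefully.

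Concretely, I would partition $[n] = S_1 \cup \dots \cup S_t$ with $|S_i| = n^{\alpha}$ and $t = n^{1-\alpha}$, and for each pair $(i,j) \in [t]^2$ initialize an $\OMv$ data structure $\cA_{i,j}$ on the submatrix $M[S_i, S_j]$ using \cref{thm:lw}. The per-block initialization cost is $n^{3\alpha}/2^{\Omega(\sqrt{\alpha \log n})}$; for constant $\alpha$ this is $n^{3\alpha}/2^{\Omega(\sqrt{\log n})}$, and summing over the $t^2 = n^{2-2\alpha}$ pairs gives total initialization $n^{2+\alpha}/2^{\Omega(\sqrt{\log n})}$. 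Then I maintain the list of submatrices that have received an update since the last rebuild. When a \textsc{Query} on vector $v$ arrives, for each block pair $(i,j)$ I either brute-force the product $M[S_i, S_j] v|_{S_j}$ in $n^{2\alpha}$ time (if that submatrix has an unprocessed update) or invoke $\cA_{i,j}$ at cost $n^{2\alpha}/2^{\Omega(\sqrt{\log n})}$.

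I would use the rebuild rule: trigger a full reinitialization after $u = n^{2-3\alpha}$ updates or $q = n^{\alpha}$ queries since the last rebuild. Between rebuilds, each \textsc{Query} costs at most
\[ u \cdot n^{2\alpha} + t^2 \cdot n^{2\alpha}/2^{\Omega(\sqrt{\log n})} \le n^{2-\alpha} + n^2/2^{\Omega(\sqrt{\log n})}. \]
The rebuild cost $n^{2+\alpha}/2^{\Omega(\sqrt{\log n})}$ amortizes to $n^{4\alpha}/2^{\Omega(\sqrt{\log n})}$ per update and $n^{2}/2^{\Omega(\sqrt{\log n})}$ per query. Choosing $\alpha = 1/4$, the amortized update cost becomes $n/2^{\Omega(\sqrt{\log n})}$ and the amortized query cost is $n^{7/4} + n^2/2^{\Omega(\sqrt{\log n})}$.

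The one place that requires a sanity check (and is the main conceptual obstacle, such as it is) is verifying that the stray $n^{2-\alpha} = n^{7/4}$ term arising from brute-forcing updated blocks is absorbed by the target bound $n^2/2^{\Omega(\sqrt{\log n})}$. This reduces to checking $n^{1/4} \ge 2^{\Omega(\sqrt{\log n})}$, i.e., $\tfrac{1}{4}\log n \ge \Omega(\sqrt{\log n})$, which holds for all large $n$. The adaptive-adversary guarantee is inherited directly from \cref{thm:lw} applied on each block, since the block-level query vectors $v|_{S_j}$ are determined by queries already revealed to the adversary. This gives the claimed amortized update time $n/2^{\Omega(\sqrt{\log n})}$ and query time $n^2/2^{\Omega(\sqrt{\log n})}$.
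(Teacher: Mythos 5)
Your proposal is correct and follows essentially the same route as the paper: the paper's (implicit) proof of \cref{cor:dynomv} is simply to combine \cref{lemma:dynomv} with \cref{thm:lw}, and you have correctly observed that \cref{lemma:dynomv}'s phrasing in terms of a polynomial savings $n^{\delta}$ does not literally cover Larsen--Williams' subpolynomial savings, so you re-run the same block-partition-and-rebuild argument with the $2^{\Omega(\sqrt{\log n})}$ factor carried through explicitly. Your choice $\alpha = 1/4$ matches the stated bound exactly (the paper's $\alpha = 1/5$ would in fact give a slightly better update time $n^{4/5}/2^{\Omega(\sqrt{\log n})}$), and the absorption check $n^{2-\alpha} \le n^2/2^{\Omega(\sqrt{\log n})}$ that you flag is the only place one must verify that the brute-forced blocks don't dominate.
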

Combining \cref{cor:dynomv} with the only if direction of \cref{thm:equivmatch} which we have established above shows \cref{thm:onlineupperm}.

\subsubsection{Dynamic matrix-vector from matching}
\label{subsec:ifmatch}
In this section we show the if direction of \cref{thm:equivmatch}, by describing an algorithm which takes a matching algorithm $\cA$ and applies it to give an algorithm for dynamic approximate-$\OMv$ (see \cref{def:dynapproxomv}).
\begin{definition}[Dynamic matching algorithm]
\label{def:matchingalgo}
We say that $\cA$ is a $(1-\eps)$-approximate dynamic matching algorithm with amortized runtime $\cT$ if it supports the following operations on a dynamic bipartite graph $G$:
\begin{itemize}
    \item $\cA.\textsc{Insert}(e), \cA.\textsc{Delete}(e)$: Insert/delete edge $e$ to/from $G$, in amortized time $\cT$,
    \item $\cA.\textsc{Matching}()$: Returns a $(1-\eps)$-approximate maximum matching $M$, with high probability against an adaptive adversary, in time $O(|M|)$.
\end{itemize}
\end{definition}
In \cref{algo:omv}, $\textsc{ApproxOMv}$ takes bipartite graph $G$, and $B \subseteq R$, corresponding to the matrix $M$ and vector $v$ respectively.
\begin{algorithm}[!ht]
  \caption{Approximate-$\OMv$ via Dynamic Matching \label{algo:omv}}
  \SetKwProg{Globals}{global variables}{}{}
  \SetKwProg{Proc}{procedure}{}{}
  \Globals{}{
    $n$: number of vertices in $G$, \\
    $\cA$: $(1-\eps)$-dynamic matching algorithm with amortized time $\cT$ (\cref{def:matchingalgo}), \\
    $D$: degree parameter, $D_L \gets \sqrt{D}$, $D_R \gets D$.
  }
  \Proc{$\textsc{ApproxOMv}(G, B)$}{
    $w \gets \vec{0} \in \R^n$ \\
    $A \gets L$ \tcp{Vertices where we have not found an edge.}
    \tcp{Step I(a): Degree reduction on $A$.}
    \For{$t = 1, \dots, T_L := \frac{10n^2\log n}{D}$}{
        Pick $u \in L$, $b \in B$ uniformly at random. \\
        If $(u, b) \in E(G)$, set $w_u \gets 1$, $A \gets A \setminus \{u\}$.
    }
    \tcp{Step I(b): Degree reduction on $B$.}
    \For{$t = 1, \dots, T_R = \frac{10n^2\log n}{D_R}$}{
        Pick $a \in A$, $b \in B$ uniformly at random. \\
        \If{\label{line:if}$(a, b) \in E(G)$}{
            For all edges $(a', b) \in E(G)$ with $a' \in A$, set $w_{a'} \gets 1$, $A \gets A \setminus \{a'\}$. \\
            $B \gets B \setminus \{b\}$.
        }
    }
    \tcp{Step II: Peeling matchings.}
    For $a \in L \setminus A$, call $\cA.\textsc{Insert}((a, a'))$. \tcp{$a'$ are all distinct.}
    For $b \in R \setminus B$, call $\cA.\textsc{Insert}((b', b))$. \tcp{$b'$ are all distinct.}
    \While{\label{line:while} $|M[A, B]| \ge \eps n$, for $M \gets \cA.\textsc{Matching}()$}{
        Let $A_M$ be the set of vertices in $A$ matched by $M[A, B]$. \\
        For $a \in A_M$, set $w_a \gets 1$, $A \gets A \setminus \{a\}$, $\cA.\textsc{Insert}((a, a'))$. \\
    }
    Return $\cA$ to its original state by deleting edges. \\
    \Return $w$.
  }
\end{algorithm}
We prove the correctness of \cref{algo:omv}.
\begin{lemma}
\label{lemma:correct}
For a bipartite graph $G$ with adjacency matrix $M$, and $B \subseteq R$ with indicator vector $v$, $\textsc{ApproxOMv}(G, B)$ as in \cref{algo:omv} returns a vector $w$ that with high probability satisfies $d(Mv, w) \le 4D\eps n$.
\end{lemma}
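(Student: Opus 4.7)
The proof proceeds in three stages corresponding to the three phases of \cref{algo:omv}. The overall strategy is: Steps I(a) and I(b) reduce the degrees in $G[A,B]$ to at most $D$ on both sides with high probability; the algorithm never sets $w_a=1$ falsely; and after Step II, the remaining $A$ has only $O(D\eps n)$ vertices with any neighbor in $B$, which bounds the Hamming distance, since all marks are correct.

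\textbf{Stage 1 (degree reduction via sampling).} First I show that w.h.p., after Step I(a), every $u$ remaining in $A$ satisfies $\deg_{G[L,B]}(u) \le D_L = \sqrt{D}$. For any fixed $u \in L$ with degree above $D_L$, one iteration of sampling hits an incident edge with probability $\gtrsim D_L/n^2$, so a Chernoff bound over $T_L$ iterations rules out this event with polynomial probability. Step I(b) is similar in spirit but trickier because $A$ shrinks throughout the loop. I use the ``final $A$'' trick: since $A$ is monotone decreasing, any vertex in the terminal $A$ was in $A$ the entire time. Thus, for any $b$ that remains in $B$ and has more than $D_R = D$ neighbors in the terminal $A$, each sampling iteration picks one of those edges with probability $\gtrsim D/n^2$ (because $|A_t|,|B_t| \le n$), and Chernoff over $T_R$ iterations gives concentration.

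\textbf{Stage 2 (no false positives, and reducing to $A$-coverage).} The algorithm sets $w_a = 1$ only after witnessing an edge $(a,b) \in E(G)$ with $b$ in the current $B$, a subset of the original $B$; hence $w_a = 1 \Rightarrow (Mv)_a = 1$. So errors come entirely from $a$'s with $w_a = 0$ and $(Mv)_a = 1$, that is, $a \in A_{\mathrm{final}}$ with at least one neighbor in the original $B$. A key observation is that for any $a \in A_{\mathrm{final}}$, its neighbors in the original $B$ equal its neighbors in the terminal $B$: whenever a vertex $b$ is removed from $B$ in Step I(b), all current-$A$ neighbors of $b$ are simultaneously marked and removed from $A$, so any $a$ still in $A_{\mathrm{final}}$ cannot have had such a $b$ as a neighbor.

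\textbf{Stage 3 (counting uncovered $A$ after Step II).} When the while loop exits, $|M[A,B]| < \eps n$ for the $(1-\eps)$-approximate matching $M$ returned by $\cA$. By \cref{lemma:induced}, $M[A,B]$ is a $(1,2\eps n)$-approximate matching on the current $G[A,B]$, so $\mu(G[A,B]) \le |M[A,B]|+2\eps n \le 3\eps n$. By König's theorem, $G[A,B]$ has a vertex cover of this size, and each cover vertex has degree at most $D$ by Stage 1, so $|E(G[A,B])| \le 3 D \eps n$. Hence at most $3D\eps n$ vertices of $A_{\mathrm{final}}$ have any neighbor in $B_{\mathrm{final}}$ (which, restricted to $A_{\mathrm{final}}$, equals $B_{\mathrm{orig}}$ by Stage 2). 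Collecting constants gives $d(Mv,w) \le 4D\eps n$ as claimed. \textbf{Main obstacle.} The most delicate step is Stage 1 for Step I(b), since the sampling distribution depends adaptively on the random history through the shrinking sets $A_t, B_t$. The ``final $A$'' trick decouples this dependency, but one must verify that the success probability never falls below $\Omega(|S|/n^2)$, which holds because $|A_t|,|B_t| \le n$ throughout; the remainder is bookkeeping between the constants in $T_L, T_R$ and $\eps, D$.
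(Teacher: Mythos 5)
Your proof is correct and follows essentially the same route as the paper's: degree reduction via sampling, then bounding $|E(G[A,B])|$ once the while loop terminates via \cref{lemma:induced}. Your Stage 2 observation --- that any $a\in A_{\mathrm{final}}$ has the same neighbors in the original and terminal $B$ because removals from $B$ simultaneously mark all current $A$-neighbors --- is a detail the paper's proof implicitly relies on (when it equates ``vertices $a \in A$ with $(Mv)_a = 1$'' with edges of the final $G[A,B]$) but does not spell out, so your write-up is a slightly more careful rendering of the same argument; likewise your handling of the adaptive sampling in Step I(b) via the monotonicity of $A$ is the correct justification for the bound the paper states directly.
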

\begin{proof}
Note that any vertex $u \in L$ with $\deg^{G[L,B]}(u) \ge D_L$, some edge $(u, b)$ is picked with probability at least $1-(1-D_L/n^2)^{T_L} \ge 1-n^{-10}$. Thus, after Step I(a) in \cref{algo:omv}, $\deg^{G[A,B]}(a) \le D_L$ for all $a \in A$. Similarly, for all $b \in B$ with $\deg^{G[A,B]}(b) \ge D_R$, some edge $(a, b)$ where $a \in A, b \in B$ is picked in Step I(b) with probability at least $1-(1-D_R/n^2)^{T_R} \ge 1-n^{-10}$. Thus, after Step I(b), $\deg^{G[A,B]}(x) \le D_R = D$ with high probability for all $x \in A \cup B$.

If the while loop in line \ref{line:while} does not hold, then by how algorithm $\cA$ is called and \cref{lemma:induced}, $\mu(G[A,B]) \le 4\eps n$. Thus $|E(G[A, B])| \le 4D\eps n$, and at most $4D\eps n$ vertices $a \in A$ have $(Mv)_a = 1$. Because \cref{algo:omv} only ever sets $w_a = 1$ when it finds an edge $(a, b)$, we conclude that $d(Mv, w) \le 4D\eps n$ with high probability.
\end{proof}

We analyze the runtime of \cref{algo:omv}.
\begin{lemma}
\label{lemma:omvruntime}
If $\cA$ has already undergone preprocessing, then \cref{algo:omv} runs in time $\O(n^2/\sqrt{D} + n\cT)$ with high probability, and leaves data structure $\cA$ in the same state that it was at the start.
\end{lemma}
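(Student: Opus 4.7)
The plan is to decompose \cref{algo:omv} into its three phases (Step~I(a), Step~I(b), and Step~II together with the final cleanup) and bound the cost of each separately. The work done outside of $\cA$ happens only in Step~I and will be shown to be $\O(n^2/\sqrt{D})$; the work done through $\cA$ happens only in Step~II and the cleanup, and I will argue it is $O(n\cT)$ by bounding the total number of \textsc{Insert}/\textsc{Delete} calls by $O(n)$. State restoration is then immediate once I verify that every insertion performed during the run has a matching deletion in the cleanup.

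For Step~I(a), the loop consists of $T_L$ iterations of $O(1)$ work each (pick a random pair, query adjacency, possibly remove a vertex from $A$), contributing $\O(n^2/\sqrt{D})$. For Step~I(b), the outer loop of $T_R=\O(n^2/D)$ iterations itself costs $\O(n^2/D)$; the harder part is bounding the inner work, since line~\ref{line:if} scans $O(n)$ neighbors of $b$ each time the \texttt{if} body is entered. Here I will invoke the high-probability guarantee (established in the proof of \cref{lemma:correct}) that after Step~I(a) every $a\in A$ satisfies $\deg^{G[L,B]}(a)\le D_L=\sqrt{D}$, so $|E(G[A,B])|\le |A|\sqrt{D}$. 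Hence the probability that a uniformly random pair $(a,b)\in A\times B$ is an edge is at most $\sqrt{D}/|B|$, and in the typical regime $|B|=\Theta(n)$ the expected number of times the \texttt{if} body executes over the $T_R$ rounds is $\O\!\left(\frac{n^2}{D}\cdot\frac{\sqrt{D}}{n}\right)=\O(n/\sqrt{D})$. Multiplying by $O(n)$ work per execution gives $\O(n^2/\sqrt{D})$ in expectation, which a Chernoff bound upgrades to high probability. The degenerate case $|B|\ll n$ can be handled separately since then $|E(G[A,B])|\le n|B|$ is already small.

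For Step~II, the initialization block sets up the induced-subgraph query via \cref{lemma:induced} using $|L\setminus A|+|R\setminus B|\le 2n$ calls to $\cA.\textsc{Insert}$. Each \texttt{while} iteration calls $\cA.\textsc{Matching}()$ (returning a matching of size $O(n)$ in $O(n)$ time) and performs $|A_M|$ additional insertions, where $|A_M|\ge\eps n$; because $|A|$ only decreases and every vertex in $L$ leaves $A$ at most once, the total number of in-loop insertions is at most $|L|=n$, and the loop terminates after $O(1/\eps)=\O(1)$ iterations. The cleanup then deletes every inserted edge exactly once, contributing a further $O(n)$ calls to $\cA.\textsc{Delete}$. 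Summing gives $O(n)$ \textsc{Insert}/\textsc{Delete} calls at amortized $\cT$ cost each, plus $\O(n)$ time for the $\O(1)$ matching queries, for a total of $O(n\cT)$. Because the cleanup deletes precisely the edges inserted during the run, $\cA$ is returned to its original state, as required.

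The main technical obstacle is the Step~I(b) inner-work bound: a naive argument charges each of the $T_R$ sampling iterations $O(n)$ inner work, giving $\O(n^3/D)$, which is far too slow. Resolving this requires combining the a-posteriori degree bound coming from Step~I(a) with the independence of the Step~I(b) samples to argue that the \texttt{if} body is entered only $\O(n/\sqrt{D})$ times with high probability. Once this is in place, adding the three phase bounds yields the claimed $\O(n^2/\sqrt{D}+n\cT)$ runtime, and state restoration follows immediately from the insertion/deletion bookkeeping.
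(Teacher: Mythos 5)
Your decomposition and bounds match the paper's proof: Step~I(a) is $\O(n^2/\sqrt{D})$ outright, Step~I(b) is charged against the post-I(a) degree bound on $A$, Step~II issues $O(n)$ calls to $\cA$ at amortized cost $\cT$ each, and state restoration follows from matching every insertion with a deletion. Where you differ is only in the care taken for Step~I(b): you rightly observe that $|E(G[A,B])|\le|A|\sqrt{D}$ gives $\O(n/\sqrt{D})$ expected firings of line~\ref{line:if} precisely in the regime $|B|=\Theta(n)$, and you acknowledge the small-$|B|$ case but do not close it --- the proposed fallback (that $|E(G[A,B])|\le n|B|$ is ``already small'') does not by itself give $\O(n^2/\sqrt{D})$ when $n/\sqrt{D}\ll|B|\ll n$, since there the number of firings can be $\Theta(|B|)$ at cost $O(n)$ each, for a total of $n|B|\gg n^2/\sqrt{D}$. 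The paper's proof writes the edge count as $|E(G[A,B])|\le D_L|B|$, which makes (number of firings) $\cdot$ (cost $O(|A|)$) cancel cleanly to $\O(D_L T_R)$; but the degree bound established in Step~I(a) is on the $A$-side and gives $D_L|A|$, not $D_L|B|$, so the paper is implicitly relying on the same $|B|=\Theta(n)$ regime. In short, your proposal takes essentially the same approach as the paper, and your more explicit treatment of Step~I(b) in fact surfaces a soft spot that the paper's own proof shares.
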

\begin{proof}
The runtime of Step I(a) of \cref{algo:omv} is clearly $\O(n^2/D_L) = \O(n^2/\sqrt{D})$. As argued in \cref{lemma:correct}, $\deg^{G[A,B]}(a) \le D_L$ for all $a \in A$, so $|E(G[A,B])| \le D_L|B|$. Thus, with high probability, line \ref{line:if} occurs $\O(D_L|B| \cdot T_R/(|A||B|))$ times. Each time costs $O(|A|)$ time, for a total of $\O(D_L T_R) = \O(n^2/\sqrt{D})$. Because $\cA$ is a $(1-\eps)$-dynamic matching algorithm with amortized time $\cT$ (\cref{def:matchingalgo}), the edge insertions and deletions to $\cA$ cost $O(n\cT)$ time. The runtime cost of returning the matchings $M$ is $O(n)$, because the size of $A$ reduces by $1$ per edge in the returned matchings.
\end{proof}
Combining \cref{lemma:correct,lemma:omvruntime} establishes the if direction of \cref{thm:equivmatch}.
\begin{proof}[Proof of if direction of \cref{thm:equivmatch}]
Assume that the dynamic matching algorithm $\cA$ has amortized update time $\cT := n^{1-c}\eps^{-C}$.
Pass all updates to $M$ in a dynamic approximate $\OMv$ algorithm to $\cA$ as edge updates. The amortized cost of these steps is $\cT$. Because we only wish for a $(1-\gamma)$-approximate dynamic $\OMv$ solution, we can afford to wait $\gamma n$ steps between recomputing the solution.
By \cref{lemma:correct,lemma:omvruntime}, we can correctly return a solution the $(1-4D\eps)$-approximate dynamic $\OMv$ in time $\O(n^2/\sqrt{D} + n\cT)$. Set $\gamma = 2D\eps$, so that the output is $(1-\gamma)$-approximate. For $D = \eps^{-1/2}$, $\gamma = 2\eps{1/2}$, and the query time is $\O(n^2\eps^{1/4} + n\cT)$. Because $\cT = n^{1-c}\eps^{-C}$, choosing $\eps = n^{-\alpha}$ for sufficiently small $\alpha$ gives a subquadratic query time, completing the proof.
\end{proof}

\section{Dynamic Vertex Cover}
\label{sec:vertexcover}

\subsection{Multiplicative weights framework}
\label{subsec:mwuvertexcover}
We give a multiplicative weights framework for finding a minimum vertex cover. We consider the decision version which is a covering linear program. On a bipartite graph $G$, determine whether there is a vector $y \in \R_{\ge0}^V$ such that $\sum_{v \in V} y_v = 1$ and $y_u + y_v \ge c$ for all edges $(u, v) \in E$. For $\lambda > 0$ consider the potential $\Phi(y) := \sum_{(u,v) \in E} \exp(-\lambda(y_u+y_v))$. Let $y^{(0)} = 0$, so that $\Phi(y^{(0)}) \le n^2$.

Once again, we abstractly define an MWU oracle for vertex cover that allows the MWU algorithm to make progress.
\begin{definition}[MWU oracle for vertex cover]
\label{def:voracle}
Given bipartite $G = (V, E)$, $y \in \R_{\ge0}^V$, and $c, \lambda, \eps > 0$, we say that $\Delta$ is the output of a \emph{MWU oracle} for vertex cover if it satisfies the following:
\begin{enumerate}
    \item \label{item:v1} $\sum_{v \in V} \Delta_v = 1$.
    \item (Value) \label{item:v2} $\sum_{v \in V} \Delta_v \left(\sum_{u \in N(v)} \exp(-\lambda(y_u+y_v)) \right) \ge (1-\eps)c \cdot \Phi(y).$
    \item (Width) \label{item:v3} $\Delta_v \le \eps/(2\lambda)$ for all $v \in V$.
\end{enumerate}
\end{definition}
We show that calling this MWU oracle $T = \O(\eps^{-O(1)})$ times gives a fractional vertex cover with value at most $(1+\eps)/c$. We start by bounding the potential decrease in a single iteration.
\begin{lemma}
\label{lemma:voraclesub}
Let $\Delta$ be the output of the MWU oracle. Then $\Phi(y+\Delta) \le (1-(1-2\eps)\lambda c) \Phi(y)$.
\end{lemma}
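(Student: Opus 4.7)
The plan is to mirror the matching potential analysis of \cref{lemma:phiincrease}, but using the Taylor expansion of $\exp(-x)$ rather than $\exp(x)$, and the covering-direction inequalities \cref{item:v2,item:v3} instead of the packing ones. First I would expand
\[ \Phi(y+\Delta) = \sum_{(u,v) \in E} \exp(-\lambda(y_u+y_v)) \exp(-\lambda(\Delta_u+\Delta_v)). \]
The width condition \cref{item:v3} gives $\Delta_u, \Delta_v \le \eps/(2\lambda)$, so $\lambda(\Delta_u+\Delta_v) \le \eps$. For $0 \le x \le \eps$, the bound $\exp(-x) \le 1 - x + x^2/2 \le 1 - (1-\eps/2)x$ lets me replace the inner exponential by $1 - (1-\eps/2)\lambda(\Delta_u+\Delta_v)$.

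Next I would exchange the order of summation, writing
\[ \sum_{(u,v) \in E} \exp(-\lambda(y_u+y_v))(\Delta_u+\Delta_v) = \sum_{v \in V} \Delta_v \sum_{u \in N(v)} \exp(-\lambda(y_u+y_v)), \]
where each edge contributes to the sum over $v$ twice, once for each endpoint. By the value condition \cref{item:v2}, this double sum is at least $(1-\eps) c \cdot \Phi(y)$.

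Combining these pieces yields
\[ \Phi(y+\Delta) \le \Phi(y) - (1-\eps/2)(1-\eps)\lambda c \cdot \Phi(y) \le (1-(1-2\eps)\lambda c)\Phi(y), \]
using $(1-\eps/2)(1-\eps) \ge 1 - 3\eps/2 \ge 1 - 2\eps$. I do not expect any real obstacle: this is a direct dual of the matching potential lemma, and the only care needed is tracking that the $\exp(-x)$ expansion gives a factor of $(1-\eps/2)$ (not $(1+\eps)$ as in \cref{lemma:phiincrease}) so the constants line up with the claimed $1-(1-2\eps)\lambda c$ bound.
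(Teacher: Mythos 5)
Your proposal is correct and mirrors the paper's own proof: both expand $\exp(-\lambda(\Delta_u+\Delta_v))$ via a truncated Taylor bound valid for $x \le \eps$ (the paper uses $\exp(-x)\le 1-(1-\eps)x$, you use $\exp(-x)\le 1-(1-\eps/2)x$, an immaterial difference), swap the edge sum for a vertex sum, apply the value condition, and finish with the same constant cleanup. Nothing further to add.
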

\begin{proof}
We calculate that
\begin{align*}
\Phi(y+\Delta) &= \sum_{e = (u, v) \in E} \exp(-\lambda(\Delta_u+\Delta_v)) \exp(-\lambda(y_u+y_v)) \\
&\le \sum_{e = (u, v) \in E} (1-(1-\eps)\lambda(\Delta_u+\Delta_v))\exp(-\lambda(y_u+y_v)) \\
&= \Phi(y) - (1-\eps)\lambda \sum_{v \in V} \Delta_v \left(\sum_{u \in N(v)} \exp(-\lambda(y_u+y_v)) \right) \\
&\le \Phi(y) - (1-\eps)^2\lambda c \cdot \Phi(y) \le (1 - (1-2\eps)\lambda c)\Phi(y).
\end{align*}
Here, the first inequality uses that $\exp(1-x) \le 1-x+x^2 \le 1-(1-\eps)x$ for $x \le \eps$ and the width condition (\cref{item:v3}). The final line uses \cref{item:v2}.
\end{proof}
This lets us upper bound the value of $\Phi$ after $T$ iterations.
\begin{corollary}
\label{cor:voraclesub}
Let $y^{(0)} = 0$ and $y^{(i+1)} = y^{(i)} + \Delta^{(i)}$ where $\Delta^{(i)}$ is an MWU oracle for $y^{(i)}$. Then $\Phi(y^{(T)}) \le \exp(-(1-2\eps)T\lambda c)n^2$.
\end{corollary}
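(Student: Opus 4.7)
The plan is to derive the bound by a direct induction on $T$, using \cref{lemma:voraclesub} as the one-step contraction estimate. The key observation is that \cref{lemma:voraclesub} gives a multiplicative decrease of $\Phi$ by a factor of $(1-(1-2\eps)\lambda c)$ per iteration, which is exactly the factor we want to compound.

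First I would note that $\Phi(y^{(0)}) = \Phi(0) = \sum_{(u,v) \in E} 1 = |E| \le n^2$. Then I would apply \cref{lemma:voraclesub} to each step $y^{(i)} \to y^{(i+1)} = y^{(i)} + \Delta^{(i)}$ (valid since $\Delta^{(i)}$ is an MWU oracle for $y^{(i)}$ by hypothesis), yielding $\Phi(y^{(i+1)}) \le (1-(1-2\eps)\lambda c) \Phi(y^{(i)})$. Chaining these together gives
\[ \Phi(y^{(T)}) \le (1-(1-2\eps)\lambda c)^T \Phi(y^{(0)}) \le (1-(1-2\eps)\lambda c)^T n^2. \]

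Finally I would apply the standard inequality $1-x \le \exp(-x)$ (valid for all real $x$) with $x = (1-2\eps)\lambda c$ to convert the product to the desired exponential form $\exp(-(1-2\eps)T\lambda c)n^2$. There is no real obstacle here; the corollary is just the iterated form of \cref{lemma:voraclesub}, and the only step worth mentioning is the replacement of $(1-x)^T$ by $\exp(-xT)$ at the end.
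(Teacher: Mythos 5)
Your proof is correct and matches the paper's intended argument: the corollary is stated without an explicit proof precisely because it is the immediate iterated form of \cref{lemma:voraclesub}, combined with $\Phi(y^{(0)}) = |E| \le n^2$ and $1-x \le \exp(-x)$. Nothing is missing.
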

From this we can extract a nearly feasible vector for the covering linear program.
\begin{corollary}
\label{cor:voracle}
Let $y^{(0)} = 0$ and $y^{(i+1)} = y^{(i)} + \Delta^{(i)}$ where $\Delta^{(i)}$ is an MWU oracle for $y^{(i)}$. Then the vector $\bar{y} := \frac{1}{T} y^{(T)}$ satisfies $\sum_{e \in E} \bar{y}_e = 1$ and $y_u + y_v \ge (1-2\eps)c - \frac{2\log n}{\lambda T}$ for all edges $(u, v) \in E$.
\end{corollary}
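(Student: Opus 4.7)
The plan is to mirror the proof of \cref{cor:xfinal}, which establishes the analogous statement for the matching MWU. The two ingredients are a per-edge lower bound extracted from $\Phi(y^{(T)})$, and the normalization $\sum_v \bar{y}_v = 1$ obtained directly from \cref{item:v1} of the oracle.

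First I would argue the normalization. By \cref{item:v1} of \cref{def:voracle}, every $\Delta^{(i)}$ satisfies $\sum_{v \in V} \Delta^{(i)}_v = 1$, so $\sum_{v \in V} y^{(T)}_v = T$ and hence $\sum_{v \in V} \bar{y}_v = 1$. (I would also flag that the displayed statement's $\sum_{e \in E}\bar y_e$ should read $\sum_{v\in V}\bar y_v$, since $\bar y$ is a vertex vector.)

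Next I would derive the edge constraint. For any edge $(u,v) \in E$, since every term in $\Phi(y^{(T)}) = \sum_{(u',v') \in E} \exp(-\lambda(y^{(T)}_{u'}+y^{(T)}_{v'}))$ is nonnegative,
\[ \exp(-\lambda(y^{(T)}_u + y^{(T)}_v)) \le \Phi(y^{(T)}) \le \exp(-(1-2\eps)T\lambda c)\, n^2, \]
where the last bound is \cref{cor:voraclesub}. Taking logarithms gives $-\lambda(y^{(T)}_u + y^{(T)}_v) \le -(1-2\eps)T\lambda c + 2\log n$, i.e.\ $y^{(T)}_u + y^{(T)}_v \ge (1-2\eps)T c - \tfrac{2\log n}{\lambda}$. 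Dividing by $T$ yields $\bar{y}_u + \bar{y}_v \ge (1-2\eps)c - \tfrac{2\log n}{\lambda T}$, as desired.

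There is no real obstacle here; the whole argument is a three-line manipulation once \cref{cor:voraclesub} is in hand. The only thing to be careful about is the direction of the inequalities (this is a covering LP, so potential decreases and lower bounds on $y_u + y_v$ come from an upper bound on a single exponential term), which is exactly opposite to the matching case in \cref{cor:xfinal}.
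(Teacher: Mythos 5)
Your proposal is correct and takes essentially the same route as the paper: bound $\exp(-\lambda(y^{(T)}_u+y^{(T)}_v))$ by $\Phi(y^{(T)})$, apply \cref{cor:voraclesub}, take logarithms, and divide by $\lambda T$. You are also right that the statement's $\sum_{e\in E}\bar y_e$ is a typo for $\sum_{v\in V}\bar y_v$.
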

\begin{proof}
We know that $\exp(-\lambda(y_u+y_v)) \le \Phi(y^{(T)}) \le \exp(-(1-2\eps)T\lambda c)n^2$ by \cref{cor:voraclesub}. The result follows by taking logarithms and dividing by $\lambda T$.
\end{proof}

\subsection{Offline dynamic vertex cover}
\label{subsec:offlinevertexcover}

We give a general framework for returning a solution satisfying the conditions of the MWU oracle in \cref{def:voracle}.
\begin{lemma}
\label{lemma:voracleabstract}
Let $\lambda = \eps/(10c)$ and $T = 20\eps^{-2}\log n$, and $1/\mu(G) \ge c \ge 1/(2\mu(G))$. There is an algorithm that implements an MWU oracle by making $\O(\eps^{-2})$ calls to the following oracle: for a subset $S \subseteq R$, estimate $\deg^{G[L, S]}(v)$ up to a $(1 \pm \eps/10)$-multiplicative factor for all $v \in L$.
\end{lemma}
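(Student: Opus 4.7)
The plan is to mirror the matching oracle construction of \cref{lemma:inducedtomoracle}: bucket the vertices by their current $y$-value so that $\exp(-\lambda y_v)$ is essentially constant inside a bucket, use the degree-estimation oracle once per bucket to approximate each vertex's contribution to $\Phi(y)$, and then take $\Delta$ to be the LP optimum that concentrates the maximum allowed mass on the vertices with the largest estimated contributions.

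First I bound the range of $y_v$: each per-iteration increment obeys the width condition $\Delta^{(t)}_v \le \eps/(2\lambda) = 5c$, so after $T = 20\eps^{-2}\log n$ iterations $\lambda y_v \le \lambda\cdot 5cT = \tfrac{1}{2}\eps T = O(\eps^{-1}\log n)$. Partitioning each side of $V$ into buckets of width $\eps/(10\lambda)$ in the $y$-coordinate then makes $\exp(-\lambda y_v)$ agree to within a $(1\pm\eps/10)$-factor within a bucket, using $\tau = \tilde O(\eps^{-2})$ buckets per side. For every bucket $R_j$ I make one oracle call with $S=R_j$ to obtain $\tilde d_j(v) = (1\pm\eps/10)\deg^{G[L,R_j]}(v)$ for each $v \in L$, and symmetrically one call per bucket $L_i$ to obtain the analogous degree estimates for $v\in R$; this is $2\tau = \tilde O(\eps^{-2})$ oracle calls in total. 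Letting $\gamma_j$ denote a representative value of $\exp(-\lambda y_u)$ for $u\in R_j$, I form
\[ \tilde\phi(v) \;:=\; \exp(-\lambda y_v)\sum_j \gamma_j\,\tilde d_j(v) \quad\text{for } v\in L, \]
and analogously for $v\in R$. Combining the two $(1\pm\eps/10)$-errors from the bucketings with the degree-estimation error yields $\tilde\phi(v) = (1\pm\eps/3)\phi(v)$ for every $v$, where $\phi(v) := \sum_{u\in N(v)}\exp(-\lambda(y_u+y_v))$.

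I then produce $\Delta$ greedily: sort vertices by $\tilde\phi(v)$ in decreasing order, set $\Delta_v = 5c = \eps/(2\lambda)$ on the top $k := \lfloor 1/(5c)\rfloor$ vertices, and place the residual mass $1 - 5ck \le 5c$ on the $(k+1)$-st. The conditions $\sum_v \Delta_v = 1$ and $\Delta_v \le \eps/(2\lambda)$ (\cref{item:v1} and \cref{item:v3}) hold by construction, and this $\Delta$ is the optimiser of $\sum_v \Delta_v\tilde\phi(v)$ over the LP $\{\Delta\ge 0,\ \sum_v\Delta_v = 1,\ \Delta_v \le 5c\}$. A feasible comparator is $\Delta^*_v = 1/\mu(G)$ for $v$ in a minimum vertex cover $C^*$: this lies in the LP because $1/\mu(G) \le 2c \le 5c$ (using $c\ge 1/(2\mu(G))$), and because every edge of $G$ has at least one endpoint in $C^*$,
\[ \sum_v \Delta^*_v\phi(v) \;=\; \frac{1}{\mu(G)}\sum_{v\in C^*}\phi(v) \;\ge\; \frac{\Phi(y)}{\mu(G)} \;\ge\; c\Phi(y), \]
using $c\le 1/\mu(G)$. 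The LP-optimality of the greedy on $\tilde\phi$ then gives $\sum_v\Delta_v\tilde\phi(v) \ge (1-\eps/3)c\Phi(y)$, and the two-sided $(1\pm\eps/3)$ approximation between $\tilde\phi$ and $\phi$ converts this into $\sum_v\Delta_v\phi(v) \ge (1-\eps)c\Phi(y)$ after adjusting constants, which is \cref{item:v2}.

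The main obstacle I anticipate is just tracking the multiplicative slacks cleanly: three independent $(1\pm\eps/10)$-factors (the bucketings of $y_u$ and $y_v$ and the degree estimate) have to multiply into something comfortably smaller than the $(1-\eps)$-slack in the target, and the comparator argument has to tolerate the factor-$2$ gap between $c$ and $1/\mu(G)$ as well as the rounding in the definition of $k$. These are routine but want a careful choice of numerical constants. A minor subtlety is that the oracle as stated only returns degrees from $L$ into subsets of $R$; the proof needs the analogous symmetric call, which I invoke implicitly above.
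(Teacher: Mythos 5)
Your proof is correct and follows essentially the same approach as the paper: bucket vertices by $y$-value so $\exp(-\lambda y_u)$ is nearly constant per bucket, estimate each vertex's contribution $\phi(v)=\sum_{u\in N(v)}\exp(-\lambda(y_u+y_v))$ via the degree oracle, and certify the value condition by comparing against $\Delta^*_v = 1/\mu(G)$ on a minimum vertex cover. The only difference is cosmetic: the paper spreads mass uniformly ($\Delta_v = 1/|U|$) over the largest sorted prefix $U$ whose average estimated contribution clears the threshold, and deduces $|U|\ge\mu(G)$ from the same vertex-cover comparator, whereas you saturate the top $\lfloor 1/(5c)\rfloor$ vertices at the width cap $5c$ and argue LP-optimality of the greedy solution; both constructions are optimal for the same knapsack-type LP and yield the same conclusion.
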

\begin{proof}
Let $\tilde{s} \in \R_{>0}^V$ satisfy for all $v \in V$:
\[ \sum_{u \in N(v)} \exp(-\lambda(y_u+y_v)) \ge \tilde{s}_v \ge (1-\eps/2)\sum_{u \in N(v)} \exp(-\lambda(y_u+y_v)). \]
Let $U$ be the largest set satisfying: $|U|^{-1} \sum_{v \in U} \tilde{s}_v \ge (1-\eps)c \cdot \Phi(y).$ Such $U$ can be found in nearly-linear time by sorting by $\tilde{s}_v$, and taking the $k$ largest for some $k$. Set $\Delta_v = 1/|U|$ for $v \in U$.

We claim that this choice of $\Delta$ satisfies the conditions of the MWU oracle of \cref{def:voracle}. \cref{item:v1} follows trivially, and \cref{item:v2} follows by the definition of $\tilde{s}$ and $U$. To conclude, it suffices to establish that $|U| \ge 2\lambda/\eps = 1/(5c)$. Because $c \ge 1/(2\mu(G))$, it suffices to prove that $|U| \ge \mu(G)$. Indeed, if $U$ is a vertex cover of size $\mu(G)$ then 
\begin{align*}
\frac{1}{|U|} \sum_{v \in U} \tilde{s}_v &\ge \frac{1-\eps/2}{|U|} \sum_{v \in U} \sum_{u \in N(v)} \exp(-\lambda(y_u+y_v)) \\ &\ge (1-\eps/2) \sum_{e = (u, v) \in E} \frac{1}{\mu(G)} \exp(-\lambda(y_u+y_v)) \ge (1-\eps/2)c \cdot \Phi(y).
\end{align*}
By maximality of $U$, we know that $|U| \ge \mu(G)$ as desired.

Finally we argue that finding the estimates $\tilde{s}_v$ can be implemented with the oracle, which asks to estimate $\deg^{G[L,S]}(v)$ for a subset $S$. By \cref{item:v3} and $T = 20\eps^{-2}\log n$, we know that $0 \le \lambda y_v \le T\eps = O(\eps^{-1}\log n)$. Partition $R = R_0 \cup \dots \cup R_t$ where $R_i = \{v \in R: i\eps/10 \le \lambda y_v \le (i+1)\eps/10\}$. For any $v \in L$, we can estimate $\sum_{u \in N(v)} \exp(-\lambda(y_u+y_v)) = \exp(-\lambda y_v) \sum_{u \in N(v)} \exp(-\lambda y_u)$ by estimating $\deg^{G[L,R_i]}(v)$ up to $1\pm\eps/10$ for $i = 0, \dots, t$. We can do a symmetric procedure for $v \in L$. The number of calls is $2t \le \O(\eps^{-2})$.
\end{proof}
We start by giving a procedure to reduce the number of vertices in our dynamic graph to be proportional to the size of the minimum vertex cover, and how to recover a vertex cover.

Let us describe the approach at a high level. Let $G$ be a graph whose maximum vertex cover is size $s$, and let $A \cup B$ be a a constant-factor vertex cover of $G$. Let $H$ be a random \emph{contraction} (see \cref{def:contraction}) of $G$ onto $\Theta(s\eps^{-2})$ vertices. We maintain a vertex cover $S$ on $H$. To map $S$ back to $G$, do the following. First find all vertices in $A, B$ that map to something in $S$ -- this is the subset $\bar{A} \cup \bar{B}$ of $A \cup B$ that we will use in the vertex cover. Other than that, we are forced to take all neighbors of vertices in $A \setminus \bar{A}$ or $B \setminus \bar{B}$. This is a vertex cover because $A \cup B$ was a vertex cover in $G$. The challenging part is to establish that with high probability, $S$ is an approximately minimal vertex cover in $G$.
\begin{lemma}
\label{lemma:contractvtx}
Let $G = (V = L \cup R, E)$ be a bipartite graph with maximum vertex cover size $s$. Let $A \cup B$ be a vertex cover of $G$, with $|A| + |B| \le 10s$. Let $H = (V' = L' \cup R', E')$ be a random contraction of $G$ with $|L'| = |R'| = \Theta(s\eps^{-3})$, represented by map $\phi: V(G) \to V(H)$. Let $S$ be a $(1+\eps)$-approximate minimum vertex cover of $H$.

Let $A_1 := \{a \in A : |\phi^{-1}(\phi(a))| > 1\}$, $A_2 = \{a \in A \setminus A_1 : \phi(a) \in S$, and $\bar{A} = A_1 \cup A_2$. Define $\bar{B}$ similarly. Finally, let $\bar{S} = \bar{A} \cup \bar{B} \cup \bigcup_{a \in A \setminus \bar{A}} N(a) \cup \bigcup_{b \in B \setminus \bar{B}} N(b)$. Then $\bar{S}$ is a vertex cover of $G$ with $|S| \le (1+O(\eps))s$, with high probability.
\end{lemma}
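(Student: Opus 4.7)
The plan is to verify $\bar{S}$ is a vertex cover of $G$ and then bound $|\bar{S}|$ via a clean structural decomposition.

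For coverage, I would do case analysis on an arbitrary edge $(\ell, r) \in E(G)$ with $\ell \in L$, $r \in R$. Since $A \cup B$ is a vertex cover, $\ell \in A$ or $r \in B$. In the former case, either $\ell \in \bar{A} \subseteq \bar{S}$, or $\ell \in A \setminus \bar{A}$, in which case $N(\ell) \subseteq \bar{S}$ puts $r$ in $\bar{S}$ by construction. The case $r \in B$ is symmetric.

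For the size bound, the key structural observation is that for $a \in A \setminus \bar{A}$ we have $\phi(a) \notin S$, so since $S$ covers every edge of $H$, every $H$-neighbor of $\phi(a)$ must lie in $S \cap R'$. Hence $\phi(N(a)) \subseteq S \cap R'$, giving $\bigcup_{a \in A \setminus \bar{A}} N(a) \subseteq \phi^{-1}(S \cap R') \cap R$; a symmetric statement holds on the $B$ side. Combined with $A_2 \cup B_2 \subseteq \phi^{-1}(S)$, which is immediate from the definition, I obtain the inclusion
\[
\bar{S} \;\subseteq\; (A_1 \cup B_1) \;\cup\; \bigl(\phi^{-1}(S) \cap V\bigr),
\]
so it suffices to bound $|A_1| + |B_1|$ and $|\phi^{-1}(S) \cap V|$ separately.

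The collision term $|A_1| + |B_1|$ is controlled by a birthday argument. The ambient setting in \cref{subsec:offlinevertexcover} reduces to $|L|, |R| = O(s)$, so for each $a \in A$ one has $\Pr[a \in A_1] \le (|L|-1)/|L'| = O(\eps^3)$. Linearity gives $\E[|A_1| + |B_1|] = O(\eps^3 s)$, and Chernoff concentration over the independent hash values $\phi(v)$ yields $|A_1| + |B_1| = O(\eps s)$ with high probability. For $|\phi^{-1}(S) \cap V|$, the obstacle is that $S$ is chosen as a function of $\phi$. Exploiting obliviousness of the adversary (as flagged in the overview), I would fix a minimum vertex cover $T$ of $G$ with $|T| = s$ before drawing $\phi$. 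Matching preservation for random contractions (analogous to \cref{lemma:contract}) gives $\mu(H) \ge (1-\eps)s$ w.h.p., so $\phi(T)$ is itself a $(1+O(\eps))$-approximate vertex cover of $H$ of size at most $s$. Taking the $(1+\eps)$-approximate VC procedure to output $S := \phi(T)$ (or, more generally, arguing that any algorithmic $S$ satisfies $|S \triangle \phi(T)| = O(\eps s)$), I obtain $|\phi^{-1}(S) \cap V| \le |T| + |\{v \in V \setminus T : \phi(v) \in \phi(T)\}|$; the second term has expectation $O(|V| \cdot |T|/|V'|) = O(\eps^3 s)$ and is $O(\eps s)$ w.h.p.\ by Chernoff. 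Combining, $|\bar{S}| \le s + O(\eps s) = (1+O(\eps))s$.

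The main technical difficulty is the $\phi$-dependence of $S$: a naive union bound over $(1+\eps)s$-sized subsets of $V'$ is prohibitive, since there are $\eps^{-\Theta(s)}$ of them---far more than the Chernoff tail can absorb at the scale we need. The resolution above---fixing a reference $T$ independent of $\phi$ by obliviousness and reducing the size control to bounding $|\phi^{-1}(\phi(T))|$---is precisely the content of the overview's caveat that the contraction reduction for vertex cover works only against oblivious adversaries, in contrast to the matching contraction \cref{lemma:contract}.
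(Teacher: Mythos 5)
Your coverage argument is correct, and the inclusion $\bar{S} \subseteq (A_1 \cup B_1) \cup \phi^{-1}(S)$ is a valid observation, but your size bound has a genuine gap that the paper's \cref{lemma:helper} is specifically designed to close. You reduce to bounding $|\phi^{-1}(S) \cap V|$ and propose to control it by relating the algorithm's output $S$ to a fixed reference cover $T$. The fallback claim that ``any algorithmic $S$ satisfies $|S \triangle \phi(T)| = O(\eps s)$'' is false: near-optimal vertex covers of the same graph can differ on $\Omega(s)$ vertices (e.g.\ in a complete bipartite graph on $s + s$ vertices either side is an optimal cover and they are disjoint). And choosing ``$S := \phi(T)$'' is not available to you --- the lemma hands you an arbitrary $(1+\eps)$-approximate cover $S$. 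Obliviousness does fix $G$ and $T$ before $\phi$ is drawn, but $S$ is computed from $H$, which depends on $\phi$, so you still need a bound that holds uniformly over all the subsets $S$ could pick out. That is exactly what \cref{lemma:helper} supplies: it controls $\bigl|\bigcup_{i \in I} \phi(S_i)\bigr|$ versus $\bigl|\bigcup_{i \in I} S_i\bigr|$ (with $S_i = N(i)$) \emph{simultaneously for all} $I \subseteq [s]$, so whatever $\bar A, \bar B$ the adversarial $S$ induces, the union-size comparison between the covers $\mathrm{VC}_G(\bar A, \bar B)$ and its image in $H$ is already in hand. Its proof is careful to avoid the naive $2^s$-way union bound you rightly flag as prohibitive, by first pre-partitioning the index set into chunks with bounded unions. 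Your route skips the one nontrivial ingredient of the paper's argument.

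A secondary issue: your birthday/Chernoff bounds for $|A_1| + |B_1|$ and for $|\phi^{-1}(\phi(T))|$ rest on the assertion that ``the ambient setting reduces to $|L|, |R| = O(s)$.'' That misreads the setup: the $n' = \Theta(s\eps^{-3})$ in \cref{subsec:offlinevertexcover} is the vertex count of the \emph{contracted} graph $H$, whereas \cref{lemma:contractvtx} is applied with $G$ on $n$ vertices (unconstrained relative to $s$); see the ``we must do the multiplication on the adjacency matrix of $G$ itself'' remark in the proof of \cref{thm:mainuppervtx}. With $|V| = n$, the expected number of collisions $\E\bigl[|\{v : \phi(v) \in \phi(T)\}|\bigr]$ is on the order of $n \eps^3$, not $s \eps^3$, so that leg of the argument also does not close as written.
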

We start with an abstract helper lemma.
\begin{lemma}
\label{lemma:helper}
Let $S_1, \dots, S_s \subseteq [n]$, and let $\phi: [n] \to [n']$ be a random map, for $n' = \Theta(s\eps^{-3})$. With high probability over $\phi$, for all subsets $I \subseteq [s]$:
\begin{align} \left|\bigcup_{i \in I} \phi(S_i)\right| \ge \min\left\{\left|\bigcup_{i \in I} S_i\right| - \eps s, s\right\}. \label{eq:want} \end{align}
\end{lemma}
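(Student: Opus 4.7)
Set $T_I := \bigcup_{i \in I} S_i$ and write $L(T) := |T| - |\phi(T)|$ for the ``collision loss'' of $\phi$ on a subset $T \subseteq [n]$. I would prove the lemma by splitting on the size of $T_I$.

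\emph{Case 1 ($|T_I| \le 2s$).} The target simplifies to $L(T_I) \le \eps s$. For a fixed set $T \subseteq [n]$ with $|T| \le 2s$, I would bound $\Pr[L(T) \ge \eps s]$ by a forest-counting argument: if $L(T) \ge k$, then there exists a forest $F \subseteq \binom{T}{2}$ with exactly $k$ edges, every one of which is a $\phi$-collision (produced by picking, for each ``excess'' element of $T$ relative to its fiber, an edge to an earlier representative of that fiber). For any fixed forest with $c$ connected components on $v = k+c$ vertices, the event that all of its edges are $\phi$-collisions has probability $\prod_{\text{components } C}(1/n')^{|C|-1} = (1/n')^k$, since $\phi$ must be constant on each component. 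Summing over the at most $(|T|^2/2)^k/k!$ candidate forests and applying Stirling yields
\[
\Pr\bigl[L(T) \ge \eps s\bigr] \;\le\; \bigl(e |T|^2/(2 n'\cdot \eps s)\bigr)^{\eps s} \;=\; \bigl(O(\eps^2)\bigr)^{\eps s},
\]
using $|T| \le 2s$ and $n' = \Theta(s/\eps^3)$.

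\emph{Case 2 ($|T_I| > 2s$).} The target is $|\phi(T_I)| \ge s$, and by monotonicity of $\phi(\cdot)$ it suffices to exhibit some $I' \subseteq I$ with $|T_{I'}| \in [s+\eps s, 2s]$; Case~1 then gives $|\phi(T_I)| \ge |\phi(T_{I'})| \ge |T_{I'}| - \eps s \ge s$. I would construct such $I'$ greedily, absorbing indices of $I$ one at a time until the running union first enters $[s+\eps s, 2s]$. The only way this fails is if some single $S_i$ with $|S_i| > s - \eps s$ overshoots the upper end when added; I handle this edge case either by appending a small additional $S_j \subseteq I$ to land back in the window (using $|T_I| > 2s$ to provide new elements) or, when $|S_i| \ge 2s$, by taking $I' = \{i\}$ and invoking direct concentration of $|\phi(S_i)|$, whose mean $n'(1-(1-1/n')^{|S_i|})$ comfortably exceeds $s$.

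The proof concludes by applying the Case~1 tail bound and union-bounding over the at most $2^s$ distinct candidate subsets $T_I$ of $[n]$.

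\textbf{Main obstacle.} The delicate point is obtaining a fixed-$T$ concentration for $L(T)$ sharp enough to survive the union bound over all $2^s$ choices of $I$. A naive McDiarmid or Azuma bound on the $1$-Lipschitz function $|\phi(T)|$ only yields $\exp(-\Omega(\eps^2 s))$, which is too weak. The forest-counting step extracts the crucial additional $\log(1/\eps)$ factor in the exponent by exploiting the rarity of any individual collision into $n' = \Theta(s/\eps^3)$ bins; the calibration $n' = \Theta(s/\eps^3)$ is tuned precisely so that the expected loss $\Theta(|T|^2/n') = \Theta(\eps^3 s)$ lies a factor $\Theta(1/\eps^2)$ below the target deviation $\eps s$, which is the quantitative source of the savings. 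Case~2 further requires a bit of bookkeeping around the degenerate situation where a single $S_i$ is much larger than $s$.
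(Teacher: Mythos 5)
Your forest-counting bound for a \emph{single} set $T$ with $|T|\le 2s$ is correct: the probability that $L(T) := |T|-|\phi(T)| \ge \eps s$ is at most $\bigl(e|T|^2/(2n'\eps s)\bigr)^{\eps s} = (O(\eps^2))^{\eps s}$. But the final step — ``union-bounding over the at most $2^s$ distinct candidate subsets $T_I$'' — does not close. For the union bound to yield a high-probability statement you need $2^s\cdot(O(\eps^2))^{\eps s}$ to be small, i.e.\ $\eps\log(1/\eps) \ge \Omega(1)$. Since $\eps\log(1/\eps)\to 0$ as $\eps\to 0$ (and is at most $1/e$ for all $\eps$), this fails for every $\eps$ below a fixed constant, and the lemma must hold for $\eps$ as small as $\mathrm{poly}(1/\log n)$ or even $n^{-\delta}$ in the paper's applications. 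The paper explicitly flags this: na\"ively applying a union bound over all $2^s$ subsets does not work, precisely because the per-subset failure probability is only $\exp(-\Theta(\eps s\log(1/\eps)))$, which is far larger than $2^{-s}$. The $\log(1/\eps)$ factor your forest argument extracts is real but nowhere near enough: it must multiply $\eps$, not appear on its own.

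The paper escapes this by never union-bounding over all $2^s$ subsets. It first greedily peels off pieces $J_1,\dots,J_t$ of $[s]$, each of size at least $\eps s/10$ and each with small union $\le 10s$; there are at most $10/\eps$ such pieces, so their combined union $U_0$ has size $O(s/\eps)$, and one single tail bound controls all collisions inside $U_0$. For the leftover indices $J^*$, the greedy construction guarantees that any residual subset $I^*\subseteq J^*$ with $|U_{I^*}|<10s$ must have $|I^*| < \eps s/10$, so the union bound is only over $\sum_{k\le \eps s/10}\binom{s}{k}\approx (e/\eps)^{\eps s/10}$ sets rather than $2^s$ — and that is small enough to be absorbed by the per-set tail of $(O(\eps))^{\eps s/2}$. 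In short, the hard part of this lemma is exactly the one you relegate to the ``main obstacle'' paragraph, and your proposed workaround (per-$T$ sharpening via forest counting) does not by itself carry the day; the structural decomposition of the index set is the essential idea, and it is missing from the proposal. Your Case~2 reduction to Case~1 via a greedily chosen $I'$ with $|T_{I'}|\in[s+\eps s,2s]$ is a reasonable idea, but it inherits the same union-bound problem and additionally needs care that the set of $I'$ you invoke is not itself exponential in number.
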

To see the connection with \cref{lemma:contractvtx}, imagine that $S_i = N(i)$ for some $i \in A$. Then \cref{lemma:helper} says that with high probability over the random contraction $\phi$, that the size of the union of neighborhoods does not decrease by more than $\eps s$, unless the size was already larger than $s$ (which was the minimum vertex cover size already).

It is worth noting that to show \cref{lemma:helper}, na\"{i}vely applying a union bound over all $2^s$ subsets $I$ does not work. This is because for a fixed subset $I$, the failure probability is something like $\exp(-\eps s)$. Instead, we do the following.
\begin{proof}[Proof of \cref{lemma:helper}]
Let $J = [s]$. We will form a partition of $J = J_1 \cup \dots \cup J_t \cup J^*$, with $t \le 10/\eps$, as follows. If there is a subset $J' \subseteq J$ with $|J'| \ge \eps s/10$, $\left|\bigcup_{i \in J'} S_i\right| \le 10s$, then add $J_i = J'$ to the partition, and let $J \gets J \setminus J'$. Let $J^*$ be $J$ at the end of this procedure, and $J_0 = J_1 \cup \dots \cup J_t$.

Let $U_0 = \bigcup_{i \in J_0} S_i$. Let $\cE_0$ be the event that $|\phi(U_0)| \le |U_0| - \eps s/2.$ Note that $|U_0| \le 10st \le 100s/\eps$. Let $n' = Cs\eps^{-3}$, for a sufficiently large constant $C$, and $p = (100s/\eps)/n' \le c\eps^2$, for a small constant $c$. Now we bound the failure probability of $\cE_0$. Consider choosing the values $\phi(u)$ sequentially, for $u \in U_0$. The probability that $\phi(u)$ collides with a previously chosen value is at most $p$. Thus, the probability that $\cE_0$ fails is at most:
\[ \binom{|U_0|}{\eps s/2} p^{\eps s/2} \le (600p\eps^{-2})^{\eps s/2} \le 2^{-\eps s/2}, \]
for sufficiently small choice of $c$. Thus, assume that $\cE_0$ holds for the remainder of the proof.

For a subset $I \subseteq [s]$, let $U_I := \bigcup_{i \in I} S_i$. We break into two cases depending on whether $|U_I| \ge 10s$ of $|U_I| < 10s$. In the former case, we need to bound the probability that $|\phi(U_I)| \le s$. Again, consider choosing $\phi(u)$ for $u \in U_I$ sequentially. The probability that all $\phi(u)$ lie in some subset of $[n']$ of size $s$ is at most
\[ \binom{n'}{s}(s/n')^{10s} \le 4^{-s}. \]
Thus, we can union bound this failure over all subsets $I$ of $[s]$.

In the other case, $|U_I| < 10s$. Let $I^* = I \cap J^*$. We claim that the number of distinct $I^*$ for which this holds is bounded by $\sum_{k=0}^{\eps s/10} \binom{s}{k}$. Indeed, otherwise there would be some $|U_{I^*}| < 10s$ and $|I^*| \ge \eps s/10$, so our algorithm would have partitioned $J^*$ further. Let $U_1 := U_{I^*} \cap U_0$ and $U_2 = U_{I^*} \setminus U_1$. Because $\cE_0$ holds, we know that $|\phi(U_1)| \ge |U_1| - \eps s/2.$ Finally, we iterate over all $u \in U_2$, and bound the probability that more than $\eps s/2$ such $u$ have $\phi(u) = \phi(u')$ for some $u' \in U_0$, or previous $u' \in U_2$. For each $u$, the probability of a collision is at most $p$. Thus, the probability that more than $\eps s/2$ such $u$ collide is at most:
\[ \binom{10s}{\eps s/2} \cdot (2p)^{\eps s/2} \le (c\eps^{-1})^{\eps s/2}. \]
This suffices to union bound over the $\sum_{k=0}^{\eps s/10} \binom{s}{k}$ possible sets $I^*$. The lemma follows if both this event and $\cE_0$ hold, which we have argued holds with high probability.
\end{proof}
We are now ready to establish \cref{lemma:contractvtx}.
\begin{proof}[Proof of \cref{lemma:contractvtx}]
A vertex cover of $G$ can be chosen generically as follows. Choose $A' \subseteq A, B' \subseteq B$, and choose the vertex cover $A' \cup B' \cup \bigcup_{a \in A \setminus A'} N(a) \cup \bigcup_{b \in B \setminus B'} N(b)$. This is a vertex cover because $A \cup B$ is. By taking $S_i = N(i)$ for $i \in A$ (and same for $B$), \cref{lemma:helper} implies that in $G, H$ the vertex cover resulting from $A', B'$ have size off by $\eps s$, unless the total size is much larger than $s$.
\end{proof}
To give our offline algorithm, we consider the same setting as we did in \cref{subsec:mwuoracle}. We have graphs $G_1, \dots, G_t$, all with the same vertex set $L \cup R$, on $n' = \Theta(s\eps^{-3})$ vertices. Additionally, $\mu(G_i) \ge s$ for all $i \in [t]$. Finally, each $G_i$ differs from $G_1$ in at most $\Gamma$ edges. It is simple to use fast matrix multiplication to find degrees of vertices in all $G_i$.
\begin{lemma}
\label{lemma:degrees}
Let $G_1, \dots, G_t$ be bipartite graphs on $n' = \Theta(s\eps^{-3})$ vertices where each $G_i$ differs from $G_1$ in at most $\Gamma$ edges. There is an algorithm that give $S_1, \dots, S_t \subseteq R$, outputs $\deg^{G_i[L,S_i]}(v)$ for all $v \in L$, in total time $\O(\eps^{-O(1)} T(s, s, t) + t\Gamma)$.
\end{lemma}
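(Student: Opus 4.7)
The plan is to compute the degrees in the common graph $G_1$ using a single fast rectangular matrix multiplication, then patch in the $\Gamma$ edges where each $G_i$ disagrees with $G_1$ in time $O(\Gamma)$ per index.

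Concretely, I would form the $n' \times n'$ Boolean adjacency matrix $\mathbf{A}$ of $G_1$, and the $n' \times t$ matrix $\mathbf{S}$ whose $i$-th column is the indicator vector $\one_{S_i} \in \{0,1\}^{R}$ (zero outside $R$). The matrix product $\mathbf{A}\mathbf{S}$, computed over $\Z$, has $(v,i)$-entry equal to $\deg^{G_1[L,S_i]}(v)$ for $v \in L$. Since $n' = \Theta(s\eps^{-3})$, this step runs in time $T(n',n',t) = \O(\eps^{-O(1)} T(s,s,t))$, absorbing $\eps^{-O(1)}$ factors from padding into the rectangular matrix multiplication runtime.

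Next, for each $i \in [t]$ I would explicitly enumerate the symmetric difference $E(G_i) \triangle E(G_1)$, which has size at most $\Gamma$. For each edge $e = (u,v)$ in this symmetric difference with $v \in S_i$, I would increment (if $e \in E(G_i) \setminus E(G_1)$) or decrement (if $e \in E(G_1) \setminus E(G_i)$) the stored value for $\deg^{G_1[L,S_i]}(u)$, thereby converting it into $\deg^{G_i[L,S_i]}(u)$. Doing this for all $i \in [t]$ costs $O(t\Gamma)$ time in total, and for $u \in L$ that are not touched, the stored $G_1$ value already equals the $G_i$ value.

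The only subtle point is how the edit list between $G_i$ and $G_1$ is accessed: in our intended usage, the sequence of graphs $G_1, \dots, G_t$ comes from offline updates, so this list is explicitly available from the update stream at no additional cost, and membership of $v$ in $S_i$ is a single lookup. Summing the two contributions gives total time $\O(\eps^{-O(1)} T(s,s,t) + t\Gamma)$, which matches the claimed bound. There is no real obstacle here; the lemma is essentially a bookkeeping observation that lets us amortize a single batched matrix multiplication across $t$ related degree queries, and the role of the $\eps^{-O(1)}$ factor is purely to account for the fact that the vertex count $n'$ is $\Theta(s\eps^{-3})$ rather than $s$.
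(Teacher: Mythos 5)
Your proof is essentially identical to the paper's: multiply the adjacency matrix of $G_1$ by the $n'\times t$ matrix of indicator vectors to get all $\deg^{G_1[L,S_i]}(v)$ at once, then patch in the at most $\Gamma$ edge differences per index $i$ in $O(t\Gamma)$ total time, with the $\eps^{-O(1)}$ factor absorbing the blowup from $n' = \Theta(s\eps^{-3})$ to $s$. Nothing is missing; the only cosmetic difference is that the paper writes the multiplication cost as $T(n',s,t)$ rather than $T(n',n',t)$, which does not affect the stated bound.
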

\begin{proof}
Multiply the adjacency matrix of $G_1$ by the matrix of indicator vectors of $S_1, \dots, S_t$ to find $\deg^{G_1[L,S_i]}(v)$ for all $i = 1, \dots, t$ and $v \in L$. Then, update these to $\deg^{G_i[L,S_i]}(v)$ by updating $\Gamma$ values per $i \in [t]$. The total time is $\O(T(n', s, t) + t\Gamma)$.
\end{proof}
We are ready to give our offline algorithm for vertex cover.
\begin{proof}[Proof of \cref{thm:mainuppervtx}]
Let $s$ be an $O(1)$-approximation of the minimum vertex cover, and consider time steps when $\mu(G) \in [s, 10s]$. Let $H$ be a random contraction (\cref{def:contraction}) down to $\Theta(s\eps^{-3})$ vertices. Let us first estimate the amortized time needed to find a $(1+\eps)$-approximate minimum vertex cover on $H$. Let $H_i$ be the graph after $(i-1)\eps s$ edge updates, for $i = 1, \dots, t$, for $t = s^x$. Then, $\Gamma \le st = s^{1+x}$. By \cref{lemma:voracleabstract,lemma:degrees} we can find a $(1+\eps)$-approximate vertex cover on $H_1, \dots, H_t$ in total time:
\[ \O(\eps^{-O(1)}(T(s,s,s^x) + s^{1+2x})). \]
Now, we must map these vertex covers in $H_i$ back to $G_i$ via \cref{lemma:contractvtx}. This involves computing $\bigcup_{a \in A_i} N_{H_i}(a)$ for various sets $A_i$. Analogous to \cref{lemma:degrees}, this can be done in time:
\[ \O(T(n,n,s^x) + t\Gamma) = \O(T(n,n,s^x) + s^{1+2x}). \]
Note that we need to use a $n \times n$ matrix in the multiplication, because we must do the multiplication on the adjacency matrix of $G$ itself. Thus, the amortized time over $\eps st$ steps is bounded by:
\[ \eps^{-O(1)} \cdot \O(T(n,n,s^x) + s^{1+2x})/(\eps st) = \O(\eps^{-O(1)}(T(n,n,s^x)s^{-1-x} + s^x)). \]
\cite{GP13} gives an alternate way of computing an approximate vertex cover in amortized $\O(\eps^{-O(1)}s)$ time. The idea is to maintain an approximate vertex cover $A \cup B$, and at most $O(s)$ neighbors of each $a \in A$ or $b \in B$ (maintaining more neighbors cannot decrease the vertex cover below size $O(s)$ obviously). Thus, the total time is $O(s^2)$, which needs to be recomputed every $\eps s$ steps.

Thus, the algorithm of \cite{GP13} establishes \cref{thm:mainuppervtx} when $s < n^{.723}$. Otherwise, we obtain amortized runtime $\O(\eps^{-O(1)}(T(n,n,s^x)s^{-1-x} + s^x))$. For each $s \ge n^{.723}$, we can check that this runtime is $O(\eps^{-O(1)} n^{.723})$ for some choice of $x$, via \cite{Complexity}.
\end{proof}

\subsection{Equivalence of dynamic vertex cover and online matrix-vector}
\label{subsec:reducevertexcover}

\subsubsection{Vertex cover from $\OMv$}

We use the dynamic $\OMv$ algorithm of \cref{lemma:dynomv} to implement the oracle required by \cref{lemma:voracleabstract}.
\begin{lemma}
\label{lemma:implementvoracle}
Let $\cA$ be a dynamic $\OMv$ algorithm with amortized update time $n^{1-\delta}$ and query time $n^{2-\delta}$ against adaptive adversaries.
Then there is a randomized algorithm that supports updates to a dynamic bipartite graph $G$ with vertex set $L \cup R$, and queries of the following form: given a set $S$, compute $(1\pm\eps)$ multiplicative approximate estimates to $\deg^{G[L, S]}(v)$ for all $v \in L$. The algorithm succeeds with high probability with amortized update time $n^{1-\delta}$ and query time $\O(\eps^{-O(1)}n^{2-\delta/2})$.
\end{lemma}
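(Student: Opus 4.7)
The plan is to pass every graph update directly to $\cA$ (giving the claimed amortized update time $n^{1-\delta}$) and, for each query $S$, split the vertices $v \in L$ by a threshold $D := n^{\delta/2}$ into a ``high-degree'' regime handled by sampling and a ``low-degree'' regime handled via $\cA$ together with the bit-partitioning trick behind \cref{claim:degreeone}.

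First, for every $v \in L$ I would draw $k := C \eps^{-2} \log^2 n \cdot \max(1, n/D)$ uniform independent samples $t \in S$ and check whether $(v,t) \in E(G)$ using an edge-existence oracle (maintained in constant amortized time under updates by a standard adjacency hash table). Let $\wt{d}(v)$ denote $|S|/k$ times the number of successes. A Chernoff bound shows that, with high probability uniformly over all $v$, $\wt{d}(v) \in (1\pm\eps/2)\deg^{G[L,S]}(v)$ whenever $\deg^{G[L,S]}(v) \ge D$. Mark each $v$ with $\wt{d}(v) \ge 2D$ as \emph{heavy} and record $\wt d(v)$ as its estimate; the total cost of this phase is $\tilde O(\eps^{-O(1)} n^2/D) = \tilde O(\eps^{-O(1)} n^{2-\delta/2})$.

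For the remaining \emph{light} set $L' \subseteq L$, every $v \in L'$ satisfies $\deg^{G[L,S]}(v) \le 4D$ with high probability, so $|E(G[L',S])| \le O(nD)$. I would find every such edge explicitly. For $j = 1,\dots, \Theta(D \log n)$, draw $T_j \sim_{1/D} S$; with high probability, for every edge $(v,t)$ with $v \in L'$, $t\in S$, there is some $j$ for which $T_j \cap N(v) = \{t\}$. As in \cref{claim:degreeone}, partition each $T_j$ into bit-buckets $T_j^{(1)},\dots,T_j^{(\lceil\log n\rceil)}$ according to the binary expansion of the labels of its elements, invoke $\cA.\textsc{Query}(1_{T_j^{(k)}})$ for each $k$, and read off, for every $v$ with a unique witness, which bits of its neighbour's label are $1$; this recovers the neighbour exactly. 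Once all edges of $G[L',S]$ have been found, every light $v$ has its \emph{exact} degree $\deg^{G[L,S]}(v)$, which is trivially a $(1\pm\eps)$-approximation. The number of $\OMv$ queries is $\tilde O(D)$, so this phase takes amortized $\tilde O(D \cdot n^{2-\delta}) = \tilde O(n^{2-\delta/2})$ time; combined with the sampling phase, the total query time is $\tilde O(\eps^{-O(1)} n^{2-\delta/2})$, as required.

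The main subtlety I anticipate is the ``boundary'' between the two regimes: a vertex whose true degree lies near $D$ could be misclassified. I would handle this by using the asymmetric cutoffs above ($D$ for the Chernoff guarantee and $2D$ for the marking rule) so that, with high probability, every vertex marked heavy really has a good multiplicative estimate, while every vertex marked light indeed has degree at most $4D$, ensuring the counting bound $|E(G[L',S])| \le O(nD)$ and thus the runtime analysis of the $\OMv$ phase. The remaining checks (Chernoff concentration, collision probability for the random subsamples $T_j$, and the $\log n$-overhead from binary search) are standard and only contribute to the $\eps^{-O(1)}$ and polylogarithmic factors.
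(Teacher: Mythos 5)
Your proposal is correct and takes essentially the same approach as the paper's proof: pass updates to $\cA$, estimate degrees above the threshold $D = n^{\delta/2}$ by direct random sampling, and recover the edges incident to low-degree vertices exactly via $\O(D)$ calls to $\cA$ combined with the bit-partitioning trick of \cref{claim:degreeone}. Your extra care with the asymmetric cutoffs for the heavy/light classification (and the verification of candidate edges by direct lookup) fills in details that the paper's proof treats implicitly, but the substance is the same.
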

\begin{proof}
To handle updates to $G$, pass the updates to the dynamic $\OMv$ algorithm $\cA$. This is amortized $n^{1-\delta}$ time by assumption. Now we implement the query.
Let $D = n^{\delta/2}$. For a vertex $v$, if $\deg^{G[L, S]}(v) \ge D$, then taking $\O(\eps^{-O(1)}n/D)$ random samples $s \sim S$ and checking whether $(v, s) \in E$ gives a $(1+\eps)$-approximate estimate of $\deg^{G[L, S]}(v)$. The total time of this step is $\O(\eps^{-O(1)}n^2/D)$.

It remains the estimate the degree of small-degree vertices, i.e., $v$ with $\deg^{G[L, S]}(v) \le D$. In this case, we compute the degree exactly. We use the same trick as in the proof of \cref{lemma:lowdegree,claim:degreeone}. Let $S_1, \dots, S_t$ for $t = \O(D)$ be random subsets of $S$, with $S_i \sim_{1/D} S$.
Then using the binary representation trick of \cref{claim:degreeone}, we can find all adjacent between $v$ and $S$ for all $v$ by calling $\O(D)$ Boolean matrix-vector multiplications (which are queries to $\cA$).
Thus, the total time of this step is amortized $\O(n^{2-\delta}D)$. For the choice $D = n^{\delta/2}$, both steps are time $\O(\eps^{-O(1)}n^{2-\delta/2})$ as desired.
\end{proof}
Now, we can establish the ``if'' direction of \cref{thm:equivvtx}.
\begin{proof}[Proof of if direction of \cref{thm:equivvtx}]
For each value of $s = 2^k$, consider the time steps when the minimum vertex cover size is in $[s, 10s]$ (this is doable by using a constant factor approximate dynamic vertex cover algorithm with $\O(1)$ update time).
If $s \le n^{2-\delta/4}$, then using the algorithm of \cite{GP13} gives amortized update time $\O(\eps^{-O(1)}s) \le \O(\eps^{-O(1)}n^{2-\delta/4})$. Otherwise, the algorithm can wait for $\eps s$ update before a recomputation of the vertex cover. Then, use \cref{cor:voracle}, \cref{lemma:voracleabstract}, and \cref{lemma:implementvoracle}, to compute a $(1+\eps)$-approximate fractional vertex cover in total time $\O(\eps^{-O(1)}n^{2-\delta/2})$ time per query. A fractional vertex cover can be rounded to integral in $O(n)$ time. Thus, the amortized update time is $\O(\eps^{-O(1)}n^{2-\delta/2})/(\eps s) \le \O(\eps^{-O(1)}n^{2-\delta/4})$, as desired.
\end{proof}
Combining this reduction (the if direction of \cref{thm:equivvtx}) with \cref{cor:dynomv} shows \cref{thm:onlineupperv}.

\subsubsection{$\OMv$ from vertex cover}
We start by recalling that $\OMv$ is equivalent to $\uMv$.
\begin{lemma}[\!\!{\cite[Lemma 2.11]{HKNS15}}]
\label{lemma:umv}
If the $\OMv$ conjecture is true, then there is no algorithm that solves the following in $n^{3-\delta}$ time. Given a bipartite graph $G = (V = L \cup R, E)$, and $n$ online queries of the form $A \subseteq L, B \subseteq R$, determine whether $G[A, B]$ has an edge.
\end{lemma}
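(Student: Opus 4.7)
The plan is to invoke the standard HKNS15 reduction establishing equivalence between the OMv conjecture and the OuMv (online $u^\top M v$) conjecture, where the induced-subgraph edge-detection problem is the graph-theoretic reformulation of OuMv. I will argue in the contrapositive: assume there is an algorithm $\mathcal{A}$ that handles $n$ online edge-detection queries on a preprocessed bipartite graph in $O(n^{3-\delta})$ total time for some $\delta > 0$, and derive a subcubic OMv algorithm, contradicting \cref{conj:omv}.

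Given an OMv instance $(M, v_1, \dots, v_n)$ with $M \in \{0,1\}^{n \times n}$, view $M$ as the biadjacency matrix of a bipartite graph $G = (L \cup R, E)$ with $|L| = |R| = n$, and preprocess $G$ once using $\mathcal{A}$. For each OMv query $v_i$ with support $B_i \subseteq R$, the bit $(Mv_i)_j$ is exactly $\mathbf{1}[G[\{j\}, B_i]\text{ has an edge}]$, so the output vector $Mv_i$ can be read off from $n$ online edge-detection queries, one per row $j \in L$. Stringing these together over the $n$ OMv rounds produces a stream of $n^2$ online OuMv queries against a single preprocessing of $G$.

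The main obstacle is to get the aggregate time to be truly subcubic in $n$, since the straightforward amortization (running $\mathcal{A}$ in $n$ separate batches of $n$ queries) only yields $O(n^{4-\delta})$, which does not contradict the OMv conjecture. HKNS15 resolves this by a parameter-rebalancing step: one instantiates $\mathcal{A}$ on an $m \times m$ submatrix for a carefully chosen $m < n$, decomposes $M$ into $m \times m$ sub-blocks, and simulates each OMv query via OuMv queries on the sub-blocks in a way that keeps the total query count low enough that $\mathcal{A}$'s $O(m^{3-\delta})$ bound aggregates to $O(n^{3-\delta'})$ for some $\delta' > 0$. The bit-extraction step above is immediate, so the technical content is exactly this balancing, which is worked out in HKNS15 Lemma 2.11. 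Plugging the resulting $O(n^{3-\delta'})$ OMv algorithm into \cref{conj:omv} gives the desired contradiction.
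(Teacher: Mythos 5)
You are right that the paper provides no proof here: the lemma is a bare citation to \cite[Lemma 2.11]{HKNS15}, and deferring to that result after translating ``does $G[A, B]$ have an edge?'' into the Boolean query $u^\top M v$ for indicators $u, v$ of $A, B$ is exactly what the paper does. The observation that $(Mv_i)_j$ equals the singleton edge-detection query on $G[\{j\}, B_i]$, and that this naively produces $n^2$ queries with no subcubic payoff, is also correct.

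Where the sketch goes wrong is the claim that ``the bit-extraction step above is immediate, so the technical content is exactly this balancing.'' Singleton row-queries do not become subcubic after blocking. If you partition $M$ into $m \times m$ blocks and probe block-by-block for each (row, OMv-query) pair, each block absorbs up to $nm$ singleton queries; running the assumed $m$-vertex algorithm in batches of $m$ queries on each of the $(n/m)^2$ blocks costs $\Theta\left((n/m)^2 \cdot n \cdot m^{3-\delta}\right) = \Theta(n^3 m^{1-\delta})$, which is never subcubic for $m > 1$ and $\delta < 1$. The move the reduction has to exploit is the freedom to query arbitrary $A$: within each row block $R_i$ one finds the ones of $(Mv)[R_i]$ by group-testing (binary search over subsets of the live rows), and because a row leaves the search set once its one is found, the ones discovered per $(i,v)$ pair sum to at most $m$, so the total number of OuMv queries is $\O(n^2 + n^3/m^2)$; at $m = \sqrt{n}$ this is $\O(n^2)$, and the block-level amortization then closes at $\O(n^{3-\delta/2})$. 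In short, the $m \times m$ decomposition is the right skeleton, but the load-bearing step is replacing the singleton probes with group-testing batch queries, not the parameter balancing by itself --- that is the content of Lemma 2.11 of \cite{HKNS15} that you would need to reconstruct to make the sketch go through.
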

Here, we have reinterpreted $\uMv$ as its graphical version: determine whether $G[A, B]$ is empty.
\begin{proof}[Proof of only if direction of \cref{thm:equivvtx}]
Assume there is a vertex cover data structure $\cA$ with amortized update time $\cT := O(n^{1-c}\eps^{-C})$. Consider a $\uMv$ instance with graph $G$. Initialize $\cA$ on $G$ by adding $O(n^2)$ edges. When given $A, B$, apply the reduction of \cref{lemma:inducedvertex} to find an additive $2\eps n$ vertex cover $S$ of $G[A, B]$ in $O(n)$ updates. If $|S| > 2\eps n$, we know that $G[A, B]$ has an edge. Otherwise, check all pairs of vertices in $A \times B$ with an endpoint in $S$ in time $O(\eps n^2)$. Thus, the total preprocessing and query time over $n$ queries is: $O(n^2 \cT + \eps n^3)$. Along with \cref{lemma:umv}, this contradicts the $\OMv$ conjecture when $\eps = n^{-\delta}$ for sufficiently small $\delta$.
\end{proof}

\section*{Acknowledgments}
We thank Arun Jambulapati, Rasmus Kyng, Richard Peng, and Aaron Sidford for helpful discussions, and Ryan Alweiss for discussion related to \cref{lemma:helper}. This work is partially supported by NSF DMS-1926686.

\bibliographystyle{alpha}
\bibliography{refs}

\end{document}